\newif\ifconf
\let\subparagraph\paragraph
\colorlet{lightgreen}{green!10!}
\crefname{line}{line}{lines} 
\let\c@author\relax\makeatother
\newcommand{\Ar}{_\mathrm{A}}
\newcommand{\CR}{_\mathrm{CR}}
\newcommand{\Gfunc}{G\colon\binalph^d \to \binalph^n}
\newcommand{\acc}{\mathrm{acc}}
\newcommand{\base}{_\mathrm{base}}
\newcommand{\il}{\mathbin{\parallel}}  
\newcommand{\intm}{_\mathrm{int}}
\newclass{\RPTIME}{RPTIME}
\title{Pseudorandom Generators for Sliding-Window Algorithms}
\newcommand{\myname}{Augusto Modanese}
\newcommand{\mynameabbr}{A.~Modanese}
\newcommand{\myaffil}{Aalto University\ifconf, Espoo, Finland\fi}
\newcommand{\myemail}{augusto.modanese@aalto.fi}
\newcommand{\myack}{I would like to thank Thomas Worsch for the helpful
  discussions and feedback.
  I also thank Ted Pyne for pointing out some constructions for pseudorandom
  generators that slightly improved the results.}
\author{\myname}{\myaffil}{\myemail}{}{}
\authorrunning{\mynameabbr}
\keywords{Derandomization, pseudorandomness, space-bounded computation}
\author{\myname \\ \myaffil \\ \href{mailto:\myemail}{\myemail}}
\date{}
\begin{document}

\maketitle

\begin{abstract}
  A sliding-window algorithm of window size $t$ is an algorithm whose current
  operation depends solely on the last $t$ symbols read.
  We construct pseudorandom generators (PRGs) for low-space randomized
  sliding-window algorithms that have access to a binary randomness source.
  More specifically, we lift these algorithms to the non-uniform setting of
  branching programs and study them as a subclass thereof that we call
  \emph{sliding-window branching programs} (SWBPs), accordingly.
  For general SWBPs, given a base PRG $G\base$ with seed length $d\base$ that
  $\eps\base$-fools width-$w$, length-$t$ (general) branching programs, we give
  two PRG constructions for fooling any same-width SWBP of length $n$ and window
  size $t$ (where we assume $w \ge n$).
  The first uses an additional $d\base + O(\log(n/t) \log(1/\eps\base))$ random
  bits, whereas the second has a seed length of $O((d\base + \log\log(n/t) +
  \log(1/\eps\base)) \log(d\base + \log(1/\eps\base)))$.
  Both PRGs incur only a $(n/2t)^{O(1)}$ multiplicative loss in the error
  parameter.
  We also give a hitting set generator (HSG) that achieves a slightly better
  seed length for regimes where $\log w$ is very close to $\log(n/\eps)$ (e.g.,
  $\log w = (\log(n/\eps))^{1.01}$).

  As an application, we show how to decide the language of a sublinear-time
  probabilistic cellular automaton using small space.
  More specifically, these results target the model of \emph{PACAs}, which are
  probabilistic cellular automata that accept if and only if all cells are
  simultaneously accepting.
  For (sublinear) $T(n) = \Omega(\log n)^{1.01}$, we prove that every language
  accepted by a $T$-time one-sided error PACA (the PACA equivalent of $\RP$) can
  be decided using only $O(T)$ space.
  Meanwhile, forgoing the previous requirement on $T$, we show the same holds
  for $T$-time two-sided error PACA (the PACA equivalent of $\BPP$) if we use
  $\tilde{O}(T) + O(\log n)$ space instead (where the $\tilde{O}$ notation hides
  only $\polylog(T)$ factors).
\end{abstract}


\section{Introduction}

The processing of long streams of data using as little memory resources as
possible is a central computational paradigm of high relevance in the modern
age.
Through its presentation as \emph{streaming algorithms}, the topic has been the
subject of intense study within the realm of theoretical computer science.
When dealing with data that ages quickly, however, it appears a more accurate
representation of the process may be found in the subclass of
\emph{sliding-window algorithms}.
These are algorithms that maintain a window (hence the name) of only the last
few symbols in their stream and model processes where only the most recent data
is considered accurate or of relevance.
Natural examples of this strategy may be found in weather forecasting and social
media analysis.

A defining feature of sliding-window algorithms is
\emph{self-synchronization}:
If there are two machines executing the same algorithm with a window of size $t$
on the same stream and they receive faulty, inconsistent data at some point in
the process, then the machines will return to having identical states (i.e.,
synchronize) after having read at most $t$ identical symbols.
It turns out this behavior is not only potentially desirable in practice but,
from a theoretical point of view, it suggests itself as the defining
characteristic for these kind of algorithms.
In this work, we shall adopt this standpoint and consider \emph{low-space}
algorithms with this property that are \emph{augmented with access to a (binary)
randomness source}.
We will attempt to answer the following central question:
How much (if any) randomness is required by the class of algorithms with this
property in order to perform the tasks required of them?

\subparagraph{Randomized sliding-window algorithms.}
Intuitively, we should obtain a randomized version of sliding-window algorithms
by having the control unit operate according to a randomized process.
Of course, this immediately raises the question of how one should adapt the
sliding-window requirement to this new model.
As the defining characteristic (in the deterministic case) is that the machine's
current operation \emph{depends only on the last $t$ symbols read}, it is
plausible to require the algorithm's behavior to be dependent not only on the
last $t$ symbols \emph{but also on the random choices made while reading said
bits}.
Since a true randomized algorithm (in our setting) is expected to use at least 
one bit of randomness for every new symbol it reads, it appears as if the
algorithm's window would in its greater part (or at least as much so) contain
bits originating from the randomness source rather than the input stream.
Actually, we will altogether abandon the latter when generalizing the
sliding-window property to randomized algorithms; that is, we will phrase the
property exclusively in terms of the randomness source and, using
non-uniformity, simply forgo any mention of the input stream.
This approach has a couple of advantages:
\begin{enumerate}
  \item It not only simplifies the presentation (since then the algorithm passes
  its window over a single input source instead of two) but also renders the
  model more amenable to the usual complexity-theoretical methods for analyzing
  low-space algorithms.
  \item The resulting class of machines is actually \emph{stronger} than the
  model where the sliding-window property applies to both the data stream and
  the randomness source.
  Hence, our results not only hold in the latter case but also in the more
  general one.
\end{enumerate}

As mentioned in passing above, we will rely on classical methods from complexity
theory to analyze low-space algorithms in the context of derandomization, namely
the non-uniform model of \emph{branching programs}.

\subsection{Branching Programs}

Branching programs can be seen as a non-uniform variant of deterministic finite
automata and have found broad application in the derandomization of low-space
algorithms.

\begin{definition}
  An \emph{(ordered, read-once) branching program} $P$ of length $n$ and width
  $w$ is a set of states $Q$, $\abs{Q} = w$, of which a subset $Q_\acc \subseteq
  Q$ are \emph{accepting states}, along with $n$ transition functions
  $P_1,\dots,P_n\colon Q \times \binalph \to Q$.
  The program $P$ starts its operation in an \emph{initial state} $q_0 \in Q$
  and then processes its input $x = x_1 \cdots x_n \in \binalph^n$ from left to
  right while updating its state in step $i$ according to $P_i$.
  That is, if $P$ is in state $q_{i-1}$ after having read $i-1$ bits of its
  input, then it next reads $x_i$ and changes its current state to
  $P_i(q_{i-1},x_i)$.
  We say $P$ \emph{accepts} $x$ if $q_n \in Q_\acc$, where $q_n$ is state of $P$
  after processing the final input symbol $x_n$.
\end{definition}

Although $P$ reuses the same set of states throughout its processing of $x$, it
is also natural to view $P$ as a directed acyclic graph with $n+1$
\emph{layers}, where the $i$-th layer $Q_i$ contains a node for every state of
$Q$ that is reachable in (exactly) $i$ steps of $P$.
(We shall refer to $Q_i$ as a set of nodes and as a subset of $Q$
interchangeably.)
A node $v \in Q_{i-1}$ is then connected to the nodes $u_0 = P_i(v,0)$ and $u_1
= P_i(v,1)$ of $Q_i$ and we label the respective edges with $0$ and $1$.
(If $u_0 = u_1$, then we have a double edge from $v$ to $u_0$.)

We write $P(x)$ for the indicator function $\binalph^n \to \binalph$ of $P$
accepting $x$ (i.e., $P(x) = 1$ if $P$ accepts $x$, or $P(x) = 0$ otherwise).
For $y \in \binalph^\ast$ with $\abs{y} \le n$, we write $P_0(y)$ to indicate
the \emph{state} of $P$ after having read $y$ when starting its operation in its
initial state $q_0$.
Letting $\lambda$ denote the empty word, this can be defined recursively by
setting $P_0(\lambda) = q_0$ and $P_0(y'z) = P_{\abs{y'}+1}(P_0(y'),z)$ for $y =
y'z$ and $z \in \binalph$.
We extend the domain of $P_i$ from $Q \times \binalph$ to $Q \times
\binalph^{\le n-i}$ in the natural way as follows: $P_i(q,\lambda) = q$ and
$P_i(q,yy') = P_{i+1}(P_i(q,y),y')$ for $yy' \in \binalph^{\le n-i}$ where $y
\in \binalph$.

\emph{Unanimity programs} \cite{bogdanov22_hitting_ccc} are a generalization of
branching programs that accept if and only if every state during its computation
is accepting (instead of only its final one).
Formally, this means we mark a subset $Q_\acc^i \subseteq Q_i$ of every $i$-th
layer as accepting and say that a unanimity program $U$ \emph{accepts} $x \in
\binalph^n$ if and only if $U_0(y) \in Q_\acc^{\abs{y}}$ for every prefix $y
\neq \lambda$ of $x$.
(Note this is indeed a generalization since setting $Q_\acc^i = Q$ for every $i
< n$ yields a standard branching program.)
It is easy to see that a width-$w$ unanimity program can be simulated by a
width-$(w+1)$ standard branching program (e.g., by adding a \enquote{fail}
state to indicate the unanimity program did not accept at some point).

\subparagraph{Pseudorandom generators.}
The key concept connecting branching programs and the topic of derandomization
is that of \emph{pseudorandom generators}.
These are general-purpose functions that take a (conceptually speaking) small
subset of inputs and \enquote{scatter} them across their range in such a way
that \enquote{appears random} to the class of procedures they intend to fool.
In the definition below, $U_n$ denotes a random variable distributed uniformly
over $\binalph^n$.

\begin{definition}%
  \label{def_prg}%
  Let $n \in \N_+$ and $\eps > 0$, and let $\mathcal{F}$ be a class of functions
  $f\colon \binalph^n \to \binalph$.
  We say a function $G\colon \binalph^d \to \binalph^n$ is a \emph{pseudorandom
    generator} (PRG) that \emph{$\eps$-fools} $\mathcal{F}$ if the following
  holds for every $f \in \mathcal{F}$:
  \[
    \abs*{\Pr[f(G(U_d)) = 1] - \Pr[f(U_n) = 1]} \le \eps.
  \]
  In addition, we say $G$ is \emph{explicit} if there is a (uniform)
  linear-space algorithm which, on input $n$ (encoded in binary) and $s \in
  \binalph^d$, outputs $G(s)$.
\end{definition}

The connection between the above and low-space algorithms is as follows:
Suppose we have an $s$-space algorithm $A$ that takes as input a string $x$
along with a stream $r \in \binalph^n$ of random coin tosses.
If we take a PRG $G$ that $\eps$-fools branching programs of width $2^s$ and
length $n$, then $A(x,G(U_d))$ must be $\eps$-close to $A(x,U_n)$.
To see why, notice that we can convert $A(x,\cdot)$ into a branching program
$P_x$ by taking as state set for $P_x$ the configuration space of $A$ and
non-uniformly hardcoding the bits of $x$ read by $A$ (into the transition
function of $P_x$).
The resulting program $P_x$ then takes as input the random stream of $A$ (which,
despite being perhaps counter-intuitive, yields the desired reduction).

One of the ultimate goals behind the study of PRGs for branching programs is the
resolution of the question of whether $\L$ equals $\BPL$.
Non-constructively, it can be shown that there exists a PRG with \emph{optimal}
seed length $O(\log(w/\eps))$ that $\eps$-fools any width-$w$ branching program
of arbitrary length.
If we could show there is a PRG with these parameters that is \emph{explicit},
then this would imply $\L = \BPL$.
The currently best known derandomization of $\L$ (ignoring sublogarithmic
factors) uses $O(\log n)^{3/2}$ space \cite{saks99_BPHSpace_jcss}.
Our results are interesting in this context as we obtain PRGs having optimal or
near-optimal (in the sense of being optimal up to sublogarithmic factors) seed
length in the special case of sliding-window algorithms.

\subparagraph{Sliding-window branching programs.}
In this work, our goal is to construct PRGs against the restricted class of
unanimity programs satisfying the sliding-window property.
As previously discussed, the defining feature is self-synchronization.

\begin{definition}
  A \emph{sliding window branching program} (SWBP) is a unanimity program $S$
  that satisfies the following property:
  There is a number $t \in \N_+$, called the \emph{window size} of $S$, such
  that, for every $i \le n-t$, every $y \in \binalph^t$, and every pair of
  states $q,q' \in Q_i$, we have that $S_i(q,y)  = S_i(q',y)$.
\end{definition}

In other words, the current state of $S$ depends exclusively on the last $t$
bits read.
From this perspective, it is not hard to see that a generalization to unanimity
programs is actually necessary for the sliding-window property to be
interesting.
Indeed, if an SWBP $S$ is a standard branching program (i.e., not just a
unanimity program), then its decision only depends on the last $t$ bits of its
input.

The relation between SWBPs and sliding-window algorithms is as the one described
for general branching programs and low-space algorithms following
\cref{def_prg}.
We refer the reader to \cref{sec_intro_related_work} further below for the
preexisting literature on sliding-window algorithms.

For the sake of motivation, let us give an example of a very natural type of
randomized algorithm having this property.
(We present it in a rather abstract manner since it encompasses a whole family
of algorithms based on the this scheme.)
Suppose we are given some input $x$ and can perform a low-space randomized
procedure $T$ that tests some property of $x$ (perhaps conditioned on some
auxiliary input).
Then there is a natural sliding-window algorithm $A = A_T$ based on $T$ that
simply executes $T$ on $x$ a given number of times $m = m(\abs{x})$ (using
independent coin tosses for each execution) and accepts if and only if every
every execution of $T$ succeeds.
The sliding-window property comes from the fact that each execution of $T$ is
independent from another and $A$ is only interested in whether $T$ returns a
positive answer (i.e., $A$ does not adapt its next test to previous test
results).

\subsection{Our Results}

Our central result is the construction of optimal or near-optimal PRGs for the
SWBP model.
As an application, we derandomize the languages accepted by a certain model of
sublinear-time probabilistic cellular automata.
In addition, we give a structural characterization of SWBPs, which we address
next.

\subsubsection{Structural Characterization of SWBPs}
\label{sec_intro_swbp_structure}

Let $n \in \N_0$.
Recall the \emph{de Bruijn graph} of dimension $n$ is the directed graph $B_n =
(V_n,E_n)$ with vertex set $V_n = \binalph^n$ and edge set
\[
  E_n = \{ (xw,wy) \mid \text{$w \in \binalph^{n-1}$ and $x,y \in \binalph$} \}.
\]
Similarly, the \emph{prefix tree} of dimension $n$ is the graph $T_n =
(V_n',E_n')$ with $V_n' = \binalph^{\le n}$ and
\[
  E_n' = \{ (w,wx) \mid \text{$w \in \binalph^{\le n-1}$
    and $x \in \binalph$} \}.
\]

As our first contribution, we observe that, for every fixed window size $t$ and
input length $n$, there is a \enquote{prototypical} SWBP $\Pi$ whose topology is
such that its first $t$ layers are isomorphic to the prefix tree $T_t$ and its
subsequent layers are connected according to (the labeled version of) the de
Bruijn graph $B_t$.
Thus, every SWBP $S$ of window size $t$ can be obtained from this prototypical
SWBP $\Pi$ by merging nodes in the same layer.
(Note this process may also incur merging nodes in subsequent layers as well so
as to ensure that, for every $i$, $S_i$ is indeed a function.)

\begin{theorem}[restate=restatethmCharSWBP,name=]%
  \label{thm_char_swbp}%
  A unanimity program $S$ is a SWBP of window size $t$ if and only if there are
  functions $\alpha_0,\dots,\alpha_n\colon \binalph^{k_i} \to Q$ with
  \[
    k_i = \begin{cases}
      i, & i<t; \\
      t, & i \ge t
    \end{cases}
  \]
  and such that the following are true for every $x,y \in \binalph$ and
  $w \in \binalph^{k_i-1}$:
  \begin{enumerate}
    \item For every $i < t$, $S_i(\alpha_i(w),y) = \alpha_{i+1}(wy)$.
    \item For every $t \le i < n$, $S_i(\alpha_i(xw),y) = \alpha_{i+1}(wy)$.
  \end{enumerate}
\end{theorem}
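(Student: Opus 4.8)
The plan is to prove both directions by a direct translation between the sliding-window property and the transition structure encoded by the $\alpha_i$; the marked accepting sets $Q_\acc^i$ play no role in the definition of a SWBP or in conditions~1--2, so we ignore them. The guiding idea is that $\alpha_i(w)$ should \emph{name} each state of $S$ at layer $i$ by the window of input bits that reaches it, so that conditions~1 and~2 merely record how such a name changes upon reading one more bit: by \emph{growing} the window while $i<t$, and by \emph{sliding} it once $i\ge t$. Throughout I write $S_0(z)$ for the state of $S$ after reading $z\in\binalph^{\le n}$ and $\mathrm{suf}_\ell(z)$ for the length-$\ell$ suffix of $z$, and I use that $Q_i=\{S_0(z)\mid z\in\binalph^i\}$.

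For the forward direction, supposing $S$ is a SWBP of window size $t$, I would define $\alpha_i\colon\binalph^{k_i}\to Q$ by letting $\alpha_i(w)$ be the common value of $S_0(z)$ over all $z\in\binalph^i$ with $\mathrm{suf}_{k_i}(z)=w$. For $i<t$ this $z$ is unique (namely $z=w$, so $\alpha_i(w)=S_0(w)$); for $i\ge t$ there is always such a $z$ (e.g.\ $z=0^{i-t}w$, since every layer is non-empty), and the value is well-defined precisely by the sliding-window property applied at layer $i-t\le n-t$. By construction $\alpha_i$ maps onto $Q_i$, which matches the \enquote{merging nodes} viewpoint. Conditions~1 and~2 then follow by picking a witness $z$ for $\alpha_i(w)$ (resp.\ $\alpha_i(xw)$) and reading one more bit $y$: then $S_i(\alpha_i(w),y)=S_0(zy)$, and the length-$k_{i+1}$ suffix of $zy$ is $wy$ --- because $z=w$ and $|wy|=i+1=k_{i+1}$ when $i<t$, and because the last $t$ bits of $zy$ are the last $t-1$ bits of $z$ (namely $w$) followed by $y$ when $i\ge t$ --- whence $S_0(zy)=\alpha_{i+1}(wy)$. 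This step is a routine unfolding of the definitions.

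For the converse, I would assume functions $\alpha_i$ satisfying conditions~1--2 exist, and additionally that $\alpha_0(\lambda)=q_0$ --- this is implicit in the statement (it is forced once one requires the $\alpha_i$ to be onto the $Q_i$, as in the merging-nodes viewpoint) and is the only fact beyond conditions~1--2 that the argument uses. The key claim is that $S_0(z)=\alpha_m(\mathrm{suf}_{k_m}(z))$ for every $m\le n$ and $z\in\binalph^m$, which I would prove by induction on $m$: the base case $m=0$ is exactly $\alpha_0(\lambda)=q_0$, and for $z=z'y$ with $|z'|=m$ one has $S_0(z'y)=S_m(S_0(z'),y)=S_m(\alpha_m(\mathrm{suf}_{k_m}(z')),y)$ by the hypothesis, which equals $\alpha_{m+1}(\mathrm{suf}_{k_{m+1}}(z'y))$ by condition~1 when $m<t$ (as then $\mathrm{suf}_{k_m}(z')=z'$) and by condition~2 when $m\ge t$ (writing $\mathrm{suf}_t(z')=xw$ with $x\in\binalph$). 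Granting the claim, for any $i\le n-t$, any $y\in\binalph^t$, and any $q=S_0(z)\in Q_i$, the state $S_i(q,y)$ equals $S_0(zy)=\alpha_{i+t}(\mathrm{suf}_t(zy))=\alpha_{i+t}(y)$, which does not depend on $z$; hence $S_i(q,y)=S_i(q',y)$ for all $q,q'\in Q_i$, so $S$ is a SWBP of window size $t$.

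I expect no genuine obstacle, as the whole argument is bookkeeping, but two spots will need care. First, the seam at layer $t$ between the prefix-tree regime ($k_i=i$) and the de~Bruijn regime ($k_i=t$): one must check that the definitions of $\alpha_{t-1}$ and $\alpha_t$ together with conditions~1 and~2 dovetail there, and that the suffix identities used above hold on both sides of $i=t$. Second, one must keep the indexing of the (extended) transition functions $S_i$ consistent with that of the layer sets $Q_i$ --- in particular that $Q_i\neq\emptyset$ for every $i\le n$, so the \enquote{common value} defining $\alpha_i$ in the forward direction ranges over a non-empty set, and that $\alpha_0(\lambda)=q_0$ is genuinely available in the converse.
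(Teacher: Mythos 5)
Your proof is correct and takes essentially the same route as the paper's: your direct definition of $\alpha_i(w)$ as the state reached by any length-$i$ input whose suffix is $w$ is just a closed form of the paper's recursive definition (with the same use of the sliding-window property for well-definedness), and your converse --- proving $S_0(z)=\alpha_{|z|}(\mathrm{suf}_{k_{|z|}}(z))$ by induction and then sliding the window --- matches the paper's argument. Your explicit addition of the hypothesis $\alpha_0(\lambda)=q_0$ is a fair and indeed necessary reading of the statement, since the paper's own converse silently relies on it when asserting that every reachable state lies in the image of $\alpha_i$.
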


From this characterization it immediately follows that, as long as $S$ does not
have redundant states, we have $w \le 2^t$.

\subsubsection{PRGs for SWBPs}
\label{sec_intro_prgs_swbp}

We give two constructions for a PRG that fools SWBPs, where the one or the other
might be preferable depending on the regime of parameters at hand.

\begin{theorem}[restate=restatethmPRGGenSWBP,name=]%
  \label{thm_prg_gen_swbp}%
  Let $n,w,t \in \N_+$ with $n \le w$ and $\eps > 0$ be given, and let
  $G\base\colon \binalph^{d\base} \to \binalph^t$ be a PRG that
  $\eps\base$-fools width-$w$, length-$t$ unanimity programs.
  Then the following holds:
  \begin{enumerate}
    \item There is an explicit PRG $\Gfunc$ with seed length
    \[
      d = 2d\base + O(\log(n/t)\log(1/\eps\base))
    \]
    that $\eps_G$-fools any width-$w$, length-$n$ SWBP of window size $t$, where
    $\eps_G = \eps\base \cdot (n/t)^{O(1)}$.
    \item There is an explicit PRG $\Gfunc$ with seed length
    \[
      d' = O(d\base + \log(n/t) + (\log(1/\eps\base))^{3/2})
    \]
    that $\eps_G$-fools any width-$w$, length-$n$ SWBP of window size $t$, where
    $\eps_G = O(\eps\base \cdot n / t)$.
    \item There is an explicit PRG $\Gfunc$ with seed length
    \[
      d' = O((d\base + \log\log(n/t) + \log(1/\eps\base))
        \log(d\base + \log(1/\eps\base)))
    \]
    that $\eps_G$-fools any width-$w$, length-$n$ SWBP of window size $t$, where
    $\eps_G = O(\eps\base \cdot n / t)$.
  \end{enumerate}
\end{theorem}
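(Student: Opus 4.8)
The plan is to use the structural characterization of \cref{thm_char_swbp} to reduce the task to fooling a one-dimensional, ``block-structured'' branching program, and then to instantiate the resulting meta-construction with three different off-the-shelf pseudorandomness primitives, one per item.

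First I would set up the reduction. Pad $n$ up to a multiple of $t$ (this costs an additive $O(1)$ in the error and at most $t$ in the length) and carve the input into $\ell := n/t$ consecutive \emph{blocks} of length $t$, say $y^{(1)},\dots,y^{(\ell)}$. By \cref{thm_char_swbp}, from layer $t$ onwards the state of $S$ is a function $\alpha_i$ of the last $t$ bits alone, so self-synchronization gives: the state at the end of block $j$ depends only on block $j$, and the unanimity checks that $S$ performs \emph{inside} block $j$ depend only on blocks $j-1$ and $j$. Hence there are width-$w$, window-$t$ SWBPs $D_1,\dots,D_\ell$, each on $2t$ bits and each computing the conjunction of the layer-checks of $S$ inside block $j$, such that $S(x)=\bigwedge_{j=1}^{\ell} D_j(y^{(j-1)},y^{(j)})$ (with $D_1$ ignoring its first argument). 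Thus it suffices to produce a pseudorandom sequence of blocks that simultaneously fools all of the $D_j$ in a correlated fashion; equivalently, one must fool the ordered read-once branching program on the $\ell$-symbol alphabet of blocks that scans $y^{(1)},\dots,y^{(\ell)}$ while maintaining (i) whether every $D_{j'}$ with $j'\le j$ has accepted, and (ii) the state that the ``prefix part'' of $D_{j+1}$ enters after reading $y^{(j)}$. This auxiliary program has width $O(w)$ and length $\ell$, and its transition at step $j$ is obtained by applying $G\base$ to a fresh $d\base$-bit seed and then running $D_j$.

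For item~1 I would build the block generator recursively: a balanced recursion of depth $\log(n/t)$ where, at each level, the number of blocks handled doubles and one only pays the cost of reconciling a single \emph{seam} between the two halves. The base case (one or two invocations of $G\base$ — this is the source of the $2d\base$) produces a pseudorandom pair of adjacent blocks together with their straddle checks, i.e., fools a width-$w$, window-$t$ SWBP of length $O(t)$. The gluing at each seam is an INW/expander-step-style hash of degree $\mathrm{poly}(1/\eps\base)$, costing $O(\log(1/\eps\base))$ fresh bits and inflating the error by a multiplicative $O(1)$; after $\log(n/t)$ levels this yields seed length $2d\base + O(\log(n/t)\log(1/\eps\base))$ and error $\eps\base\cdot(n/t)^{O(1)}$. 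I expect the crux of the argument to be exactly the per-seam analysis. Naively, the two halves interact through the $t-1$ bits of the left half's suffix that the straddle checks read (plus, on the right, the $O(\log w)$-bit state that the straddle SWBP's prefix part is in), which would force additive $\Omega(t)$ and $\Omega(\log w)$ terms into the seed; the real content is to argue that, once the left half is already pseudorandom, the straddle checks only ``see'' it through a summary of min-entropy deficiency $O(\log(1/\eps\base))$, so that an expander step over the previous level's seed space suffices, with the residual $\log w$ and $\log t$ losses absorbed into the $(n/t)^{O(1)}$ error budget rather than into the seed. Establishing this cleanly — using self-synchronization to decouple the two halves' own computations and a hybrid argument over the seams — is the main obstacle.

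For items~2 and~3 I would keep the reduction of the second paragraph but replace the depth-$\log(n/t)$ gluing by a more seed-efficient generator for the length-$\ell$ auxiliary branching program, which also avoids the iterated error blow-up and so yields the stronger error $O(\eps\base\cdot n/t)$. For item~2, I would feed it into an Armoni/Nisan--Zuckerman-type PRG whose seed separates the length dependence from the width-and-error dependence, so that $\log(n/t)=\log\ell$ enters additively while the remaining dependence (after using the lower bound $\log w = O(d\base)$) contributes the $(\log(1/\eps\base))^{3/2}$ term. For item~3, I would instead bootstrap the base-case generator $O(\log(d\base+\log(1/\eps\base)))$ times, each round roughly squaring the block-length that is fooled at only a mild increase in the width and error parameters; this ``iterated squaring'' produces the $\bigl(d\base+\log\log(n/t)+\log(1/\eps\base)\bigr)\cdot\log(d\base+\log(1/\eps\base))$ shape, the $\log\log(n/t)$ (rather than $\log(n/t)$) arising because the length is squared, not doubled, each round. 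Explicitness is immediate throughout, since all primitives used (small-bias sets, expander neighbour functions, and the quoted PRGs) are linear-space computable and composing linear-space maps stays in linear space.
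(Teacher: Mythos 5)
Your block decomposition $S(x)=\bigwedge_j D_j(y^{(j-1)},y^{(j)})$ is a correct observation, but the step you yourself flag as ``the main obstacle'' is a genuine gap, and the fix you gesture at is not the right one. In your recursion the two halves are coupled through the seam: the right half's behavior depends on the last $t$ bits of the left half (equivalently, on the state of $S$ at the seam), and conditioning on that state can cost up to $\log w$ bits of min-entropy --- there is no reason the realized seam state has probability $\mathrm{poly}(\eps\base)$ under the left half's output distribution. This is exactly the loss in the standard INW analysis for general branching programs, i.e., the very thing the theorem is supposed to beat, so the claim that the seam only leaks a summary of deficiency $O(\log(1/\eps\base))$ does not hold for your decomposition. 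The paper's resolution is an \emph{interleaving} trick you are missing: it outputs $X \il Y = X_1Y_1X_2Y_2\cdots$ for two \emph{independent} sources, each consisting of $n/2t$ blocks of length $t$. Fixing $x$, the sliding-window property shatters $S$ into $n/2t$ length-$t$ unanimity programs $S_x^i$, where $S_x^i$ reads only $Y_i$ (the fixed $x_i$ determines its start state and $x_{i+1}$ its final acceptance condition), and these programs are now \emph{genuinely independent}. The problem thus reduces to ``simultaneously fooling'' $m$ independent unanimity programs, and in that setting the only information a prefix of the recursion reveals about the seed is the single bit ``all programs so far accept'', whose min-entropy deficiency is $O(\log(1/\eps\base))$ whenever the acceptance probability is at least $2\eps\base$ (otherwise the error bound is trivial). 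That is what makes the per-level cost $O(\log(1/\eps\base))$ with error factor $3$ per level, giving item~1 (with seed $2d\base+\cdots$ because two independent interleaved copies are used).

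Your items 2 and 3 then instantiate the wrong primitives. After interleaving, simultaneous fooling of $m$ independent programs is precisely the problem of fooling an $(m,d\base)$-combinatorial rectangle (apply $G\base$ to each of $m$ seeds and ask that all programs accept), and the paper gets item~2 from Lu's rectangle PRG --- that is the actual source of the $(\log(1/\eps\base))^{3/2}$ term, not an Armoni/Nisan--Zuckerman analysis of an auxiliary length-$\ell$ program (which would reintroduce a $\log w\cdot\log\ell$-type cost) --- and item~3 from the Gopalan--Yehudayoff rectangle PRG, whose seed length already has the $\log\log$ dependence on the number of coordinates and the outer $\log(m+\log(1/\eps))$ factor; no iterated squaring is involved. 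So the architecture you need is: (i) reduce to simultaneous fooling via interleaving two independent copies (\cref{lem_interleave}), and (ii) solve simultaneous fooling either by extractor-based doubling (item~1) or by composing $G\base$ with a combinatorial-rectangle PRG (items~2 and~3).
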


To get a feeling for how this is an improvement compared to a PRG that fools
general branching programs, consider first the case in which $G\base$ is the
classical generator of \citeauthor{nisan92_pseudorandom_comb}
\cite{nisan92_pseudorandom_comb}. 
Assuming $n \le w$, this generator $\eps\base$-fools any branching program of
width-$w$ and length-$n$ using a seed length of
\[
  d\base = O(\log(w/\eps\base) \log n).
\]
Assume also $t = o(n)$, which is generally the most interesting case.
Then the second and third items of \cref{thm_prg_gen_swbp} give us PRGs with
seed lengths
\[
  d_2 = O(\log(w/\eps)\log t + (\log(n/\eps))^{3/2})
\]
and
\[
  d_3 = O(\log(w/\eps) \log\log(w/\eps) \log t)
\]
for any desired $\eps > 0$, respectively.
(To keep the discussion brief, we set the first item of \cref{thm_prg_gen_swbp}
aside for now, but note it gives a better dependence for smaller values of
$\eps$ compared to $d_2$.)
For the regime where $\log(w/\eps) = \Omega(\log(n/\eps))^{3/2}$, this means the
$\log n$ factor in $d\base$ becomes a $\log t$ factor in $d_2$.
This is a considerable improvement for most applications since the window size
$t$ is likely to be much smaller than the input length $n$.
Meanwhile, $d_3$ is independent of $n$, suggesting it is more appropriate when
fooling programs whose width is very close to $n$, corresponding to algorithms
of low space complexity (i.e., near $\log n$).

The best known generator for fooling general branching programs is the one by
\citeauthor{armoni98_derandomization_random}
\cite{armoni98_derandomization_random}, who gives an explicit generator
$G\Ar\colon \binalph^{d\Ar} \to \binalph^n$ that (with the improvements by
\citeauthor{kane08_revisiting_arxiv} \cite{kane08_revisiting_arxiv})
$\eps\Ar$-fools any width-$w$, length-$n$ branching program for any choice of
$\eps\Ar > 0$ and $n,w \in \N_+$ (and, in particular, also $\eps\Ar$-fools any
width-$(w-1)$, length-$n$ unanimity program).
Again assuming $n \le w$, $G\Ar$ has seed length
\[
  d\Ar = O\left(
    \frac{\log(w/\eps\Ar) \log n}{\max\{1,\log\log w - \log\log(n/\eps\Ar)\}}
  \right).
\]
Given $\eps > 0$, plugging in $G\base = G\Ar$ in \cref{thm_prg_gen_swbp} with
$\eps\base = \eps\Ar = \eps \cdot (2t/n)^{\Omega(1)}$, we get:

\begin{corollary}%
  \label{cor_thm_prg_gen_swbp}%
  Let $n,w,t \in \N_+$ with $n \le w$ and $\eps > 0$ be given.
  Furthermore, let
  \[
    d' = O\left(
      \frac{\log(w/\eps) \log t}{\max\{1,\log\log w - \log\log(n/\eps)\}}
    \right).
  \]
  Then the following hold:
  \begin{enumerate}
    \item There is an explicit PRG $\Gfunc$ with seed length
    \[
      d = O(d' + \log(n/t)\log(n/t\eps))
    \]
    that $\eps$-fools any width-$w$, length-$n$ SWBP of window size $t$.
    \item There is an explicit PRG $\Gfunc$ with seed length
    \[
      d = O(d' + (\log(n/t\eps))^{3/2})
    \]
    that $\eps$-fools any width-$w$, length-$n$ SWBP of window size $t$.
    \item There is an explicit PRG $\Gfunc$ with seed length
    \[
      d = O((d' + \log(n/t\eps)) \log(d' + \log(n/t\eps)))
    \]
    that $\eps$-fools any width-$w$, length-$n$ SWBP of window size $t$.
  \end{enumerate}
\end{corollary}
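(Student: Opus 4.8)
The plan is to instantiate each of the three items of \cref{thm_prg_gen_swbp} with the base generator $G\base = G\Ar$ of \citeauthor{armoni98_derandomization_random} (with the Kane--Nisan--Zuckerman improvements), used at length $t$ rather than $n$, and error $\eps\base = \eps\Ar$ chosen so that after the $(n/t)^{O(1)}$ (respectively $O(n/t)$) blow-up in the error the final error is at most $\eps$. Concretely, for item~1 we set $\eps\base = \eps \cdot (2t/n)^{c}$ for the constant $c$ hidden in the $(n/t)^{O(1)}$ loss of \cref{thm_prg_gen_swbp}(1); for items~2 and~3 we set $\eps\base = \eps \cdot (2t/n) / C$ for the constant $C$ in the $O(n/t)$ loss. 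In all cases $\eps\base = \eps\cdot(2t/n)^{\Omega(1)}$, so $\log(1/\eps\base) = \Theta(\log(n/(t\eps)))$ and, since $t \le n \le w$, also $\log(w/\eps\base) = \Theta(\log(w/\eps))$ and $\log\log(n/\eps\base) = \log\log(n/\eps) \pm O(1)$.

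Next I would verify that $G\Ar$ at length $t$ and error $\eps\base$ indeed fools width-$w$, length-$t$ \emph{unanimity} programs, as required by the hypothesis of \cref{thm_prg_gen_swbp}: by the remark just before \cref{def_prg}, a width-$w$ unanimity program of length $t$ is simulated by a width-$(w+1)$ standard branching program of length $t$, and $G\Ar$ $\eps\base$-fools those, so it $\eps\base$-fools the unanimity programs as well (the $+1$ in the width is absorbed into the $O(\cdot)$ of the Armoni bound since $w \ge n \ge 1$). Its seed length is
\[
  d\base = O\!\left(
    \frac{\log(w/\eps\base)\,\log t}{\max\{1,\ \log\log w - \log\log(n/\eps\base)\}}
  \right)
  = O(d'),
\]
where the last equality uses the estimates from the previous paragraph together with the fact that $\log\log w - \log\log(n/\eps\base) = \log\log w - \log\log(n/\eps) \pm O(1)$, so the two denominators agree up to a constant factor whenever either is $\Theta(1)$ or larger (a short case split on whether the denominator is $O(1)$ handles the $\max\{1,\cdot\}$ cleanly). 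Plugging $d\base = O(d')$ and $\log(1/\eps\base) = \Theta(\log(n/(t\eps)))$ into the three seed-length formulas of \cref{thm_prg_gen_swbp} — namely $2d\base + O(\log(n/t)\log(1/\eps\base))$, $O(d\base + \log(n/t) + (\log(1/\eps\base))^{3/2})$, and $O((d\base + \log\log(n/t) + \log(1/\eps\base))\log(d\base + \log(1/\eps\base)))$ — yields exactly the three claimed bounds (in the third, $\log\log(n/t)$ is dominated by $\log(n/(t\eps))$, and $\log(d\base+\log(1/\eps\base)) = O(\log(d'+\log(n/(t\eps))))$). Explicitness is inherited: $G\Ar$ is explicit and the constructions of \cref{thm_prg_gen_swbp} are explicit given an explicit base PRG.

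I do not expect a genuine obstacle here — the corollary is a parameter substitution — but the one point that needs care is the algebra around the Armoni denominator $\max\{1,\log\log w - \log\log(n/\eps\Ar)\}$: one must check that replacing $\eps\Ar$ by $\eps\cdot(2t/n)^{\Omega(1)}$ changes $\log\log(n/\eps\Ar)$ only by an additive $O(1)$ (which it does, since $\log(n/\eps\Ar) = \Theta(\log(n/\eps))$ and a further $\log$ turns a constant factor into an additive $O(1)$), and that this additive $O(1)$ perturbs the $\max\{1,\cdot\}$ expression by at most a constant factor — handled by separately treating the regime where the difference is $O(1)$ (denominator $\Theta(1)$, bound $O(\log(w/\eps)\log t)$ in both cases) and the regime where it is $\omega(1)$ (additive $O(1)$ is lower-order). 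With that in hand, the three bounds are immediate.
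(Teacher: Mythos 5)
Your proposal is correct and follows exactly the paper's (one-line) proof: instantiate \cref{thm_prg_gen_swbp} with $G\base = G\Ar$ at length $t$ and $\eps\base = \eps\Ar = \eps\cdot(2t/n)^{\Omega(1)}$, then check the parameter algebra. The extra care you take with the $\max\{1,\cdot\}$ denominator and the unanimity-vs-standard width issue is sound and only makes explicit what the paper leaves implicit.
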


If we take the second item and consider the regime where (again) $\log w =
\Omega(\log(n/\eps))^{3/2}$ and the window size $t$ is in $\polylog(w)$, then
the resulting seed length is $O(\log(w/\eps))$, which is essentially optimal.
Meanwhile, the PRG from the first item is better suited for similar regimes but
smaller values of $\eps$ (e.g., for $\eps = 2^{-\Omega(\log n)^2}$).
In turn, the PRG from the third one is near-optimal for those regimes where
$\log w$ is in-between $\Omega(\log(n/\eps))$ and $O(\log(n/\eps))^{3/2}$, which
allows us to almost fully close the gap left for algorithms in the lower end of
the space complexity spectrum.

\subparagraph{Hitting set generators for SWBPs.}
With a slight adaptation of our second of the two constructions above, we are
able to improve the seed length if we relax it from a PRG to just a
\emph{hitting set generator} (HSG).
In contrast to a PRG, an HSG does not have to appear random all across its
range to the class $\mathcal{F}$ of functions we wish to fool; it only has to
guarantee that the functions in $\mathcal{F}$ cannot avoid its range completely
(again, up to some margin of error $\eps$).

\begin{definition}
  Let $n \in \N_+$ and $\eps > 0$, and let $\mathcal{F}$ be a class of functions
  $f\colon \binalph^n \to \binalph$.
  We say a function $G\colon \binalph^d \to \binalph^n$ is an \emph{$\eps$-HSG}
  for $\mathcal{F}$ if the following holds for every $f \in \mathcal{F}$:
  \[
    \Pr[f(U_n) = 1] \ge \eps
    \implies \exists x \in \binalph^d: f(G(x)) = 1.
  \]
  Again, we say $G$ is \emph{explicit} if there is a (uniform) linear-space
  algorithm which, on input $n$ (encoded in binary) and $s \in \binalph^d$,
  outputs $G(s)$.
\end{definition}

\begin{theorem}[restate=restatethmHSGSWBP,name=]%
  \label{thm_hsg_swbp}%
  Let $n,w,t \in \N_+$ with $n \le w$ and $\eps > 0$ be given, and let
  $G\base\colon \binalph^{d\base} \to \binalph^t$ be a PRG that
  $\eps\base$-fools width-$w$, length-$t$ unanimity programs.
  Then there is an explicit $\eps_G$-HSG $\Gfunc$ with seed length
  \[
    d = O(d\base + \log(n/t) + \log(1/\eps\base))
  \]
  for width-$w$, length-$n$ SWBPs of window size $t$, where $\eps_G = O(\eps\base
  \cdot (n/t))$.
\end{theorem}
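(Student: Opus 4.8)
The plan is to obtain this HSG by weakening the internal pseudorandom component of the second PRG construction of \cref{thm_prg_gen_swbp} to a hitting object. I would start exactly as in that construction: using \cref{thm_char_swbp}, group the $n$ input positions into $m = \lceil n/t \rceil$ consecutive blocks of length at most $t$, so that the bits of the de Bruijn part of $S$ are read block by block. By self-synchronization, the state of $S$ at the $j$-th block boundary is a function of block $j$ alone, and all unanimity constraints imposed while the window slides across that boundary form a predicate $A_{j+1}$ depending only on blocks $j$ and $j+1$ (with $A_1$ depending only on block $1$); hence $S$ accepts $x = B_1\cdots B_m$ iff $A_1(B_1)\wedge\bigwedge_{j=1}^{m-1}A_{j+1}(B_j,B_{j+1})$. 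A hybrid argument over the blocks then shows that drawing each block independently from $G\base(U_{d\base})$ changes $\Pr[S(x)=1]$ by at most $m\,\eps\base$: when block $j$ is swapped from uniform to $G\base$ with all other blocks fixed, the acceptance indicator, viewed as a function of $B_j$, is a $\{0,1\}$-constant times a width-$w$, length-$t$ unanimity program evaluated on $B_j$ — the constraint $A_j(B_{j-1},\cdot)$ is such a program (reading $B_j$ from the boundary node), and $A_{j+1}(\cdot,B_{j+1})$, since nodes identified at one layer of a SWBP stay identified going forward, is a function of that program's final state and so folds into its layer-$t$ accepting set. Thus, taking $\eps_G$ a large enough constant multiple of $(n/t)\eps\base$, every $S$ with $\Pr[S(U_n)=1]\ge\eps_G$ is accepted by $G\base(s_1)\cdots G\base(s_m)$ for some tuple of base seeds $(s_1,\dots,s_m)$; the HSG's job is to output such a tuple from a short master seed.

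Next I would show by a potential argument that a good tuple can be found greedily from left to right while carrying only $O(\log w)$ bits of state across each boundary. For a valid prefix $B_1,\dots,B_k$ (one on which all constraints seen so far hold), let $p_k=\Pr[\,S\text{ accepts}\mid B_1,\dots,B_k\,]$ with the remaining blocks uniform; by the reset property this equals $\hat\psi_k(q_k)$ for a function $\hat\psi_k$ of the boundary node $q_k$ alone, and $p_0=\Pr[S(U_n)=1]\ge\eps_G$ while $p_k=\mathbb{E}_{B_{k+1}}[p_{k+1}]$. Since $p_{k+1}\le 1$, an averaging step gives $\Pr_{B_{k+1}}\big[\,A_{k+1}(B_k,B_{k+1})\text{ holds and }p_{k+1}\ge(1-\tfrac{1}{2m})p_k\,\big]\ge p_k/(2m)$, which stays $\ge\Omega(\eps_G/m)$ because $p_k\ge p_0\prod_{i<k}(1-\tfrac{1}{2m})\ge\eps_G/2$ along the walk. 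The crucial point is that this event, as a function of $B_{k+1}$ with $B_k$ fixed, is once more a width-$w$, length-$t$ unanimity program: it reads $B_{k+1}$ from the boundary node $q_k$, imposes the constraints of $A_{k+1}(B_k,\cdot)$, and has its layer-$t$ accepting set cut down to the nodes $q$ with $\hat\psi_{k+1}(q)\ge(1-\tfrac{1}{2m})p_k$. As its acceptance probability exceeds $\eps\base$ (for the right constant in $\eps_G$) and $G\base$ $\eps\base$-fools it, $\mathrm{range}(G\base)$ contains a $\ge\Omega(\eps_G/m)$-fraction of good successors for every possible current block. Hence a valid walk $s_1^\ast\to\cdots\to s_m^\ast$ — one following good edges throughout — exists in a length-$m$ sequence of graphs on vertex set $\binalph^{d\base}$, each of minimum out-degree $\Omega((\eps_G/m)\cdot 2^{d\base})$.

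It then remains to hit such a walk with a master seed of length only $O(d\base+\log(n/t)+\log(1/\eps\base))$, and this is where I expect the real work to lie: a naive choice of independent $s_j$'s costs $O((n/t)\,d\base)$, and a straightforward INW-style recycling at $\log(n/t)$ doubling levels costs $O(\log(n/t)\cdot\log(w/\eps\base))$, neither of which is acceptable. Two features of our setting should make the additive bound attainable. First, by the remark following \cref{thm_char_swbp} a non-redundant $S$ has $w\le 2^t$, so the information crossing each block boundary is only $O(\log w)\le O(t)$ bits and the recycling of the base seed between consecutive blocks need only defeat an $O(\log w)$-bit entropy deficiency. Second, because we only need to hit — not fool — it suffices to preserve positivity of the potential rather than its value, which is exactly what the greedy argument above already supplies and which permits replacing the internal PRG of the \cref{thm_prg_gen_swbp} construction by a much cheaper hitting object for walks in bounded-out-degree layered graphs, collapsing the per-level cost to an amortized constant plus the one-time $O(\log(1/\eps\base))$ needed to seed it. The main obstacle is thus the quantitative bookkeeping of this last step: confirming that the gluing object costs only an additive $O(\log(n/t)+\log(1/\eps\base))$ bits, that the additional error it introduces is a further $(n/t)^{O(1)}$ factor absorbed into $\eps_G=O(\eps\base\cdot n/t)$, and that every intermediate program produced by the hybrid and greedy arguments stays of width $w$ and length $t$, so that the single assumed generator $G\base$ is all that is ever invoked.
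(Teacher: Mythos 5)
Your write-up stops short of a proof, and the place where it stops is exactly where the paper's two key ideas live. Your decomposition into \emph{consecutive} blocks $B_1\cdots B_m$ produces a chain of adaptive constraints $A_{j+1}(B_j,B_{j+1})$: the set of good seeds for block $j+1$ depends on the seed chosen for block $j$. Hitting such an adaptive walk with an amortized $O(1)$ extra bits per layer is not something you can delegate to a known ``cheap hitting object for walks in bounded-out-degree layered graphs''; what you are asking for is essentially an optimal-seed HSG for ordered branching programs over the alphabet $\binalph^{d\base}$ (the out-degree promise does not remove the adaptivity, and expander-walk hitting arguments require the per-layer sets to be fixed in advance). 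So the third paragraph is a genuine gap, not bookkeeping, and you correctly sense that this is ``where the real work lies.''

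The paper sidesteps the adaptivity entirely by \emph{interleaving}: the output has the form $X_1Y_1X_2Y_2\cdots$ with $X$ and $Y$ drawn independently, each block of length $t$. Because the window size is $t$, once the $Y$-half is being analyzed with the $X$-half fixed, the tests applied to $Y_1,\dots,Y_m$ become \emph{independent} length-$t$ unanimity programs $S_x^1,\dots,S_x^m$ --- i.e., a combinatorial rectangle in the base seeds $s_1,\dots,s_m$, not a chain. One then feeds the seeds through the Linial--Luby--Saks--Zuckerman HSG for combinatorial rectangles (\cref{thm_llsz_hsg}), whose seed length $O(d\base+\log\log(n/t)+\log(1/\eps\base))$ is precisely the additive gluing cost you were hoping for; a two-sided Markov argument (\cref{lem_interleave}, HSG version) then extracts a single pair $(x,y)$ with $S(x\il y)=1$. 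Your first two paragraphs (the hybrid argument showing some all-$G\base$ tuple works, and the greedy potential argument) are plausible as far as they go, but they build toward the adaptive-walk formulation rather than the rectangle one, so they do not connect to any construction achieving the stated seed length.
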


Plugging in Armoni's PRG as before, we have:

\begin{corollary}%
  \label{cor_thm_hsg_swbp}%
  Let $n,w,t \in \N_+$ with $n \le w$ and $\eps > 0$ be given.
  Furthermore, let
  \[
    d' = O\left(
      \frac{\log(w/\eps) \log t}{\max\{1,\log\log w - \log\log(n/\eps)\}}
    \right).
  \]
  Then there is an explicit $\eps$-HSG $\Gfunc$ with seed length
  \[
    d = O(d' + \log(n/t) + \log(1/\eps))
  \]
  for width-$w$, length-$n$ SWBPs of window size $t$.
\end{corollary}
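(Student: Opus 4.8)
By \cref{thm_hsg_swbp} the corollary is just a matter of plugging in a concrete base generator, so my plan is to instantiate $G\base$ with Armoni's generator $G\Ar$ (as improved by Kane) and track parameters. I would apply \cref{thm_hsg_swbp} with $G\base = G\Ar$ set up to $\eps\base$-fool width-$(w{+}1)$, length-$t$ branching programs — which in particular $\eps\base$-fools width-$w$, length-$t$ unanimity programs — for the choice $\eps\base = \Theta(\eps\cdot t/n)$, so that the HSG error $\eps_G = O(\eps\base\cdot n/t)$ becomes $O(\eps)$ and, after rescaling constants, exactly $\eps$. The only thing to check is the seed length: Armoni's bound gives $d\base = O\bigl(\log((w{+}1)/\eps\base)\cdot\log t \,/\, \max\{1,\log\log(w{+}1)-\log\log(t/\eps\base)\}\bigr)$, and since $t\le n$ and $1/\eps\base = \Theta((n/t)/\eps)$ we have $\log\log(t/\eps\base) = \log\log(n/\eps)+O(1)$ and $\log((w{+}1)/\eps\base) = \log(w/\eps)+O(\log(n/t)) = O(\log(w/\eps))$ (using $\log(n/t)\le\log n\le\log w$), so $d\base = O(d')$. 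Also $\log(1/\eps\base) = \log(1/\eps)+O(\log(n/t))$. Substituting into the seed length $O(d\base + \log(n/t) + \log(1/\eps\base))$ of \cref{thm_hsg_swbp} yields $O(d' + \log(n/t) + \log(1/\eps))$, as claimed. So the whole task reduces to \cref{thm_hsg_swbp}, whose proof I sketch next.

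For \cref{thm_hsg_swbp} the plan is to use \cref{thm_char_swbp} to recast the SWBP as a short read-once chain of length-$t$ unanimity programs. Put $m = \lceil n/t\rceil$ and split the input into consecutive length-$t$ blocks $B_1,\dots,B_m$ (padding the last; this costs only constants). By \cref{thm_char_swbp} the state of $S$ at layer $jt$ is $\alpha_{jt}(B_j)$, i.e. it depends on $B_j$ alone, and whether all of layers $(j{-}1)t+1,\dots,jt$ are accepting is decided by a length-$t$, width-$w$ unanimity program $C_j$ run from the state $\alpha_{(j{-}1)t}(B_{j{-}1})$ that the previous block induces (from $q_0$ for $j=1$). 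Thus $S$ accepts $x$ iff $\bigwedge_{j=1}^m C_j$ holds, and the only information a block passes on is its (at most $w$-valued) exit state.

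Next I would set up survival probabilities: for $j=m,\dots,1$ and a state $u$ at layer $(j{-}1)t$, let $\pi_j(u)$ be the probability over uniform $B_j,\dots,B_m$ that $C_j\wedge\dots\wedge C_m$ holds when block $j$ is entered in state $u$; then $\pi_j(u) = \mathbb{E}_{B_j}[C_j(u,B_j)\cdot\pi_{j+1}(\alpha_{jt}(B_j))]$ and $\pi_1(q_0) = \Pr[S\text{ accepts }U_n]$, which is $\ge\eps_G$ in the only case we must handle. The elementary move is that for a state $u$ with $\pi_j(u)\ge p$, the event ``$B_j$ passes $C_j$ from $u$ and exits in a state $v$ with $\pi_{j+1}(v)\ge p/2$'' is again acceptance of a width-$w$, length-$t$ unanimity program (intersect the final accepting set of $C_j$ with the fixed set $\{v:\pi_{j+1}(v)\ge p/2\}$) and, by Markov, has probability $\ge p/2$; hence as soon as $p/2>\eps\base$ it is hit by some seed of $G\base$. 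Iterating this block by block would halve $p$ $m$ times and fail, so instead I would recurse on the block interval: to hit $[a,b]$ from a start state $q$ with target set $F$ and threshold $p$, split at the midpoint $c$, use the Markov estimate to pick a target $U^\star$ for the left half with left-survival-into-$U^\star$ at least $p/2$, recurse on $[a,c]$ with $(U^\star,p/2)$, and recurse on $[c{+}1,b]$ with $(F,p/2)$ and start state the exit state that the chosen left candidate actually produces; the base case is a single unanimity program hit by $G\base$. The recursion has depth $\lceil\log_2 m\rceil$, so $p$ is halved only that many times and the leaves go through provided $\eps_G/(2m)>\eps\base$, i.e. $\eps_G = \Omega(\eps\base\cdot n/t)$, matching the stated loss.

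I expect the seed length to be the main obstacle. Carried out naively the recursion places an independent $G\base$-seed at each of the $m$ leaves, giving seed length $\Theta(m\cdot d\base)$, whereas the target is $O(d\base + \log(n/t) + \log(1/\eps\base))$. Closing this gap is the technical heart and is exactly where the adaptation of the second construction of \cref{thm_prg_gen_swbp} enters: the $G\base$-seed must be recycled across the leaves, the shape of the recursion and the milestone/target states must be encoded in $O(\log(n/t))$ bits, and the ``which seed works'' information must be folded into $O(\log(1/\eps\base))$ bits overall by composing the sub-generators in an INW-style fashion that charges only $O(1)$ fresh bits per level — affordable because the inter-block interface is merely the $t$-bit window (equivalently $\le\log w$ bits of state) and, crucially, an HSG need only preserve one accepting path through the chain, not the whole distribution, which is what removes the $\log w$ factor as well as the $\sqrt{\log(1/\eps\base)}$-type overhead present in the PRG's seed length. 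The remaining care points are checking that the composed generator is explicit in the sense of \cref{def_prg} and that the error accounting over the $O(\log(n/t))$ levels is consistent with $\eps_G = O(\eps\base\cdot n/t)$.
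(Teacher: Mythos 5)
Your first paragraph is a correct proof of the corollary and is exactly what the paper does: instantiate \cref{thm_hsg_swbp} with Armoni's generator at error $\eps\base = \Theta(\eps t/n)$ and verify that $d\base = O(d')$ (using $t/\eps\base = n/\eps$ and $\log(n/t) \le \log w$) and $\log(1/\eps\base) = \log(1/\eps) + O(\log(n/t))$. Since \cref{thm_hsg_swbp} is proved separately, citing it suffices, and up to that point there is nothing to object to.

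The sketch you then volunteer for \cref{thm_hsg_swbp} itself, however, is not the paper's argument and has a gap that you flag but do not close. Your interval-halving recursion with survival probabilities and Markov thresholds is fine for the error accounting, but the seed length $O(d\base + \log(n/t) + \log(1/\eps\base))$ is exactly the part left open: an INW-style composition that recycles the seed across levels charges $\Omega(\log(1/\eps\base))$ fresh bits per level (this is what the extractor in \cref{lem_prg_swbp_ind_step} costs), so over $\log(n/t)$ levels you get the multiplicative bound $O(\log(n/t)\log(1/\eps\base))$ of the first item of \cref{thm_prg_gen_swbp}, not the additive bound of the HSG theorem; asserting that $O(1)$ bits per level suffice because an HSG "need only preserve one accepting path" is not an argument. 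The paper avoids recursion entirely: it interleaves two independent copies (the HSG version of \cref{lem_interleave}) to reduce hitting a length-$n$ SWBP to simultaneously hitting $n/2t$ length-$t$ unanimity programs, observes that this is an instance of hitting a combinatorial rectangle with $n/2t$ blocks of length $d\base$ whose $i$-th coordinate function is $s \mapsto P_i(G\base(s))$, and invokes the Linial--Luby--Saks--Zuckerman HSG (\cref{thm_llsz_hsg}) with seed length $O(d\base + \log\log(n/t) + \log(1/\eps\base))$. That off-the-shelf object is what delivers the additive dependence on $d\base$ and $\log(1/\eps\base)$; without it (or an equivalent substitute) your sketch does not establish the seed length claimed in \cref{thm_hsg_swbp}.
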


Since \cref{cor_thm_prg_gen_swbp} already gives us a PRG with comparable seed
length (which is more desirable than obtaining an HSG), we focus here on the
regime where $\log w = o(\log(n/\eps))^{3/2}$.
Similar to the discussion following \cref{cor_thm_prg_gen_swbp}, if $t =
\polylog(w)$, then this gives us an HSG with essentially optimal seed length
$O(\log(w/\eps))$.
The difference is that this now holds even in regimes where $\log w =
\Omega(\log(n/\eps))^{1+\delta}$ for some $\delta > 0$ (and not only $\log w =
\Omega(\log(n/\eps))^{3/2}$ as before).

\subsubsection{Application to Probabilistic Cellular Automata}
\label{sec_intro_prgs_paca}

As an application, we obtain space-efficient algorithms for deciding the
languages accepted by sublinear-time probabilistic cellular automata.
More specifically, our results target the model of PACAs (probabilistic ACAs)
\cite{modanese23_sublinear-time_stacs}, which are cellular automata with two
local transition functions $\delta_0$ and $\delta_1$ and where, at every step,
each cell tosses a fair coin $c \in \binalph$ and then updates its state
according to $\delta_c$.
The acceptance condition is that of ACA, which is the most usual one
\cite{ibarra85_fast_tcs, sommerhalder83_parallel_ai, kim90_characterization_pd,
modanese21_sublinear-time_ijfcs} that allows for non-trivial sublinear-time
computations:
A computation is accepting if and only if a configuration is reached in which
every cell is accepting.
Our results target both \emph{one-sided} and \emph{two-sided error} PACAs, which
in a sense are the PACA analogues of the classical complexity classes $\RP$ and
$\BPP$, respectively.
(We refer the reader to \cref{sec_def_paca} for the definitions.)

These results are interesting because similar positive results in the sister
setting of efficiently deciding the language of a PACA where efficiency is
measured in terms of \emph{time} complexity would have surprising consequences
for the derandomization of Turing machines (e.g., $\P = \RP$)
\cite{modanese23_sublinear-time_stacs}.

\subparagraph{One-sided error PACAs.}
Previous to this work, it was known that every language accepted by a
deterministic ACA with time complexity $T$ can be decided using $O(T)$ space
\cite{modanese21_sublinear-time_ijfcs}.
For general one-sided error PACAs, we obtain the following result:
\begin{theorem}[restate=restatethmPRGGenPACA, name=]%
  \label{thm_prg_gen_paca}
  Let $\eps > 0$ be arbitrary, and let $T\colon \N_+ \to \N_+$ be a
  (constructible) function.
  Then, for any one-sided $\eps$-error PACA $C$ that recognizes its language
  $L(C)$ in time at most $T = T(n)$, there is a deterministic algorithm for
  $L(C)$ with space complexity
  \[
    O\left(
      \frac{(T + \log(1/\eps)) \log T}{\max\{1, \log T - \log\log(n/\eps)\}}
      + \log(1/\eps)
      + \log n
    \right).
  \]
\end{theorem}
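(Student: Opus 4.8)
The plan is to reduce a one-sided $\eps$-error PACA running in time $T$ to a suitable SWBP and then invoke Corollary~\ref{cor_thm_prg_gen_swbp} to derandomize it. First I would argue that, for a fixed input $x$ of length $n$ and a fixed cell position, the behavior of the PACA over its $T$ time steps depends only on a local neighborhood of radius $O(T)$ around that cell; since a cell's state at time $T$ is determined by the coin tosses made within that neighborhood across the $T$ steps, one can non-uniformly hardcode the relevant bits of $x$ and view the remaining dependence as a branching program reading the stream of coin tosses. The window size will be $t = \Theta(T)$, because any two time-space cones that are more than $\Theta(T)$ apart in the coin-toss stream share no coins and hence "synchronize"; the width is $w = 2^{O(T)}$, reflecting the number of distinct local configurations the automaton can be in. One accepts $x$ iff \emph{every} cell is accepting at its halting time, which is precisely a unanimity (indeed sliding-window) acceptance condition, so the resulting object is genuinely an SWBP of length $n' = \poly(n)$, width $w = 2^{O(T)}$, and window size $t = \Theta(T)$, rather than a plain branching program.

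Next I would apply the PRG from Corollary~\ref{cor_thm_prg_gen_swbp} (item~1, using Armoni's base generator) to this SWBP with error parameter a constant fraction of the PACA's gap; because the PACA has \emph{one-sided} error, a hitting-set-style argument suffices — it is enough that the PRG's range intersects the set of coin-toss streams on which a "no" instance would be (wrongly) accepted, or dually that on a "yes" instance some seed in the PRG's range yields the all-accepting configuration. Concretely, for $x \in L(C)$ the acceptance probability is $1$, so any reasonable generator hits an accepting stream; for $x \notin L(C)$ the acceptance probability is at most $\eps$, and $\eps$-fooling (with the $(n/t)^{O(1)}$ loss absorbed into the stated $\log(1/\eps)$ terms) keeps the fooled acceptance probability below $1$, so no seed in the range is accepted. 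The deterministic algorithm then enumerates all $2^d$ seeds, and for each one simulates the PACA under the pseudorandom coin stream produced by the explicit generator, accepting iff some seed leads to the all-accepting configuration. Plugging $w = 2^{O(T)}$, $t = \Theta(T)$, and $n' = \poly(n)$ into the seed length $d = O\!\bigl(d' + \log(n/t)\log(n/t\eps)\bigr)$ of Corollary~\ref{cor_thm_prg_gen_swbp} — where $d' = O\!\bigl(\tfrac{\log(w/\eps)\log t}{\max\{1,\log\log w - \log\log(n/\eps)\}}\bigr)$ — and simplifying ($\log(w/\eps) = O(T + \log(1/\eps))$, $\log t = O(\log T)$, $\log\log w = \log T + O(1)$) yields exactly the claimed bound $O\!\bigl(\tfrac{(T+\log(1/\eps))\log T}{\max\{1,\log T - \log\log(n/\eps)\}} + \log(1/\eps) + \log n\bigr)$; the extra $\log n$ term accounts for storing a cell index / position counter during the simulation, and the generator being explicit (linear-space computable) means the whole enumeration runs in the stated space.

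The main obstacle, and the step I would spend the most care on, is establishing the sliding-window property \emph{precisely} — i.e., identifying the correct stream over which the SWBP reads and verifying that two states in the same layer that correspond to distinct "histories" but agree on the last $t$ coin-blocks transition identically. This requires carefully laying out the order in which coins are fed to the branching program (presumably cell-by-cell within a time step, or time-step-by-time-step within a cell's cone) so that the $O(T)$-radius light-cone dependence really does translate into a \emph{contiguous} window of the stream; a naive ordering could make the dependence non-contiguous and break the SWBP structure. A secondary subtlety is handling the halting/time-bound bookkeeping: since the PACA may halt at (possibly) different times and acceptance is "all cells accepting at the halting configuration," one must fold the halting predicate into the layer-wise accepting sets $Q_\acc^i$ of the unanimity program so that the SWBP's unanimity condition faithfully encodes the PACA's acceptance condition. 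Both of these are structural bookkeeping rather than deep — the real content is already in Theorem~\ref{thm_prg_gen_swbp} / Corollary~\ref{cor_thm_prg_gen_swbp} — but they are where an incorrect proof would most likely go wrong.
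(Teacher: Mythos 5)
Your high-level plan (simulate the PACA by a low-space sliding-window algorithm reading the coin stream, then enumerate seeds of a generator for SWBPs) matches the paper's, but there are three genuine gaps. The most important one is the issue you dismiss as ``secondary bookkeeping'': a PACA accepts if there \emph{exists} a time step $t < T$ at which \emph{all} cells are simultaneously accepting. For a \emph{fixed} $t$, the predicate ``all cells accepting at step $t$'' is a unanimity/sliding-window condition (one local check per cell column), but the disjunction over $t$ is an OR of ANDs that cannot be folded into the layer-wise sets $Q_\acc^i$ without remembering, for each $t$, whether all columns seen so far were accepting at step $t$ --- which destroys the sliding-window property (the paper makes exactly this point in \cref{sec_paca_darand_twosided}). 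The paper's proof therefore builds $T$ separate SWBPs $S_0,\dots,S_{T-1}$ (one per candidate accepting step, via \cref{prop_rand_str_alg}), runs the generator against each, and uses an averaging argument: on a yes-instance some $S_t$ accepts with probability at least $\Omega(\eps/T)$, so a $\Theta(\eps/T)$-HSG hits it. Your proposal applies the generator once to a single ``SWBP'' computing the full acceptance predicate, which is not an SWBP, so the fooling/hitting guarantee does not apply to it.

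Second, you have the one-sided error condition backwards: by \cref{def_pPACA}, $x \notin L(C)$ forces acceptance probability exactly $0$ (so soundness of ``accept iff some seed accepts'' is trivial), while $x \in L(C)$ only guarantees probability $\ge 1-\eps$, not $1$. Your soundness argument --- that $\eps$-fooling ``keeps the fooled acceptance probability below $1$, so no seed in the range is accepted'' --- is invalid: an upper bound on the pseudorandom acceptance probability that is merely below $1$ does not rule out individual accepting seeds; only the exact-zero no-side makes enumeration sound. Third, the arithmetic at the end does not give the claimed bound: item~1 of \cref{cor_thm_prg_gen_swbp} contributes an additive $\log(n/t)\log(n/t\eps) = \Omega((\log n)^2)$, not $\log n + \log(1/\eps)$. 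To get the stated space complexity one must use the HSG of \cref{cor_thm_hsg_swbp}, whose additive term is $\log(n/t) + \log(1/\eps)$ --- and the one-sidedness is precisely what licenses replacing the PRG by an HSG. (A minor further point: with the natural column-by-column ordering of the coin stream the window size is $\Theta(T^2)$, not $\Theta(T)$, though this only enters logarithmically and does not affect the final bound.)
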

Hence, for (say) $T(n) \ge (\log n)^{1.01}$ and constant $\eps$, it follows that
using $O(T)$ space we can also decide any language accepted by a $T$-time
(one-sided error) \emph{probabilistic} ACA.
That we can specify $\eps$ here as arbitrary (in particular allowing it to
depend on $n$) is a nice plus because we only know direct constructions for
reducing the error of a PACA by a \emph{constant} factor
\cite{modanese23_sublinear-time_stacs}.

\subparagraph{Two-sided error PACAs.}
For two-sided error PACAs, we fall only very short of achieving the same space
complexity as above.
\begin{theorem}[restate=restatethmPRGGenPACATwosided,name=]%
  \label{thm_prg_gen_paca_twosided}%
  Let $T\colon \N_+ \to \N_+$ be a function and $\eps > 0$ be arbitrary.
  For any two-sided $(1/2 - \eps)$-error PACA $C$ that recognizes its language
  $L(C)$ in time at most $T = T(n)$, there is a deterministic algorithm for
  $L(C)$ with space complexity $O((d' + \log n) \log(d' + \log n))$, where $d' =
  O((T + \log(1/\eps)) \log T)$.
\end{theorem}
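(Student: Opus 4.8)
\emph{Proof plan.}

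The plan is to follow the same template as the proof of \cref{thm_prg_gen_paca}, but with the hitting set generator replaced by the pseudorandom generator from item~3 of \cref{thm_prg_gen_swbp}, and with the extra care needed to preserve the acceptance \emph{probability} (not just hitting). Fix a two-sided $(1/2-\eps)$-error PACA $C$ running in time $T = T(n)$ and an input $x \in \binalph^n$. Exploiting the locality of $T$-time cellular-automaton computation, I would realize ``$C$ accepts $x$ when its cells use the coin stream $r$'' as a Boolean combination of sliding-window branching program computations on $r$: for each candidate step $s^\ast \le T$ there is a unanimity program $S_{x,s^\ast}$ of length $N = O(nT)$, width $w = 2^{O(T)}$, and window size $t = O(T^2)$ that, reading $r$ block by block (one cell's coins at a time, in a fixed order with a lag of $T$ cells), accepts iff every cell of $C$ is accepting at step $s^\ast$. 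The sliding-window property here is precisely the de Bruijn/prefix-tree structure of \cref{thm_char_swbp}: as the window of recently read coins slides forward by one cell, one fully determined column of the space--time diagram leaves and a new one enters. Crucially, the width stays $2^{O(T)}$---not $2^{O(T^2)}$---even though the window holds $\Theta(T^2)$ coins, because the state need only carry the last \emph{two} anti-diagonals of the space--time diagram, i.e.\ $O(T)$ cell states. With this in hand, $C$ accepts $x$ under $r$ iff $\bigvee_{s^\ast \le T} S_{x,s^\ast}(r) = 1$.

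Next I would instantiate \cref{thm_prg_gen_swbp}(3) with base generator $G\base$ equal to Nisan's generator for width-$2^{O(T)}$, length-$O(T^2)$ unanimity programs, and with $\eps\base$ set as small as $\eps \cdot 2^{-\Theta(T)} \cdot (t/N)$. This yields an explicit PRG $G$ that $\eps_G$-fools \emph{every} width-$w$, length-$N$ SWBP of window size $t$, hence simultaneously every $S_{x,s^\ast}$ and---crucially---every conjunction $\bigwedge_{s^\ast \in I} S_{x,s^\ast}$ for $I \subseteq \{1,\dots,T\}$ (each such conjunction is again a unanimity program of the very same topology, since ``cell $j$ is accepting at every step in $I$'' is still a per-layer, window-local predicate). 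Making $\eps\base$ smaller by the $2^{-\Theta(T)}$ factor is free here: it adds only $O(T)$ to $\log(1/\eps\base)$, which is absorbed by $\log w = \Theta(T)$ in $\log(w/\eps\base)$. Plugging Nisan's $d\base$ into the item-3 seed length and absorbing lower-order ($\log\log$) terms, the seed length simplifies to $d = O((d'+\log n)\log(d'+\log n))$ with $d' = O((T+\log(1/\eps))\log T)$, and the error is $\eps_G = O(\eps\base\cdot N/t) = \eps\cdot 2^{-\Theta(T)}$.

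The derandomized decider then works by enumeration: on input $x$, for each seed $\sigma \in \binalph^d$ it generates $G(\sigma)$ bit by bit (using explicitness of $G$), deterministically simulates $C$ on $x$ with coin stream $G(\sigma)$ for $T$ steps, and tests whether some reached configuration has all cells accepting; it accepts $x$ iff this holds for strictly more than half of the seeds. Generating $G(\sigma)$ on demand costs $O(d + \log(nT))$ space, and simulating $C$ and scanning for an all-accepting configuration costs $O(T + \log n)$, so the total is $O(d + \log n) = O((d'+\log n)\log(d'+\log n))$, as claimed. For correctness, the decider's acceptance fraction equals $\Pr_\sigma[\bigvee_{s^\ast} S_{x,s^\ast}(G(\sigma)) = 1]$; expanding this and the true probability $\Pr_r[\bigvee_{s^\ast} S_{x,s^\ast}(r) = 1]$ by inclusion--exclusion over the at most $T$ candidate steps and using that $G$ $\eps_G$-fools every conjunction, the two differ by at most $(2^T - 1)\eps_G \le \eps/2$. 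Since the true probability is $\ge 1/2 + \eps$ for $x \in L(C)$ and $\le 1/2 - \eps$ for $x \notin L(C)$, the threshold $1/2$ separates the cases.

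The main obstacle is the reduction in the first paragraph: verifying carefully that ``all cells accepting at step $s^\ast$'', as a function of the coins read in the chosen order, is genuinely computed by a unanimity program of width $2^{O(T)}$ and window size $O(T^2)$---in particular, isolating the $O(T)$-bit two-anti-diagonal state, handling the first/last $T$ layers (prefix-tree boundary), and checking that conjunctions over step-sets inflate neither the width nor the window---and then checking that, after substituting Nisan's generator and the tiny $\eps\base$, the composed seed length of \cref{thm_prg_gen_swbp}(3) collapses to the stated bound with the $\log n$ contributions folded into the additive $\log n$. The PRG construction and the enumeration step are used as black boxes, and the $\eps_G = 2^{-\Theta(T)}\eps$ choice is affordable precisely because the window-size savings already shrink the relevant $\log(\text{length})$ factor from $\log n$ down to $\log T$, leaving room for an extra additive $O(T)$ in $\log(1/\eps\base)$.
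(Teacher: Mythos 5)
Your proposal is correct and follows essentially the same route as the paper: reduce to width-$2^{O(T)}$, window-size-$O(T^2)$ SWBPs via the anti-diagonal sweep (the paper's \cref{prop_rand_str_alg}), observe that the \emph{conjunction} "all cells accepting at every step in $I$" is still an SWBP of the same width and window size (whereas the disjunction over steps is not), and control the disjunction by inclusion--exclusion over the $<2^T$ subsets with $\eps_G = \eps\cdot 2^{-\Theta(T)}$, which is affordable because the window-size savings replace the $\log n$ factor by $\log T$. The one (harmless, arguably cleaner) deviation is in the decider: you estimate $\Pr_\sigma[\bigvee_{s^\ast} S_{x,s^\ast}(G(\sigma))=1]$ directly in a single pass over seeds and invoke inclusion--exclusion only in the analysis, whereas the paper explicitly computes an estimate $\eta_I$ for each of the $2^T-1$ subsets and sums them with alternating signs; the two quantities are identical by inclusion--exclusion applied to the pseudorandom distribution, and your variant saves a $2^T$ factor in running time at no cost in space (just make explicit that the per-seed simulation uses the windowed sweep augmented with $T$ bookkeeping bits---which breaks the sliding-window property of the \emph{decider} but not of the conjunctions being fooled---rather than storing a full configuration of $C$).
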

Hence, this gives us that any language that can be accepted by a $T$-time
two-sided error PACA (i.e., with constant error $\eps$) can be decided
deterministically using $O(T(\log T)^2)$ space.
In fact, the same space complexity bound holds even if we only have $\eps =
2^{-O(T)}$.
For comparison, if we use Armoni's PRG directly, the resulting space complexity
(for constant $\eps$) is $O(T \log n)$; hence, ours is an improvement for PACA
that run in time $2^{O(\sqrt{\log n})}$.

\subsection{Technical Overview}
\label{sec_intro_tech_overview}

\subparagraph{Generators for general SWBPs.}
Our construction relies on two main ideas, the first of which is especially
suited for exploiting the sliding-window property.
With this technique, which we call \emph{interleaving}, we are able to
effectively \enquote{shatter} SWBPs into a collection of short programs.
To illustrate the idea, take some SWBP $S$ of window size $t$ and consider the
behavior on $S$ on an input
\[
  X \il Y = X_1 Y_1 X_2 Y_2 \cdots X_n Y_n
\]
where the $X_i$ and $Y_i$ all have length $t$ and $X = X_1 \cdots X_n$ and $Y =
Y_1 \cdots Y_n$ are chosen independently from one another (but possibly
follow the same distribution).
The point is that, since $S$ must read the $t$ bits of $Y_i$ between reading
$X_i$ and $X_{i+1}$, when it starts processing $X_{i+1}$, it has essentially
\enquote{forgotten} all information about $X_i$.
In addition, as $X$ and $Y$ are chosen independently from one another, $Y_i$
contains no information about $X_{i+1}$ whatsoever; the only relevance $Y_i$ has
regarding $X_{i+1}$ is in determining which state the processing of $X_{i+1}$
starts in.
Hence, when the output of our generator is of the form $X \il Y$, we can simply
(say) set $Y$ to some fixed string $y$ and then analyze $S$ as a collection of
$n$ many length-$t$ unanimity programs, each of which receives an $X_i$ as
input.

We have thus reduced our original task to that of fooling a collection of $n$
programs \emph{simultaneously}; that is, given unanimity programs
$P_1,\dots,P_n$ of the same width and length, we wish to generate pseudorandom
inputs $X_1,\dots,X_n$ so that
\[
  \abs*{\Pr[\forall i \in [n]: P_i(X_i) = 1] - \prod_{i=1}^n \Pr[P_i(U) = 1]}
\]
is small, where $U$ is the uniform distribution on all possible inputs to $P_i$.
As it turns out, there are \emph{two} approaches we may now apply, either one
being more advantageous in different parameter regimes.

\subparagraph{Simultaneous fooling as in the INW generator.}

The first construction we give resorts to one of the key ideas behind the
generator by \citeauthor{impagliazzo94_pseudorandomness_stoc}
\cite{impagliazzo94_pseudorandomness_stoc}:
Supposing we have a generator $G$ that simultaneously fools $n/2$ many programs
using a random seed $s$ of length $d$, we use an adequate \emph{extractor}
(\cref{thm_gw_extractor}) to generate a fresh seed $s'$ using \enquote{just a
few more} random bits and then output the concatenation $G(s) G(s')$. 
(Refer to \cref{sec_preliminaries} for the definitions.)
In our case, this strategy actually works \emph{as best as it possibly can}.
This is because only very little entropy is given away when the first $n/2$
programs have processed $G(s)$; namely, all that can be learned about $s$ is
that the $P_1,\dots,P_{n/2}$ are all accepting, which amounts to only a
\emph{constant} amount of information.

By induction, this allows us to simultaneously fool an arbitrary number $n \ge
2$ of unanimity programs.
For the base case $n=1$, we plug in any generator $G\base$ for unanimity
programs of our liking, the point being that $G\base$ only has to fool programs
that are as long as the \emph{window size} $t$ of our original SWBP $S$ (rather
than programs that are as long as $S$ itself).

\subparagraph{Simultaneous fooling using PRGs for combinatorial rectangles.}
The second item of \cref{thm_prg_gen_swbp} requires a different approach.
The general idea is reminiscent of a work by \citeauthor{hoza20_simple_siamjc}
\cite{hoza20_simple_siamjc}, though it is likely there are similar constructions
to be found in previous literature.
Instead of trying to \enquote{recycle} our seed again and again as we did
before, we will now pick $m$ many independent seeds $s_1,\dots,s_n$ and output
the concatenation $G\base(s_1) \cdots G\base(s_n)$.
This is guaranteed to fool the $n$ programs simply by construction (where the
price to pay is a very moderate factor of $n$ in the error parameter).
The next, key step is then to reduce the seed length by \emph{generating the
seeds themselves} with another construction that fools a different kind of
combinatorial object, namely so-called \emph{combinatorial rectangles}.

\begin{definition}
  Let $m,n \in \N_+$.
  An \emph{$(n,m)$-combinatorial rectangle} is a function $f\colon
  (\binalph^m)^n \to \binalph$ that results from composing the product of $n$
  many Boolean functions $f_i\colon \binalph^m \to \binalph$; that is,
  \[
    f(x_1,\dots,x_n) = \prod_{i=1}^n f_i(x_i).
  \]
\end{definition}

This is a good choice for two reasons:
Firstly, these objects are a tight fit to our notion of simultaneously fooling.
If we associate each $f_i$ with a program $P_i$, then $f(x_1,\dots,x_n) = 1$
corresponds exactly to the collection of programs accepting all respective
inputs $x_i$ simultaneously; moreover, just like our $P_i$, the output of each
$f_i$ is fully independent from one another.
Secondly, the literature on pseudorandom constructions targeting these objects
is relatively extense and, unlike general branching programs, asymptotically
optimal or near-optimal results are known (see \cref{thm_gy_prg,thm_llsz_hsg} in
\cref{sec_general_prg,sec_hsg}, respectively).

Finally, to obtain the HSG of \cref{thm_hsg_swbp}, we replace the PRG for
combinatorial rectangles above with the (asymptotically) optimal construction of
an HSG due to \citeauthor{linial97_efficient_comb}
\cite{linial97_efficient_comb}.
The analysis is similar, though the readaptation still requires a bit of work.
We refer to \cref{sec_hsg} for the details.

\subsection{Related Work}
\label{sec_intro_related_work}

\subparagraph{Branching programs.}
The standard line of attack in complexity theory when derandomizing low-space
algorithms is to lift these to the more general model of (non-uniform) branching
programs, which are more amenable to a combinatorial analysis.
This approach can be traced at least 30 years back to the seminal work of
\citeauthor{nisan92_pseudorandom_comb} \cite{nisan92_pseudorandom_comb}.
Since then, there has been progress in derandomizing branching programs in
diverse settings including, for instance, branching programs in which the
transition function at each layer is a permutation
\cite{hoza21_pseudorandom_itcs} or that may read their input in some fixed but
unknown order \cite{forbes18_pseudorandom_focs}.
Unanimity programs were recently proposed by
\citeauthor{bogdanov22_hitting_ccc} \cite{bogdanov22_hitting_ccc}.
To the best of our knowledge, ours is the first work to study branching programs
with the sliding-window property or any similar one, for that matter.

\subparagraph{Sliding-window algorithms.}
The sliding-window paradigm is a natural form of stream processing that has
been considered in the context of
database management systems \cite{babcock02_models_pods}, network monitoring
\cite{cormode13_continuous_sigmodr}, and reinforcement learning
\cite{garivier11_upper-confidence_alt}.
Starting with the work of \citeauthor{datar02_maintaining_siamjc}
\cite{datar02_maintaining_siamjc}, the sliding-window model has also been
extensively studied in the context of maintaining statistics over data streams.
(See, e.g., \cite{braverman16_sliding_book} for a related survey.)

Sliding-window algorithms have also been studied by
\citeauthor*{ganardi21_derandomization_tcs} \cite{ganardi18_randomized_icalp,
ganardi18_sliding_lata, ganardi18_sliding_mfcs, ganardi21_derandomization_tcs}
in the setting of language recognition (among others).
We point out a couple fundamental differences between the model we consider and
theirs:
\begin{itemize}
  \item Their results also apply to the non-uniform case---but parameterized on
  the \emph{window size}.
  In particular, their model allows radically different behaviors on the same
  stream for different window sizes.
  In our case, non-uniformity is parameterized on the \emph{input size} (i.e.,
  the length of the data stream).
  \item The underlying probabilistic model in the work of
  \citeauthor*{ganardi21_derandomization_tcs} is the \emph{probabilistic
  automata} model of \citeauthor{rabin63_probabilistic_ic}
  \cite{rabin63_probabilistic_ic}, which allows state transitions according to
  arbitrary distributions.
  In contrast, our model draws randomness from a binary source and---most
  importantly---the sliding-window property \emph{applies to the random input}
  (whereas in the model studied by \citeauthor*{ganardi21_derandomization_tcs}
  it only applies to the data stream).
\end{itemize}
In summary, \citeauthor*{ganardi21_derandomization_tcs} focus on a model that
verifies (or computes some quantity for) every window of fixed size on its
stream; we focus on a model that verifies \emph{the stream as a whole}.

We also mention a recent paper by \citeauthor{pacut21_locality_arxiv}
\cite{pacut21_locality_arxiv} that points out a connection between distributed
and sliding-window algorithms.
One may consider our application to probabilistic cellular automata to be a
direct consequence of this connection.

\subparagraph{Probabilistic cellular automata.}
Our contribution to probabilistic cellular automata adds another link to a
recent chain of results
\cite{modanese21_sublinear-time_ijfcs,modanese21_lower_csr,
modanese23_sublinear-time_stacs} targeted at the study of sublinear-time
cellular automata.
As mentioned in \cite{modanese21_sublinear-time_ijfcs}, the topic has been
seemingly neglected by the cellular automata community at large and, as far as
we are aware of, the body of theory on the subject predating this recent series
of papers resumes itself to \cite{ibarra85_fast_tcs, kim90_characterization_pd,
sommerhalder83_parallel_ai}.
A probabilistic model similar to the probabilistic cellular automata we consider
was previously proposed by \citeauthor{arrighi13_stochastic_fi}
\cite{arrighi13_stochastic_fi}, but \cite{modanese23_sublinear-time_stacs} is
the first work that addresses the sublinear-time case.

Finally, note we follow \cite{modanese23_sublinear-time_stacs} in using the term
\enquote{probabilistic} (due to their similarity to probabilistic Turing
machines) to refer to these automata and treat them separately from the more
general \emph{stochastic cellular automata} in which the local transition
function may follow an arbitrary distribution (in the same spirit as the
aforementioned work by \citeauthor{rabin63_probabilistic_ic}
\cite{rabin63_probabilistic_ic}).
Unfortunately, there is no consensus on the distinction between the two terms
in the literature, and the two have been used interchangeably.
For a survey on stochastic cellular automata, see \cite{mairesse14_around_tcs}.

\subsection{Organization}

The rest of the paper is organized as follows:
In \cref{sec_preliminaries}, we recall the basic definitions and results that we
need.
The subsequent sections each cover one set of results:
\cref{sec_structure} addresses the structural result on SWBPs.
\cref{sec_general_prg,sec_hsg} contain the proofs of
\cref{thm_prg_gen_swbp,thm_hsg_swbp}, respectively.
Finally, \cref{sec_paca} covers the applications to probabilistic cellular
automata.


\section{Preliminaries}
\label{sec_preliminaries}

It is assumed the reader is familiar with basic notions of computational
complexity theory and pseudorandomness (see, e.g., the standard references
\cite{goldreich08_computational_book, arora09_computational_book,
  vadhan12_pseudorandomness_book}).

All logarithms are to base $2$.
The set of integers is denoted by $\Z$, that of non-negative integers by $\N_0$,
and that of positive integers by $\N_+$.
For a set $S$ and $n,m \in \N_+$, $S^{n \times m}$ is the set of $n$-row,
$m$-column matrices over $S$.
For $n \in \N_+$, $[n] = \{ i \in \N_0 \mid i < n \}$ is the set of the first
$n$ non-negative integers.

Symbols in words are indexed starting with one.
The $i$-th symbol of a word $w$ is denoted by $w_i$.
For an alphabet $\Sigma$ and $n \in \N_0$, $\Sigma^{\le n}$ contains the words
$w \in \Sigma^\ast$ for which $\abs{w} \le n$.
Without restriction, the empty word is not an element of any language that we
consider.

We write $U_n$ (resp., $U_{n \times m}$) for a random variable distributed
uniformly over $\binalph^n$ (resp., $\binalph^{n \times m}$).
For a random variable $X$, $\Supp(X)$ denotes the support of $X$.
We will need the following variant of the Chernoff bound (see, e.g.,
\cite{vadhan12_pseudorandomness_book}):

\begin{theorem}%
  \label{thm_chernoff}%
  Let $X_1,\dots,X_n$ be independently and identically distributed Bernoulli
  variables and $\mu = \Exp[X_i]$.
  Then there is a constant $c > 0$ such that the following holds for every
  $\eps > 0$:
  \[
    \Pr\left[ \abs*{\frac{\sum_{i=1}^n X_i}{n} - \mu} > \eps \right]
    < 2^{-cn\eps^2}.
  \]
\end{theorem}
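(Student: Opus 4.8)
The plan is to prove this by the standard moment-generating-function (Bernstein--Chernoff) method, which reduces the tail estimate to a one-parameter optimization. Write $S = \sum_{i=1}^n X_i$. It suffices to bound each of the two one-sided deviations $\Pr[S/n - \mu > \eps]$ and $\Pr[\mu - S/n > \eps]$ by $2^{-c'n\eps^2}$ for a suitable constant $c' > 0$, since the event in the statement is the union of these two and the resulting factor of $2$ can be folded into the constant. (This folding is harmless in the only regime where the right-hand side is a meaningful bound, namely $n\eps^2 = \Omega(1)$; for smaller $n\eps^2$ the right-hand side is close to $1$ and the inequality is never actually invoked, while for $\eps \ge 1$ the left-hand side is simply $0$ because the empirical mean and $\mu$ both lie in $[0,1]$.) Moreover, by symmetry -- replacing each $X_i$ with $1 - X_i$, which is again Bernoulli, now with mean $1-\mu$ -- it is enough to handle the upper tail.

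For the upper tail, fix $\lambda > 0$ and apply Markov's inequality to the nonnegative random variable $e^{\lambda S}$, using independence:
\[
  \Pr[S - n\mu > n\eps]
  = \Pr\bigl[e^{\lambda S} > e^{\lambda(n\mu + n\eps)}\bigr]
  \le e^{-\lambda(n\mu + n\eps)} \prod_{i=1}^n \Exp\bigl[e^{\lambda X_i}\bigr].
\]
The key ingredient is Hoeffding's lemma in the special case of a random variable supported in $[0,1]$: if $\Exp[X] = \mu$ then $\Exp\bigl[e^{\lambda X}\bigr] \le e^{\lambda\mu + \lambda^2/8}$. For Bernoulli $X$ this is elementary, following from the identity $\Exp[e^{\lambda X}] = 1 - \mu + \mu e^\lambda$ together with the bound $1 - \mu + \mu e^\lambda \le e^{\mu(e^\lambda - 1)}$ and a second-order estimate, or alternatively from convexity of $t \mapsto e^{\lambda t}$ on $[0,1]$ followed by a Taylor expansion of the logarithm of the resulting bound. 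Substituting, the right-hand side above becomes $e^{-\lambda n\eps + n\lambda^2/8}$; choosing $\lambda = 4\eps$ minimizes the exponent and yields $\Pr[S/n - \mu > \eps] \le e^{-2n\eps^2}$.

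Combining the two tails gives $\Pr[\,\abs*{S/n - \mu} > \eps\,] \le 2\,e^{-2n\eps^2}$; changing the base from $e$ to $2$ and absorbing the leading factor $2$ produces the claimed bound $2^{-cn\eps^2}$ for an absolute constant $c > 0$ (concretely, any $c < 2/\ln 2$ works once $n\eps^2$ exceeds a constant depending only on $c$). I do not expect any genuine obstacle here: this is a textbook fact, reproved for completeness and cited to the standard reference. The only point requiring mild care is bookkeeping of the constant -- the change of base and the union-bound factor of $2$ -- which is why it is convenient to note at the outset that the bound is only ever used in the regime $n\eps^2 = \Omega(1)$.
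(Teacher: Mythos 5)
The paper does not prove this statement: it is quoted as a standard fact and attributed to a textbook reference, so there is no internal proof to compare yours against. Your argument is the canonical one (Markov's inequality applied to $e^{\lambda S}$, Hoeffding's lemma for the moment generating function of a $[0,1]$-valued variable, optimization over $\lambda$), and it is correct; it is exactly what the cited reference does.

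One point worth making explicit rather than leaving parenthetical: as literally written, with a strict inequality holding for \emph{every} $\eps>0$ and a single absolute constant $c$, the bound is false in degenerate regimes --- e.g.\ $n=1$, $\mu=1/2$, $\eps=0.1$ gives a left-hand side equal to $1$ and a right-hand side below $1$. So the absorption of the union-bound factor $2$ and the change of base is not mere bookkeeping; the statement genuinely only holds (and is only ever invoked in the paper) when $n\eps^2$ exceeds an absolute constant. You flag this correctly, but since you are supplying a proof of the theorem as stated, you should either restrict the statement to that regime or weaken the conclusion to $2\cdot 2^{-cn\eps^2}$ (equivalently $2^{-\Omega(n\eps^2)}$), which is how the standard references phrase it. With that caveat recorded, the proof is complete.
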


\subparagraph{Hash functions.}
For $N,M \in \N_+$, a family $H = \{ h\colon [N] \to [M] \}$ of functions is
said to be \emph{pairwise independent} if, for $x_1,x_2 \in [N]$ with $x_1
\neq x_2$ and $h$ chosen uniformly from $H$, the random variables $h(x_1)$ and
$h(x_2)$ are independent and uniformly distributed.
Equivalently, $H$ is pairwise independent if for arbitrary $y_1,y_2 \in [M]$
we have
\[
  \Pr\left[ h(x_1)=y_1 \land h(x_2)=y_2 \right] = \frac{1}{M^2}.
\]
It is a well-known fact (see, e.g., \cite{vadhan12_pseudorandomness_book}) that
there is a family $H$ of pairwise independent functions such that one can
uniformly sample a function $h$ from $H$ with $O(\log N + \log M)$ bits.

\subparagraph{Extractors.}
Let $n \in \N_+$, and let $X$ and $Y$ be random variables taking values in
$\binalph^n$.
Then the \emph{statistical distance} between $X$ and $Y$ is
\[
  \Delta(X,Y) = \frac{1}{2}\sum_w \abs*{\Pr[X=w] - \Pr[Y=w]}.
\]
The \emph{min-entropy} $\Hmin(X)$ of $X$ is defined by
\[
  \Hmin(X) = \min_{w \in \Supp(X)} \log\frac{1}{\Pr[X = w]}.
\]
For $k \le n$, if $\Hmin(X) \ge k$, then $X$ is said to be a \emph{$k$-source}.
For $d,m \in \N_+$ and $\eps > 0$, a \emph{$(k,\eps)$-extractor} is a function
$\Ext\colon\binalph^n \times \binalph^d \to \binalph^m$ such that, for every
$k$-source $X$, we have that $\Delta(\Ext(X,U_d),U_m) \le \eps$.
In this context, the second argument of $\Ext$ is its \emph{seed} and,
correspondingly, $d$ is its \emph{seed length}.


\section{De Bruijn Graphs Fully Characterize SWBPs}
\label{sec_structure}

In this section, we prove:

\restatethmCharSWBP*

\begin{proof}
  We prove first the forward implication.
  Let $S$ be an SWBP of window size $t$.
  The $\alpha_i$ are defined recursively, the basis case being
  $\alpha_0(\lambda)$, which is set to be initial state of $S$.
  Having defined $\alpha_i$ for $i < n$, we set $\alpha_{i+1}$ so that
  \[
    \alpha_{i+1}(wy) = \begin{cases}
      S_i(\alpha_i(w),y), & i < t; \\
      S_i(\alpha_i(xw),y), & i \ge t
    \end{cases}
  \]
  for $x,y \in \binalph$ and $w \in \binalph^{k_i-1}$ (thus automatically
  satisfying the requirements in the statement of the theorem).

  It remains to show that the $\alpha_i$ are well-defined, which we shall prove
  by induction.
  The induction basis $i=0$ is trivial, and the induction step for $i < t$ 
  follows easily from applying the induction hypothesis and $S_i$ being a
  function.
  Hence, suppose $i \ge t$.
  Letting $w \in \binalph^{t-1}$ and $y \in \binalph$, we shall show
  $S_i(\alpha_i(0w),y) = S_i(\alpha_i(1w),y)$.
  Using the induction hypothesis and the properties of the $\alpha_i$, there
  are states $q_0$ and $q_1$ in the $(i-t+1)$-th layer of $S$ so that
  $S_{i-t+1}(q_x,w) = \alpha_i(xw)$ for $x \in \binalph$.
  Thus, by the sliding-window property of $S$ (and, again, by the properties of
  the $\alpha_i$),
  \[
    S_i(\alpha_i(0w),y)
    = S_{i-t+1}(q_0,wy)
    = S_{i-t+1}(q_1,wy)
    = S_i(\alpha_i(1w),y).
  \]

  For the converse implication, let $\alpha_i$ as in the statement be given.
  Then we argue that, for any $i$ and $z \in \binalph^t$ as well as any states
  $q$ and $q'$ that are reachable in the $i$-th layer of $S$, $S_i(q,z) =
  S_i(q',z)$ holds.
  This is simple to see by induction provided that $q$ and $q'$ are in the
  image of $\alpha_i$.
  To see that this is indeed the case, note that, if $q$ is reachable in the
  $i$-th layer by an input $wz \in \binalph^i$ with $\abs{z} = t$, then
  $q = \alpha_i(z)$ (again, due to the sliding-window property) or, in case
  $\abs{z} = 0$ and $i < t$, $q = \alpha_i(w)$.
  In either case, the claim follows.
\end{proof}


\section{Pseudorandom Generators for SWBPs}
\label{sec_general_prg}

In this section, we recall and prove:

\restatethmPRGGenSWBP*

As mentioned in \cref{sec_intro_tech_overview}, in our construction we will
first use $G\base$ to obtain a PRG that simultaneously fools as many unanimity
programs as possible.
Then we use interleaving to convert the resulting construction into a PRG for
SWBPs.
We address the two steps in this order.

\subsection{Simultaneous Fooling}

\begin{definition}%
  \label{def_simul_fool}%
  Let $m,t \in \N_+$ and $\eps > 0$, and let $\mathcal{F}$ be a class of
  functions $f\colon \binalph^t \to \binalph$.
  We say a distribution $X = (X_1,\dots,X_m)$ over $(\binalph^t)^m$
  \emph{$m$-simultaneously $\eps$-fools} $\mathcal{F}$ if the following holds
  for every $f_1,\dots,f_m \in \mathcal{F}$:
  \[
    \abs*{\Pr[\forall i \in [m]: f_i(X_i) = 1]
      - \prod_{i=1}^m \Pr[f_i(U_t) = 1]} \le \eps.
  \]
  Similarly, we say a function $G\colon \binalph^d \to (\binalph^t)^m$
  \emph{$m$-simultaneously $\eps$-fools} $\mathcal{F}$ if $G(U_d)$ fools
  $\mathcal{F}$.
\end{definition}

As already discussed in \cref{sec_intro_tech_overview}, we cover two
possibilities for obtaining a PRG that simultaneously fools unanimity programs
from our base PRG $G\base$.
The first of these bases on the (main idea behind the) INW generator
\cite{impagliazzo94_pseudorandomness_stoc}, which is to stretch the output of
$G\base$ by repeatedly generating a fresh seed for it using an extractor with
adequate parameters.
The second relies on using PRGs that fool combinatorial rectangles to generate
multiple good seeds that are plugged directly into $G\base$.

\subsubsection{Using the INW Generator}

The following is due to the work of \citeauthor{goldreich97_tiny_rsa}
\cite{goldreich97_tiny_rsa} (see also \cite{vadhan12_pseudorandomness_book}):

\begin{theorem}[\cite{goldreich97_tiny_rsa}]%
  \label{thm_gw_extractor}%
  For every $n,k \in \N_+$ and $\eps > 0$, there is a $(k,\eps)$ extractor
  $\Ext\colon\binalph^n \times \binalph^d \to \binalph^n$ with seed length
  $d = O(n - k + \log(1/\eps))$ that is computable in $O(n + d)$ space.
\end{theorem}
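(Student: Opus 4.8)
The plan is to realize $\Ext$ as a ``masking'' of the source by a short-seed pseudorandom string: set $\Ext(x,s) = x \oplus g(s)$ for a suitable explicit $g\colon \binalph^d \to \binalph^n$. Writing $\Delta = n-k$ for the entropy deficiency of the source, the goal is then to choose $g$ so that, for every $k$-source $X$, the distribution $X \oplus g(U_d)$ is $\eps$-close to $U_n$ while $d = O(\Delta + \log(1/\eps))$. (Observe that the output length is then automatically $n$, as the statement requires.)

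I would carry out the analysis by Fourier analysis over $\Z_2^n$. Let $\widehat{p_X}$ (resp.\ $\widehat{p_Y}$, where $Y = g(U_d)$) denote the Fourier transform of the density of $X$ (resp.\ $Y$) relative to the uniform distribution, so that $\widehat{p_X}(0) = \widehat{p_Y}(0) = 1$. By the convolution theorem the density of $X \oplus Y$ has Fourier coefficients $\widehat{p_X}(\alpha)\,\widehat{p_Y}(\alpha)$, and hence
\[
  \Delta(X \oplus Y, U_n)
  \le \tfrac12\sqrt{\textstyle\sum_{\alpha \neq 0}\widehat{p_X}(\alpha)^2\,\widehat{p_Y}(\alpha)^2}.
\]
Parseval's identity combined with the min-entropy bound $\Pr[X = x] \le 2^{-k}$ gives $\sum_\alpha \widehat{p_X}(\alpha)^2 = 2^n \sum_x \Pr[X = x]^2 \le 2^{n-k} = 2^\Delta$, so if $Y$ is $\gamma$-biased---i.e.\ $\abs{\widehat{p_Y}(\alpha)} \le \gamma$ for all $\alpha \neq 0$---the right-hand side is at most $\tfrac12\gamma\,2^{\Delta/2}$, which is $\le \eps$ as soon as $\log(1/\gamma) = \Delta/2 + \log(1/\eps) + O(1)$. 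Plugging a standard explicit $\gamma$-biased generator in for $g$ then yields a $(k,\eps)$-extractor with seed length $O(\log n + \Delta + \log(1/\eps))$, computable in $O(n + d)$ space.

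The one genuinely delicate point is shaving the spurious $\log n$ term down to the advertised $O(\Delta + \log(1/\eps))$: any $\gamma$-biased distribution over $\binalph^n$ must have support $\Omega(n)$, so a single global small-bias mask provably costs $\Omega(\log n)$ seed bits and by itself cannot reach the claimed bound. Here I would invoke---or reconstruct---the ``tiny hash family'' construction of \citeauthor{goldreich97_tiny_rsa}: build the masking family recursively by splitting the $n$ coordinates into two halves, composing independent sub-families for the two halves together with a short ``mixing'' step, and allocating the error budget geometrically across the $O(\log n)$ levels of recursion so that the per-level seed costs telescope to $O(\Delta + \log(1/\eps))$ rather than blowing up by a factor of $\log n$. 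Explicitness and the $O(n+d)$-space bound are inherited from the explicit, small-space components used at each level. I expect this recursive accounting---arranging that the deficiencies and errors accumulate across levels without reintroducing a $\log n$ factor---to be the main obstacle; the Fourier estimate above is routine.
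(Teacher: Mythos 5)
There is a genuine gap, and you have in fact put your finger on it yourself without realizing it is fatal to your framework rather than merely to the ``single global mask'' instantiation. Any extractor of the masking form $\Ext(x,s) = x \oplus g(s)$ cannot achieve the stated seed length, even for deficiency $\Delta = n-k = 1$ and constant $\eps$: take $X$ uniform on a hyperplane $\{x : \langle \alpha, x\rangle = 0\}$ for some $\alpha \neq 0$ (an $(n-1)$-source). Then the bias of $X \oplus g(U_d)$ at $\alpha$ equals the bias of $g(U_d)$ at $\alpha$, so for the output to be $\eps$-close to uniform the distribution $g(U_d)$ must be $O(\eps)$-biased at \emph{every} nonzero $\alpha$. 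In particular its support must span $\F_2^n$, forcing $d \geq \log_2 n$, whereas the theorem demands $d = O(\Delta + \log(1/\eps)) = O(1)$ in this regime. No recursive coordinate-splitting or error-budget accounting can rescue a construction whose final form is an XOR mask; the obstruction is information-theoretic, not an artifact of using one small-bias generator. (Equivalently: your construction is a random step in a Cayley graph over $\Z_2^n$, and abelian Cayley graphs over $\Z_2^n$ cannot be constant-degree expanders.)

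The actual Goldreich--Wigderson construction escapes this by taking a random neighbor in an explicit \emph{constant-degree} expander $H$ on vertex set $\binalph^n$ (e.g.\ Gabber--Galil or a Ramanujan graph), which need not be a Cayley graph over $\Z_2^n$: one sets $\Ext(x,s)$ to be the $s$-th neighbor of $x$ in $H^\ell$ for $\ell = O(\Delta + \log(1/\eps))$, so that the second eigenvalue is $\lambda \le \eps\, 2^{-\Delta/2}$ while the degree is only $2^{O(\ell)}$, independent of $n$. Your second-moment computation is then exactly the right analysis --- it is the expander mixing / smoothing lemma, with $\lambda$ playing the role of your bias parameter $\gamma$ and the bound $\sum_{\alpha\ne 0}\widehat{p_X}(\alpha)^2 \le 2^{\Delta}$ replaced by the same $\ell_2$ bound $\|p_X\|_2^2 \le 2^{-k}$ on the source --- so the Fourier part of your write-up transfers essentially verbatim once you replace ``eigenvectors are characters of $\Z_2^n$'' by ``eigenvectors of the expander's adjacency operator.'' Explicitness and the $O(n+d)$ space bound then follow because neighbor computation in these expanders is local arithmetic on $n$-bit words.
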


We proceed as previously described in \cref{sec_intro_tech_overview}.
Starting from a given generator $G$, we use the extractor to generate a fresh
seed for $G$ using \enquote{just a couple more} random bits and then simply
concatenate the two ouptuts.

\begin{lemma}%
  \label{lem_prg_swbp_ind_step}%
  Let $G\colon \binalph^d \to (\binalph^t)^m$ be a function that
  $m$-simultaneously $\eps$-fools width-$w$, length-$t$ unanimity programs.
  Then there is a function $G'\colon \binalph^{d'} \to (\binalph^t)^{2m}$ with
  $d' = d + O(\log(1/\eps))$ that $2m$-simultaneously $3\eps$-fools width-$w$,
  length-$t$ unanimity programs.
  In addition, if $G$ is explicit, then so is $G'$.
\end{lemma}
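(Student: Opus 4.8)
The construction is exactly the one sketched in the technical overview. Given $G\colon \binalph^d \to (\binalph^t)^m$, I will define $G'\colon \binalph^{d'} \to (\binalph^t)^{2m}$ with $d' = d + O(\log(1/\eps))$ as follows. Invoke \cref{thm_gw_extractor} to get a $(k,\eps)$-extractor $\Ext\colon \binalph^d \times \binalph^a \to \binalph^d$ where $k = d - O(1)$ and $a = O(1 - 0 + \log(1/\eps)) = O(\log(1/\eps))$ is the seed length; this is the point where I need to know how much entropy the first $m$ programs leak about the seed, which is only a constant. On input $(s,s'') \in \binalph^d \times \binalph^a$, set $G'(s,s'') = G(s) \, G(\Ext(s,s''))$, i.e.\ the concatenation of the two blocks of $m$ interleaved strings of length $t$ each. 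So $d' = d + a = d + O(\log(1/\eps))$, and explicitness of $G'$ follows from that of $G$ and the space bound on $\Ext$.

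**Correctness.** Fix $2m$ width-$w$, length-$t$ unanimity programs $P_1,\dots,P_{2m}$. Write $\mathcal{A}$ for the event that $P_1,\dots,P_m$ all accept their respective inputs $G(s)_1,\dots,G(s)_m$, and $\mathcal{B}$ for the event that $P_{m+1},\dots,P_{2m}$ all accept their inputs $G(\Ext(s,s''))_1,\dots,G(\Ext(s,s''))_m$. Let $p = \prod_{i=1}^m \Pr[P_i(U_t)=1]$ and $q = \prod_{i=m+1}^{2m}\Pr[P_i(U_t)=1]$. I want to show $\abs{\Pr[\mathcal{A}\cap\mathcal{B}] - pq} \le 3\eps$. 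The first step is that since $G$ $m$-simultaneously $\eps$-fools width-$w$, length-$t$ unanimity programs, $\abs{\Pr[\mathcal{A}] - p} \le \eps$. If $\Pr[\mathcal{A}] < \eps$, then $\Pr[\mathcal{A}\cap\mathcal{B}] < \eps$ and $pq < 2\eps$, so the bound holds trivially (with room to spare). Otherwise condition on $\mathcal{A}$: the distribution $(s \mid \mathcal{A})$ has min-entropy at least $\Hmin(U_d) - \log(1/\Pr[\mathcal{A}]) \ge d - \log(1/\eps) \ge d - O(\log(1/\eps))$; I should choose the extractor's entropy parameter $k$ to match this (so $k = d - O(\log(1/\eps))$ and still $a = O(\log(1/\eps))$). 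Then $\Ext((s\mid\mathcal{A}), U_a)$ is $\eps$-close in statistical distance to $U_d$, so $G(\Ext((s\mid\mathcal{A}),U_a))$ is $\eps$-close to $G(U_d)$, and hence
\[
  \abs*{\Pr[\mathcal{B}\mid\mathcal{A}] - \Pr[\text{$P_{m+1},\dots,P_{2m}$ all accept $G(U_d)$}]} \le \eps,
\]
while the latter probability is within $\eps$ of $q$ by the simultaneous-fooling property of $G$ applied to $P_{m+1},\dots,P_{2m}$. Combining, $\abs{\Pr[\mathcal{B}\mid\mathcal{A}] - q}\le 2\eps$. Finally $\Pr[\mathcal{A}\cap\mathcal{B}] = \Pr[\mathcal{A}]\Pr[\mathcal{B}\mid\mathcal{A}]$, and a routine triangle-inequality estimate
\[
  \abs{\Pr[\mathcal{A}]\Pr[\mathcal{B}\mid\mathcal{A}] - pq}
  \le \abs{\Pr[\mathcal{A}] - p}\cdot\Pr[\mathcal{B}\mid\mathcal{A}] + p\cdot\abs{\Pr[\mathcal{B}\mid\mathcal{A}] - q}
  \le \eps\cdot 1 + 1\cdot 2\eps = 3\eps
\]
gives the claim.

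**Main obstacle.** The only genuinely nontrivial point is the entropy bookkeeping: I must argue that conditioning the seed $s$ on the $m$-fold acceptance event $\mathcal{A}$ costs at most $O(\log(1/\eps))$ bits of min-entropy, which is exactly what lets the extractor seed length stay $O(\log(1/\eps))$ rather than blowing up. This is immediate once we note $\Pr[\mathcal{A}] \ge \eps$ in the nontrivial case — so $\log(1/\Pr[\mathcal{A}]) \le \log(1/\eps)$ — but it is worth stating carefully, and it is the step where the structure of the problem (acceptance is a single bit of information per program, but we only need the conjunction) is genuinely used. Everything else is the standard INW-style concatenation-plus-extractor argument and a couple of triangle inequalities, which I would not belabor.
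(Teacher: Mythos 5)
Your proof is correct and follows essentially the same route as the paper: the same extractor-based construction $G'(s,s'') = G(s)\,G(\Ext(s,s''))$, the same case split on whether the first block's acceptance probability is below a small multiple of $\eps$, and the same min-entropy bookkeeping showing the conditioned seed is a $(d-\log(1/\eps))$-source. Your mid-proof self-correction from $k = d - O(1)$ to $k = d - O(\log(1/\eps))$ is exactly the right call and matches the paper's choice of extractor parameters.
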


\begin{proof}
  Let $\Ext\colon \binalph^d \times \binalph^{d_{\Ext}} \to \binalph^d$ be the
  $(k,\eps)$ extractor of \cref{thm_gw_extractor} where $k = d - \log(1/\eps)$
  and $d_{\Ext} = O(\log(1/\eps))$.
  We set
  \[
    G'(s_G, s_{\Ext}) = G(s_G) G(\Ext(s_G,s_{\Ext}))
  \]
  and denote the $i$-th component in the output of $G'$ by $G'(\cdot)_i$.
  Now let $P_1,\dots,P_{2m}$ be width-$w$, length-$t$ unanimity programs.
  For $j \in \{1,2\}$, let
  \[
    A_j = \{ x_1 \cdots x_m \in \binalph^{mt} \mid
      \text{$x_1,\dots,x_m \in \binalph^t$
        and $\forall i \in [m]: P_{(j-1)m+i}(x_i) = 1$}
    \}
  \]
  and $\mu_j = \mu(A_j)$.
  Observe that $\mu_1\mu_2 = \prod_{i=1}^{2m} \Pr[P_i(U_t) = 1]$.
  If $\mu_1 < 2\eps$, then we immediately have a distance of at most $3\eps$
  between $\mu_1\mu_2 < 2\eps$ and
  \[
    \Pr[\forall i \in [2m]: P_i(G'(U_d)_i) = 1]
    \le \Pr[G(U_d) \in A_1]
    \le \mu_1 + \eps < 3\eps.
  \]
  Hence, suppose that $\mu_1 \ge 2\eps$.
  Observe that, by assumption on $G$, this means
  \[
    \Pr[U_d = x \mid G(U_d) \in A_1]
    \le \frac{\Pr[U_d = x]}{\Pr[G(U_d) \in A_1]}
    \le \frac{2^{-d}}{\mu_1 - \eps}
    \le \frac{2^{-d}}{\eps}
  \]
  for $x \in \binalph^d$.
  In particular, this implies that, if $Z$ is a random variable that is
  distributed according to $\Pr[Z = x] = \Pr[U_d = x \mid G(U_d) \in A_1]$, then
  $\Hmin(Z) \ge k$.
  Thus, by the extraction property,
  \begin{align*}
    \ifconf\MoveEqLeft\fi
    \abs*{\Pr[\forall i \in [2m]: P_i(G'_i(U_d)) = 1] - \mu_1 \mu_2}
    \ifconf\\\fi
    &= \abs*{\Pr[G(U_d) \in A_1 \land G(\Ext(U_d,U_{d_{\Ext}})) \in A_2]
      - \mu_1 \mu_2} \\
    &\le \abs*{\Pr[G(\Ext(U_d,U_{d_{\Ext}})) \in A_2 \mid G(U_d) \in A_1] -
      \mu_2} + \eps \\
    &\le \abs*{\Pr[G(U_d) \in A_2] - \mu_2} + 2\eps \\
    &\le 3\eps.
    \qedhere
  \end{align*}
\end{proof}

Now starting from a PRG that fools (single) width-$w$, length-$t$ unanimity
programs and repeating $r$ times the construction of
\cref{lem_prg_swbp_ind_step}, we obtain:

\begin{lemma}%
  \label{lem_prg_swbp_composed}%
  Let $G\colon \binalph^d \to \binalph^t$ be a function that $\eps$-fools
  width-$w$, length-$t$ unanimity programs.
  For every $r > 0$, there is a function $G'\colon \binalph^{d'} \to
  (\binalph^t)^{2^r}$ with $d' = d + O(r\log(1/\eps))$ that $2^r$-simultaneously
  $3^r \eps$-fools width-$w$, length-$t$ unanimity programs.
  In addition, if $G$ is explicit, then so is $G'$.
\end{lemma}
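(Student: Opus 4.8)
The plan is a straightforward induction on $r$, with \cref{lem_prg_swbp_ind_step} supplying the inductive step. For the base case $r = 0$, note that, unwinding \cref{def_simul_fool} with $m = 1$, a distribution $1$-simultaneously $\eps$-fools a class $\mathcal{F}$ precisely when it $\eps$-fools $\mathcal{F}$ in the usual sense; hence the given $G$ already witnesses the claim, with seed length $d$ (matching $d + O(r\log(1/\eps))$ at $r = 0$), and it is explicit by hypothesis.

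For the inductive step, suppose we have an explicit $G_r\colon \binalph^{d_r} \to (\binalph^t)^{2^r}$ that $2^r$-simultaneously $\eps_r$-fools width-$w$, length-$t$ unanimity programs, where $\eps_r = 3^r\eps$ and $d_r = d + O(r\log(1/\eps))$. Applying \cref{lem_prg_swbp_ind_step} to $G_r$ --- with the lemma's error parameter instantiated as $\eps_r$ --- produces an explicit $G_{r+1}\colon \binalph^{d_{r+1}} \to (\binalph^t)^{2^{r+1}}$ that $2^{r+1}$-simultaneously $3\eps_r$-fools the same class, with $d_{r+1} = d_r + O(\log(1/\eps_r))$. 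Since $3\eps_r = \eps_{r+1}$, the number of output blocks and the error evolve exactly as claimed, and each step preserves explicitness, so that $r$ such steps compose into a single linear-space procedure (the extractor of \cref{thm_gw_extractor} being computable in the required space). It therefore only remains to control the seed length.

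Summing the increments gives $d_r = d + \sum_{j=0}^{r-1} O(\log(1/\eps_j))$. Here the key observation --- and the only point that needs any care --- is that the error \emph{grows} across iterations: $\eps_j = 3^j\eps \ge \eps$, so $\log(1/\eps_j) \le \log(1/\eps)$ for every $j$ (and whenever $\eps_j \ge 1$ the corresponding fooling statement is vacuous, so nothing is claimed in that regime). Consequently each summand is $O(\log(1/\eps))$ and the sum is $O(r\log(1/\eps))$, yielding $d' = d_r = d + O(r\log(1/\eps))$. This is precisely what keeps the total at $O(r\log(1/\eps))$ rather than something like $O(r^2 + r\log(1/\eps))$; no genuine obstacle arises beyond this accounting.
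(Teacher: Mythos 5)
Your proof is correct and is exactly the argument the paper intends: the paper obtains \cref{lem_prg_swbp_composed} by "repeating $r$ times the construction of \cref{lem_prg_swbp_ind_step}" without writing out the induction, and your write-up simply fills in that induction, including the one point that needs care --- that the per-step seed increment is $O(\log(1/\eps_j))$ with $\eps_j = 3^j\eps \ge \eps$, so the total is $O(r\log(1/\eps))$.
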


\subsubsection{Using PRGs for Combinatorial Rectangles}

There has been extensive work on constructing PRGs that fool combinatorial
rectangles.
To obtain the best possible parameters, we will use two different constructions.
The first one is from a classical work by \citeauthor{lu02_improved_comb}:

\begin{theorem}[\cite{lu02_improved_comb}]
  \label{thm_lu_prg}
  For every $n,m \in \N_+$ and $\eps\CR > 0$, there is an explicit PRG
  $G\CR\colon \binalph^{d\CR} \to (\binalph^m)^n$ that $\eps\CR$-fools any
  $(n,m)$-combinatorial rectangle and whose seed length is
  \[
    d\CR = O(m + \log n + (\log(1/\eps\CR))^{3/2}).
  \]
\end{theorem}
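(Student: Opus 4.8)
\medskip

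The statement is quoted from \citeauthor{lu02_improved_comb}~\cite{lu02_improved_comb}; below I outline the route I would take toward a bound of this shape. Fix a target $(n,m)$-combinatorial rectangle $f = \prod_{i=1}^n f_i$, write $p_i = \Pr[f_i(U_m) = 1]$ and $p = \prod_{i=1}^n p_i$, and observe first that it suffices to $\eps\CR$-approximate $p$ in the regime $p \ge \eps\CR$: when $p < \eps\CR$, a hybrid over prefixes — pass to the longest prefix whose partial product is still $\ge \eps\CR$, using that the generator restricted to any subset of blocks fools the corresponding sub-rectangle — bounds the generator's acceptance probability by $O(\eps\CR)$ as well. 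The payoff of working in the regime $p \ge \eps\CR$ is that \emph{every} sub-rectangle then has acceptance probability at least $\eps\CR$, which in turn bounds by $O(\log(1/\eps\CR))$ the min-entropy deficiency incurred whenever one conditions on a sub-rectangle accepting.

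A natural first attempt is the INW-style recursion already used in \cref{lem_prg_swbp_ind_step}: to lift a generator $G$ for $k$-block rectangles to one for $2k$-block rectangles, output $G(s)\,G(\Ext(s,s'))$ for a Goldreich--Wigderson extractor $\Ext$ (\cref{thm_gw_extractor}) tuned to min-entropy deficiency $O(\log(1/\eps\CR))$ — legitimate precisely by the observation above — fed with $O(\log(1/\eps\CR))$ fresh bits $s'$. Starting from the trivial single-block generator (seed length $m$) and iterating $\log n$ times yields seed length $O(m + \log n \cdot \log(1/\eps\CR))$. This is only the \enquote{Nisan level} and in particular does not isolate an additive $\log^{3/2}(1/\eps\CR)$ term: the error budget forces every one of the $\log n$ composition steps to pay $\Omega(\log(1/\eps\CR))$ bits, so a straightforward recursion cannot do better.

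Obtaining the additive split $O(\log n + \log^{3/2}(1/\eps\CR))$ therefore requires a genuinely different and more delicate construction, and this is the part I expect to be the main obstacle. Roughly, one wants to avoid the depth-$\log n$ binary recursion altogether: exploiting that (in the regime $p \ge \eps\CR$) the blocks can be grouped into only $O(\log(1/\eps\CR))$ \enquote{super-blocks}, each of constant acceptance probability, one fools each super-block with a cheap sub-generator — built from limited independence together with a Chernoff-type estimate (\cref{thm_chernoff}) for the leftover slack — and glues the super-blocks using a small-bias / discrepancy-set argument rather than a full extractor. The exponent $3/2$ then emerges from balancing the $O(\log(1/\eps\CR))$ super-blocks against a per-super-block cost of order $\sqrt{\log(1/\eps\CR)}$, while the $\log n$ and $m$ terms account respectively for addressing a block and for describing a point inside one. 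The genuinely delicate point is the error accounting — the errors of the many sub-objects multiply in the block count yet must still sum to $O(\eps\CR)$ — and closing this with the claimed seed length is where I expect the real work to lie; this is essentially the content of \citeauthor{lu02_improved_comb}'s argument, building on the discrepancy-set method of Armoni--Saks--Wigderson--Zhou.
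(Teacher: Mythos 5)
The paper does not prove this statement: \cref{thm_lu_prg} is imported verbatim from \cite{lu02_improved_comb} and used purely as a black box in \cref{lem_prg_swbp_composed_cr}, so there is no internal proof to compare yours against. Your treatment is appropriate for a cited external result: the reduction to the regime $p \ge \eps\CR$ via a hybrid over prefixes is standard and correct, and your observation that the INW-style recursion of \cref{lem_prg_swbp_ind_step} only reaches seed length $O(m + \log n \cdot \log(1/\eps\CR))$ --- hence cannot by itself yield the additive $\log n + (\log(1/\eps\CR))^{3/2}$ split --- correctly identifies why the theorem genuinely needs Lu's machinery rather than the techniques already present in this paper. Your final paragraph's account of where the exponent $3/2$ comes from is avowedly speculative and should not be read as a proof, but since you explicitly defer the real work to \cite{lu02_improved_comb} (building on Armoni--Saks--Wigderson--Zhou), nothing in the paper's use of the theorem depends on filling that gap.
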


The seed length has optimal dependence on $m$, but it is not as good for smaller
values of the error parameter $\eps\CR$.
(The dependence on $n$ is already optimal for our purposes.)
Hence we also make use of a more recent construction by
\citeauthor{gopalan20_concentration_toc}, which gives a better dependence on
$\eps\CR$ at the cost of an additional polylogarithmic factor:

\begin{theorem}[\cite{gopalan20_concentration_toc}]
  \label{thm_gy_prg}
  For every $n,m \in \N_+$ and $\eps\CR > 0$, there is an explicit PRG
  $G\CR\colon \binalph^{d\CR} \to (\binalph^m)^n$ that $\eps\CR$-fools any
  $(n,m)$-combinatorial rectangle and whose seed length is
  \[
    d\CR = O((m + \log\log n + \log(1/\eps\CR)) \log(m + \log(1/\eps\CR))).
  \]
\end{theorem}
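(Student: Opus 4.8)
The plan is to realize $G\CR$ as a combination of a limited-independence distribution and a small-bias distribution, analyzed through the discrete Fourier transform, and then to apply a recursive ``tensoring'' step that brings the dependence on the number of blocks $n$ down from $\log n$ to $\log\log n$. The one elementary fact I would set up first is that products of $[0,1]$-valued quantities are stable under perturbation: if $a_j,b_j\in[0,1]$ with $\abs{a_j-b_j}\le\delta$ for $j\le k$, then $\abs{\prod_j a_j-\prod_j b_j}\le k\delta$ by telescoping; this is what lets me patch together per-block approximations.

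For the non-recursive core, I would take the output distribution on $(\binalph^m)^n$ to be the coordinate-wise XOR of a distribution in which any $k$ of the blocks are $\eps'$-close to jointly uniform and a $\delta$-biased distribution on the $mn$ underlying bits, and argue that it $\eps$-fools every $(n,m)$-combinatorial rectangle provided $k=\Theta(m+\log(1/\eps))$ and $\eps',\delta$ are polynomially small in $\eps$. Expanding $f=\prod_i f_i$ in the Fourier basis (the blocks being disjoint, every monomial is a single character on the $mn$ bits), one separates the monomials according to how many distinct blocks they touch: those touching at most $k$ blocks are reproduced exactly by the almost-$k$-wise part, while the remaining contribution is controlled by a concentration estimate for the elementary symmetric polynomials of the values the $f_i$ take on the pseudorandom blocks — this is the ``concentration'' ingredient, and it is what forces $k$ to beat $\log(1/\eps)$ while $m$ enters through the per-block Fourier $\ell_1$-mass. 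Such a distribution is sampleable with $O(k(m+\log n)+\log(mn/\delta))=O((m+\log(1/\eps))(m+\log n))$ bits — already a PRG for rectangles, but still carrying a $\log n$ factor.

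To replace $\log n$ by $\log\log n$ I would recurse on $n$: partition the blocks into $\sqrt n$ groups of $\sqrt n$ blocks each, observe that each group is itself a $(\sqrt n,m)$-rectangle $\eps_{\mathrm{in}}$-fooled by the recursively built inner generator $G_{\mathrm{in}}$ of seed length $d_{\mathrm{in}}$, and note that ``group $j$ accepts'' is a Boolean function of the $d_{\mathrm{in}}$-bit seed handed to $G_{\mathrm{in}}$; hence feeding $G_{\mathrm{in}}$ the $\sqrt n$ blocks of an outer generator that $\eps_{\mathrm{out}}$-fools $(\sqrt n,d_{\mathrm{in}})$-rectangles fools the whole rectangle, with error $O(\sqrt n\,\eps_{\mathrm{in}}+\eps_{\mathrm{out}})$ by the telescoping fact. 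After $O(\log\log n)$ such halvings of $\log n$ the block count is constant, and the $\sqrt n\,\eps_{\mathrm{in}}$ blow-up is kept in check by making the inner error targets geometrically smaller down the recursion.

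The step I expect to be the real obstacle is the concentration estimate for products / elementary symmetric polynomials under $k$-wise independence, together with its optimization: it pins down the exact threshold $k=\Theta(m+\log(1/\eps))$ and hence the leading term of the seed length, and squeezing the argument so that the final bound loses only a single multiplicative $\log(m+\log(1/\eps))$ factor — rather than compounding the per-level overhead $\log\log n$ times, or letting a $\log n$ re-enter through the shrinking error budgets of the recursion — forces a very careful choice of base generator and of how the target error is divided across the $O(\log\log n)$ levels.
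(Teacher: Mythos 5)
First, note that the paper itself offers no proof of this statement: it is imported as a black box from Gopalan--Yehudayoff, so your sketch can only be judged against that construction. Your non-recursive core is recognizably the right toolkit (XOR of a limited-independence distribution with a small-bias one, Fourier expansion of $\prod_i f_i$ split by how many blocks a monomial touches, and tail bounds for elementary symmetric polynomials to kill the high-order part with $k = \Theta(m + \log(1/\varepsilon))$) --- that is indeed the Gopalan--Yehudayoff contribution layered on the GMRTV framework.

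The genuine gap is in the recursive step, and it is fatal to the claimed seed length in two independent ways. First, your error accounting at a single level is $O(\sqrt{n}\,\varepsilon_{\mathrm{in}} + \varepsilon_{\mathrm{out}})$, so already at the top level you must take $\varepsilon_{\mathrm{in}} \le \varepsilon/\sqrt{n}$, i.e.\ $\log(1/\varepsilon_{\mathrm{in}}) = \Omega(\log n)$; since every candidate inner generator has seed length at least linear in $\log(1/\varepsilon_{\mathrm{in}})$, a $\log n$ term re-enters immediately and no choice of error schedule across the levels can remove it --- the union bound over groups is the problem, not its tuning. Second, the outer generator must fool $(\sqrt{n}, d_{\mathrm{in}})$-rectangles, so the block-length parameter at each level is the \emph{seed length} of the level below rather than $m$; even if each level cost only a constant factor times its block length, compounding over your $O(\log\log n)$ levels gives a $2^{O(\log\log n)} = \mathrm{polylog}(n)$ multiplicative loss on $m$ (and with your quadratic core bound $O((m+\log(1/\varepsilon))(m+\log n))$ it diverges much faster). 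The actual construction avoids both issues: it uses \emph{iterated pseudorandom restrictions} --- at each level a $k$-wise independent subset of the blocks is fixed to values from a small-bias distribution and one argues globally, via sandwiching and the symmetric-polynomial concentration, that the expected acceptance probability is preserved --- so there is no per-group union bound, the block length stays $m$ throughout, and the recursion depth is $O(\log(m + \log(1/\varepsilon)))$ (independent of $n$), with each level costing $O(m + \log\log n + \log(1/\varepsilon))$ seed. That is exactly how the product shape of the stated bound arises, and it is not recoverable from the group-decomposition recursion you propose.
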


\begin{lemma}
  \label{lem_prg_swbp_composed_cr}%
  Let $G\colon \binalph^d \to \binalph^t$ be a function that $\eps$-fools
  width-$w$, length-$t$ unanimity programs.
  For every $r \in \N_+$, there is a function $G'\colon \binalph^{d'} \to
  (\binalph^t)^r$ with either
  \[
    d' = O(d + \log r + (\log(1/\eps\CR))^{3/2})
  \]
  or
  \[
    d' = O((d + \log\log r + \log(1/\eps')) \log(d + \log(1/\eps')))
  \]
  that $r$-simultaneously $\eps'$-fools width-$w$, length-$t$ unanimity
  programs, where $\eps' = O(\eps r)$.
  In addition, if $G$ is explicit, then so is $G'$.
\end{lemma}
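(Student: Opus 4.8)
The plan is to treat each of the $r$ unanimity programs as one coordinate of a combinatorial rectangle, and then use a PRG for combinatorial rectangles to supply the $r$ seeds that drive the $r$ independent copies of $G$. Concretely, first I would fix width-$w$, length-$t$ unanimity programs $P_1,\dots,P_r$ and define $f_i\colon\binalph^d\to\binalph$ by $f_i(s) = P_i(G(s))$. Each $f_i$ is a Boolean function on $\binalph^d$, so $f(s_1,\dots,s_r) = \prod_{i=1}^r f_i(s_i)$ is an $(r,d)$-combinatorial rectangle. Let $G\CR\colon\binalph^{d'}\to(\binalph^d)^r$ be the PRG for $(r,d)$-combinatorial rectangles from either \cref{thm_lu_prg} (with $m=d$) or \cref{thm_gy_prg} (with $m=d$), with error parameter $\eps\CR$ to be fixed below. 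Then I would define
\[
  G'(\sigma) = G(G\CR(\sigma)_1)\, G(G\CR(\sigma)_2)\, \cdots\, G(G\CR(\sigma)_r),
\]
where $G\CR(\sigma)_i$ denotes the $i$-th block of $d$ bits output by $G\CR$. Plugging $m=d$ into the two seed-length bounds of \cref{thm_lu_prg,thm_gy_prg} gives exactly the two claimed expressions for $d'$ (the $\log n$ in Lu's bound becomes $\log r$, and the $\log\log n$ in Gopalan--Yehudayoff's becomes $\log\log r$). Explicitness of $G'$ follows since it is the composition of two explicit (linear-space) constructions, $G\CR$ and $G$, evaluated $r$ times.

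Next I would carry out the error analysis. The quantity to bound is
\[
  \abs*{\Pr[\forall i\in[r]: P_i(G'(U_{d'})_i) = 1] - \prod_{i=1}^r \Pr[P_i(U_t) = 1]}.
\]
I would split this via the triangle inequality at the intermediate hybrid where the $r$ seeds fed into $G$ are fully independent, i.e.\ drawn as $U_d^{(1)},\dots,U_d^{(r)}$ independently. On one side, $\bigl|\Pr[\forall i: f_i(G\CR(U_{d'})_i)=1] - \prod_{i=1}^r \Pr[f_i(U_d)=1]\bigr| \le \eps\CR$ by the combinatorial-rectangle fooling guarantee of $G\CR$. On the other side, $\Pr[f_i(U_d)=1] = \Pr[P_i(G(U_d))=1]$, and since $G$ $\eps$-fools width-$w$, length-$t$ unanimity programs, $|\Pr[P_i(G(U_d))=1] - \Pr[P_i(U_t)=1]| \le \eps$ for each $i$. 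A standard hybrid argument over the $r$ coordinates (replacing one $G(U_d)$ by a true $U_t$ at a time, each replacement changing the product of $r$ probabilities in $[0,1]$ by at most $\eps$) turns this into a loss of at most $r\eps$ when passing from $\prod_i \Pr[f_i(U_d)=1]$ to $\prod_i \Pr[P_i(U_t)=1]$. Combining, the total error is at most $\eps\CR + r\eps$. Choosing $\eps\CR = \eps r$ (which does not affect the order of the seed-length bound, since $\log(1/\eps\CR) = \log(1/\eps) - \log r \le \log(1/\eps)$, and in fact one can just take $\eps\CR = \Theta(\eps r)$ and absorb constants) yields $\eps' = O(\eps r)$, as claimed. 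For the second seed-length bound one should write $\log(1/\eps') = O(\log(1/\eps) + \log r)$ inside the formula; since the $\log r$ term there is already dominated by the $\log\log r$ term being additive under the outer log only after noting $d \ge \log t \ge \log r$ in the regime of interest, but more simply one just states the bound in terms of $\eps'$ directly as in the lemma.

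The main obstacle — really the only subtle point — is making sure the reduction to combinatorial rectangles is legitimate: the functions $f_i(s) = P_i(G(s))$ genuinely are arbitrary Boolean functions on $\binalph^d$ (no structure is assumed or needed), so the product $\prod_i f_i(s_i)$ is a bona fide $(r,d)$-combinatorial rectangle and the PRG guarantees of \cref{thm_lu_prg,thm_gy_prg} apply as stated. Everything else is a two-step hybrid argument with the two error contributions added, so once the reduction is set up the rest is routine. I would also briefly remark that the case analysis in \cref{def_simul_fool} is met exactly: the distribution $G'(U_{d'})$ is over $(\binalph^t)^r$ as required, and the fooling inequality is precisely the bound we established.
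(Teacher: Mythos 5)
Your proposal is correct and follows essentially the same route as the paper: reduce simultaneous fooling to $(r,d)$-combinatorial rectangles via $f_i(s) = P_i(G(s))$, compose with the rectangle PRGs of \cref{thm_lu_prg,thm_gy_prg} at error $\eps\CR = \Theta(\eps r)$, and bound the total error by $\eps\CR + r\eps$ using a hybrid over the $r$ coordinates. The only cosmetic difference is that the paper factors the construction through an explicit intermediate generator $G\intm$ with fully independent seeds, which is exactly your hybrid written as a named object.
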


\begin{proof}
  We construct $G'$ in two steps.
  First we concatenate $r$ many strings that are generated according to $G$
  using completely random (independent) seeds.
  Then we reduce the seed length from $dr$ to $d'$ by applying either generator
  $G\CR$ from \cref{thm_lu_prg,thm_gy_prg}.
  The correctness follows from our setting of simultaneously fooling $r$ many
  unanimity programs being a special case of fooling combinatorial rectangles.
  Details follow.

  Consider first the intermediate PRG $G\intm\colon (\binalph^d)^r \to
  (\binalph^t)^r$ that is defined by
  \[
    G\intm(s_1,\dots,s_r) = (G(s_1),\dots,G(s_r)).
  \]
  Then we have that $G\intm$ must $r$-simultaneously $(\eps r)$-fool any
  collection of width-$w$, length-$t$ unanimity programs.
  This is because the copies of $G$ use independent seeds, which means that
  \begin{align*}
    \MoveEqLeft
    \abs*{\Pr[\forall i \in [r]: P_i(G(s_i)) = 1]
      - \prod_{i=1}^r \Pr[P_i(U_t) = 1]} \\
    &= \abs*{\prod_{i=1}^r \Pr[P_i(G(s_i)) = 1]
      - \prod_{i=1}^r \Pr[P_i(U_t) = 1]} \\
    &= r \max_{i \in [r]} \abs*{\Pr[P_i(G(s_i)) = 1] - \Pr[P_i(U_t) = 1]} 
      + O(\eps^2),
  \end{align*}
  where the right-hand side is clearly bounded by $O(r \eps)$ (assuming of
  course $\eps < 1$, which is the interesting case).

  We now instantiate the PRG $G\CR$ for $(r,d)$-combinatorial rectangles from
  either one of \cref{thm_lu_prg,thm_gy_prg} with $\eps\CR = \eps r$.
  Composing $G\CR$ with $G\intm$ yields $G' = G\intm \circ G\CR\colon
  \binalph^{d\CR} \to (\binalph^t)^r$.
  To see why this is correct, let $P_1,\dots,P_r$ be any width-$w$, length-$t$
  unanimity programs and consider the combinatorial rectangle $f$ that results
  from composing the $f_i\colon \binalph^d \to \binalph$ with $f_i(s) =
  P_i(G(s))$.
  Then we have that
  \[
    \Pr[\forall i \in [r]: P_i(G'(U_{d\CR})_i) = 1]
    = \Pr[\forall i \in [r]: f_i(G\CR(U_{d\CR})_i) = 1]
  \]
  (where $G'(U_{d\CR})_i$ and $G\CR(U_{d\CR})_i$ denote the $i$-th component in
  $G'(U_{d\CR})$ and $G\CR(U_{d\CR})$, respectively) is $\eps\CR$-close to
  \[
    \Pr[\forall i \in [r]: f_i(U_d) = 1]
    = \Pr[\forall i \in [r]: P_i(G(U_d)) = 1]
  \]
  (where, for each $i$, the $U_d$ are independent copies of the same
  distribution).
\end{proof}

\subsection{Obtaining the Actual PRG}

We will show that, in order to convert the PRGs of
\cref{lem_prg_swbp_composed,lem_prg_swbp_composed_cr} into PRGs for SWBPs, 
it suffices to \emph{interleave} two independent copies of the respective PRG
$G'$.
To this end, for $m,t \in \N_+$ and tuples $x = (x_1,\dots,x_m)$ and $y =
(y_1,\dots,y_m)$ in $(\binalph^t)^m$, let
\[
  x \il y = x_1 y_1 x_2 y_2 \cdots x_m y_m \in \binalph^{2tm}.
\]

\begin{lemma}%
  \label{lem_interleave}%
  Let $m,t \in \N_+$ and $\eps > 0$ be arbitrary.
  Then, for any (independent) random variables $X = (X_1,\dots,X_m)$ and $Y =
  (Y_1,\dots,Y_m)$ taking values in $(\binalph^t)^m$ and such that $X$ (resp.,
  $Y$) $m$-simultaneously $\eps$-fools width-$w$, length-$t$
  unanimity programs, we have that $X \il Y$ (i.e., the random variable which
  assumes values $x \il y$ where $x$ and $y$ are drawn from $X$ and $Y$,
  respectively) $2\eps$-fools width-$w$, length-$2tm$ SWBPs of window size $t$.
\end{lemma}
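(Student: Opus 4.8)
The plan is to exploit the self-synchronization property exactly as sketched in the technical overview: because the SWBP $S$ has window size $t$, after reading any block of $t$ bits the current state depends only on those $t$ bits, so the $Y_i$-blocks act as "resynchronization barriers" that decouple the processing of consecutive $X_i$-blocks. First I would fix width-$w$, length-$2tm$ SWBP $S$ of window size $t$ and use Theorem~\ref{thm_char_swbp} (or directly the sliding-window definition) to observe that the state of $S$ after reading $x_1 y_1 \cdots x_i y_i$ is a function only of $y_i$ — call it $p(y_i)$ — and that for $i \ge 1$ the state after reading the further block $x_{i+1}$ is $S_{2ti}(p(y_i), x_{i+1})$, which by the window property equals $S_{2ti}(q, x_{i+1})$ for \emph{any} reachable state $q$ in that layer; so this too is a function of $x_{i+1}$ alone (independent of $y_i$), call it some state determined by $x_{i+1}$. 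The key point is that, for fixed $y = (y_1,\dots,y_m)$, whether $S$ accepts $x \il y$ decomposes into (a) unanimity along the $X_i$-blocks and (b) unanimity along the $Y_i$-blocks, and that (a) is a conjunction of $m$ events each depending on a single $X_i$ through a length-$t$ unanimity program, while (b) depends only on $y$.

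Concretely, I would define, for each $i \in [m]$, the length-$t$ unanimity program $P_i^y$ that starts in the (fixed, $y$-determined) state reached just before the $i$-th $X$-block and whose accepting sets on each layer are inherited from $S$; then $S$ accepts $x \il y$ iff $\bigwedge_{i} P_i^y(x_i) = 1$ and additionally every state inside every $Y_j$-block (and the boundary states) is accepting — the latter being an event $E(y)$ depending on $y$ only. Fixing $y \in \Supp(Y)$ and using that $X$ $m$-simultaneously $\eps$-fools width-$w$, length-$t$ unanimity programs, I get
\[
  \Bigl|\Pr_X[S \text{ accepts } X \il y] - \mathbf{1}[E(y)]\prod_{i=1}^m \Pr[P_i^y(U_t)=1]\Bigr| \le \eps .
\]
Averaging over $Y$ and then doing the symmetric argument on the $Y$-side — i.e., writing $\prod_i \Pr[P_i^y(U_t)=1]\cdot \mathbf 1[E(y)]$ as the acceptance probability of $S$ on $U_t^{\otimes m} \il y$ and then replacing $y$ by $Y$, using that $Y$ $m$-simultaneously $\eps$-fools the analogous collection of length-$t$ unanimity programs read inside the $Y$-blocks (whose starting states are now determined by the fully random $X$-blocks, hence fixed per outcome of $U_t^{\otimes m}$) — contributes a second $\eps$. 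Summing the two hybrid steps gives total error $2\eps$, and $\Pr[S \text{ accepts } U_{2tm}] = \Pr_{X,Y\sim U}[S\text{ accepts }X\il Y]$ since interleaving a uniform string is still uniform, which closes the argument.

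The main obstacle I anticipate is bookkeeping the "unanimity" requirement cleanly across the block boundaries: the acceptance of $S$ is a conjunction over \emph{all} $2tm$ layers, so I must make sure the events are partitioned — $X$-interior layers, $Y$-interior layers, and the $\Theta(m)$ boundary states — without double-counting or leaving a gap, and I must verify that each $P_i^y$ is genuinely a width-$\le w$ unanimity program of length exactly $t$ (so the hypothesis on $X$ applies), and symmetrically for the $Y$-side that the starting state of each block is a deterministic function of the relevant random input so that conditioning does not break the product structure. A secondary subtlety is that in the second hybrid step the programs read inside the $Y$-blocks have their initial states pinned down by the \emph{realized value} of the uniform $X$-blocks; since $X$ is now genuinely uniform this is a fixed tuple $y'$ for each sample, so I can apply the simultaneous-fooling property of $Y$ against that particular collection — but I should be careful to note the collection of length-$t$ unanimity programs is chosen \emph{before} drawing $Y$, so no circular dependence arises. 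Once these structural checks are in place, the error accounting is just the two-term triangle inequality above.
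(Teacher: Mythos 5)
Your overall strategy coincides with the paper's: a two-step hybrid (you go $X \il Y \to U \il Y \to U \il U$; the paper does the mirror image $X \il Y \to X \il U \to U \il U$), in each step fixing one half of the interleaving, decomposing $S$ into $m$ width-$w$, length-$t$ unanimity programs via the window property, and invoking simultaneous fooling once per step. There is, however, a concrete error in your decomposition. You assert that, for fixed $y$, acceptance of $x \il y$ splits into $\bigwedge_i P_i^y(x_i) = 1$ together with an event $E(y)$ (unanimity inside the $Y_j$-blocks) \enquote{depending on $y$ only}. That event is not a function of $y$ alone: the states $S$ traverses while reading $y_j$ are determined by $y_j$ together with the state at the start of that block, and by the window property that starting state is a function of $x_j$. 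Hence, for fixed $y$, whether the $Y_j$-block passes only through accepting states varies with $x_j$, the indicator $\mathbf{1}[E(y)]$ is not well defined, and your displayed inequality does not follow from the fooling hypothesis on $X$ (the quantity it compares against is not of the form $\prod_i \Pr[f_i(U_t)=1]$ for the events that actually constitute acceptance).

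The repair is exactly the device the paper uses (in its mirrored step): since, for fixed $y_j$, the offending event is a function of the state in the \emph{final layer} of $P_j^y$, absorb it into $P_j^y$ by shrinking that layer's accepting set --- declare a state $q$ accepting there only if $q$ is accepting in the corresponding layer of $S$ \emph{and} reading $y_j$ from $q$ passes exclusively through accepting states of $S$. With this modification $S(x \il y) = 1$ holds if and only if every (modified) $P_i^y$ accepts $x_i$, the programs are still width-$w$, length-$t$ unanimity programs, and the product $\prod_i \Pr[P_i^y(U_t)=1]$ equals $\Pr[S(U \il y)=1]$ exactly, so your triangle-inequality accounting then goes through and yields the claimed $2\eps$. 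The rest of your plan --- averaging over the fixed half using independence of $X$ and $Y$, the symmetric second step with the block-initial states pinned by the realized uniform $x$, and the observation that interleaving uniform blocks gives $U_{2tm}$ --- matches the paper's proof and is sound.
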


\begin{proof}
  Let $S$ be a width-$w$, length-$2tm$ SWBP of window size $t$.  We prove that
  $\Pr[S(X \il Y) = 1]$ is $\eps$-close to $\Pr[S(X \il U) = 1]$, where $U$ is a
  random variable that is uniformly distributed on $(\binalph^t)^m$.
  By an analogous argument, we also have that $\Pr[S(X \il U) = 1]$ is
  $\eps$-close to $\Pr[S(U \il U) = 1]$ (where the two occurrences of $U$ denote
  independent copies of the same random variable).
  This then gives the statement of the lemma.

  Fix some $x = (x_1,\dots,x_m) \in \Supp(X)$ and let $S_x^i$ be the width-$w$,
  length-$t$ unanimity program defined as follows:
  \begin{itemize}
    \item $S_x^i$ has the same states as $S$.
    \item The state transition function in the $j$-th layer of $S_x^i$ is the
    same as that in the $(k_i+t+j)$-th layer of $S$, where $k_i = 2(i-1)t$.
    \item The initial state is $S_{k_i}(x_i)$.
    (Note this is well-defined due to $S$ having window size $t$ and
    $\abs{x_i} = t$.)
    \item For $j < t$, the state $s$ is accepting in the $j$-th layer of
    $S_x^i$ if and only if it is accepting in the $(k_i+t+j)$-th layer of $S$.
    A state $s$ in the $t$-th layer of $S_x^i$ is accepting if and only if it is
    also accepting in the $(k_i+2t)$-th layer of $S$ and, in addition,
    $S_{k_i+2t}(x_{i+1}) = 1$.
    (If $i = m$, then this last condition holds vacuously.)
    Furthermore, the initial state of $S_1$ is only accepting if $S_0(x_1) = 1$.
  \end{itemize}
  Note that, by construction, the states that $S$ assumes when reading the $y$
  parts of any input $x \il y$ are the same as the corresponding states of
  $S_x^i$.
  Hence, we observe that $S(x \il y) = 1$ if and only if $S_x^i(y_i) = 1$ for
  every $i$:
  \begin{itemize}
    \item If $S(x \il y) = 0$, then the input $x \il y$ passes through a
    rejecting state in the $r$-th layer of $S$ for some $r \in [n]$.
    \begin{itemize}
      \item If $r$ corresponds to the $y$ part of the input, that is, $r =
      k_i+t+j$ for some $i$ and $j \in [t]$, then (by construction) $y_i$ passes
      through the same state in the $j$-th layer of $S_x^i$.
      \item Conversely, if $r$ corresponds to the $x$ part of the input, then $r
      = k_i+j$ for some $i$ and $j \in [t]$.
      If $i > 1$, then $S_x^{i-1}(y_{i-1}) = 0$ since the state in the last
      layer of $S_x^{i-1}$ rejects; otherwise, $i = 1$ and then $S_x^1(y_1) =
      0$ since the initial state of $S_x^1$ is rejecting.
    \end{itemize}
    In either case, we find some $i$ so that $S_x^i(y_i) = 0$.
    \item If $S_x^i(y_i) = 0$ for some $i$, then there is $j \in [t]$ such that
    $y_i$ passes through a rejecting state in the $j$-th layer of $S_x^i$.
    If the corresponding state in the $(k_i+j)$-th layer of $S$ is also
    rejecting, then $S(x \il y) = 0$, so assume otherwise.
    Then either $j = t$ and then $S_{k+i+2t}(x_{i+1}) = 0$ or $j = 1$ and $i =
    1$, in which case $S_0(x_1) = 0$.
    In either case, $S(x \il y) = 0$ follows.
  \end{itemize}
  By the above and using that $X$ and $Y$ are independent, it follows that
  \begin{align*}
    \MoveEqLeft
    \abs*{\Pr[S(X \il Y) = 1] - \Pr[S(X \il U) = 1]} \\
    &= \abs*{\sum_x \Pr[X=x] \left(\Pr[S(x \il Y) = 1]
        - \Pr[S(x \il U) = 1]\right)} \\
    &\le \sum_x \Pr[X=x] \abs*{\Pr[\forall i \in [m]: S_x^i(Y_i) = 1]
        - \prod_{i=1}^m \Pr[S_x^i(U_t) = 1]} \\
    &\le \eps.
    \qedhere
  \end{align*}
\end{proof}

We now round up the ideas above to prove \cref{thm_prg_gen_swbp}.

\begin{proof}[Proof of \cref{thm_prg_gen_swbp}]
  We instantiate two copies of the generator $G\base$ with independent seeds,
  stretch each to at least $n/2$ bits using either of
  \cref{lem_prg_swbp_composed,lem_prg_swbp_composed_cr}, and then combine them
  by interleaving blocks of $t$ bits each.
  Details follow.

  For simplicity, we assume $n$ is a multiple of $2t$.
  For the first item in the theorem, we plug in $G\base$ with $r = \log(n/2t)$
  in \cref{lem_prg_swbp_composed} to obtain a generator $G'\colon \binalph^{d'}
  \to (\binalph^t)^{n/2t}$ with
  \[
    d' = d\base + O(\log(n/t)\log(1/\eps\base))
  \]
  that $(n/2t)$-simultaneously $\eps'$-fools width-$w$, length-$t$ unanimity
  programs, where $\eps' = \eps\base \cdot (n/2t)^{\log 3}$.
  For the second one, again we plug in $G\base$ in
  \cref{lem_prg_swbp_composed_cr} but now setting $r = n / 2t$.
  This yields a generator $G'\colon \binalph^{d'} \to (\binalph^t)^{n/2t}$ with
  either
  \[
    d' = O(d\base + \log(n/t) + (\log(1/\eps\base))^{3/2})
  \]
  or
  \[
    d' = O((d\base + \log\log(n/t) + \log(1/\eps\base))
      \log(d\base + \log(1/\eps\base)))
  \]
  that $(n/2t)$-simultaneously $\eps'$-fools width-$w$, length-$t$ unanimity
  programs, where $\eps' = O(\eps\base \cdot n / t)$.

  For the actual construction of the generator $\Gfunc$ from the claim, we shall
  use two copies of $G'$ (with independent seeds), which for convenience we
  denote $G_1$ and $G_2$.
  In turn, for $i \in \{1,2\}$, the components in the output of $G_i(\cdot)$ are
  denoted $G_i(\cdot)_1,\dots,G_i(\cdot)_{n/2}$.
  Then $G$ simply interleaves the outputs of $G_1$ and $G_2$; that is,
  \[
    G(s_1,s_2) = G_1(s_1)_1G_2(s_2)_1G_1(s_1)_2G_2(s_2)_2 \cdots
      G_1(s_1)_{n/2t}G_2(s_2)_{n/2t}
  \]
  for $s_1,s_2 \in \binalph^{d'}$.
  Hence, the seed length of $G$ is $2d'$, and naturally $G$ is explicit.
  The correctness of $G$ follows in both cases directly from
  \cref{lem_interleave} combined with either \cref{lem_prg_swbp_composed}
  or \cref{lem_prg_swbp_composed_cr}.
\end{proof}


\section{Hitting Set Generators for SWBPs}
\label{sec_hsg}

In this section, we prove:

\restatethmHSGSWBP*

The proof of \cref{thm_hsg_swbp} is just a straightforward adaptation of that of
the second item of \cref{thm_prg_gen_paca}.
To avoid repetition, we state only the facts that are particularly relevant to
the HSG case and omit parts of the argument that are identical to what was
already covered in \cref{sec_general_prg}.

The first notion we need to relax is simultaneous fooling.

\begin{definition}%
  \label{def_simul_hit}%
  Let $m,t \in \N_+$ and $\eps > 0$, and let $\mathcal{F}$ be a class of
  functions $f\colon \binalph^t \to \binalph$.
  We say a distribution $X = (X_1,\dots,X_m)$ over $(\binalph^t)^m$
  \emph{$m$-simultaneously $\eps$-hits} $\mathcal{F}$ if the following holds
  for every $f_1,\dots,f_m \in \mathcal{F}$:
  \[
    \Pr[\forall i \in [m]: f_i(U_t) = 1] \ge \eps
    \implies
    \exists (x_1,\dots,x_m) \in \Supp(X): \forall i \in [m]: f_i(x_i) = 1
  \]
  (where, for each $i$, $U_t$ is an independent copy of the same random variable).
  As before, we say a function $G\colon \binalph^d \to (\binalph^t)^m$
  \emph{$m$-simultaneously $\eps$-hits} $\mathcal{F}$ if $G(U_d)$ fools
  $\mathcal{F}$.
\end{definition}

Instead of the PRG from \cref{thm_gy_prg}, we will apply the HSG due to
\citeauthor{linial97_efficient_comb}.
As mentioned in \cref{sec_intro_tech_overview}, its seed length is
asymptotically optimal.

\begin{theorem}[\cite{linial97_efficient_comb}]
  \label{thm_llsz_hsg}
  For every $n,m \in \N_+$ and $\eps\CR > 0$, there is an explicit $\eps\CR$-HSG
  $G\CR\colon \binalph^{d\CR} \to (\binalph^m)^n$ for $(n,m)$-combinatorial
  rectangles whose seed length is
  \[
    d\CR = O(m + \log\log n + \log(1/\eps\CR)).
  \]
\end{theorem}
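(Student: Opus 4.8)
The plan is to mirror the proof of the second PRG construction from \cref{thm_prg_gen_swbp} (the one going through \cref{lem_prg_swbp_composed_cr} and \cref{lem_interleave}), replacing every ``fool'' with ``hit'' and swapping the combinatorial-rectangle PRG of \cref{thm_gy_prg} for the combinatorial-rectangle HSG of \cref{thm_llsz_hsg}. First I would establish the HSG analogue of \cref{lem_prg_swbp_composed_cr}: given the base PRG $G\base\colon\binalph^{d\base}\to\binalph^t$ that $\eps\base$-fools width-$w$, length-$t$ unanimity programs, I construct, for any $r\in\N_+$, an explicit function $G'\colon\binalph^{d'}\to(\binalph^t)^r$ with $d' = O(d\base + \log\log r + \log(1/\eps\base))$ that $r$-simultaneously $\eps'$-hits width-$w$, length-$t$ unanimity programs, where $\eps' = O(\eps\base\cdot r)$. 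The construction is the same two-step recipe: take $G\intm(s_1,\dots,s_r) = (G\base(s_1),\dots,G\base(s_r))$, which (as in the proof of \cref{lem_prg_swbp_composed_cr}) $r$-simultaneously $(\eps\base r)$-fools, hence also $r$-simultaneously $(\eps\base r)$-hits, the unanimity programs; then compose with the LLSZ HSG $G\CR$ for $(r,d\base)$-combinatorial rectangles from \cref{thm_llsz_hsg}, instantiated with error parameter $\eps\CR = \eps\base r$ (adjusting constants). The seed length of $G'$ is then the $d\CR$ bound of \cref{thm_llsz_hsg} with $n\gets r$, $m\gets d\base$, namely $O(d\base + \log\log r + \log(1/\eps\base))$.

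The hitting correctness of this composition is where a bit of care is needed — this is the step I expect to be the main (though still mild) obstacle, since the HSG guarantee only gives a one-directional implication rather than a two-sided closeness. Concretely: suppose $P_1,\dots,P_r$ are width-$w$, length-$t$ unanimity programs with $\prod_{i=1}^r\Pr[P_i(U_t)=1] \ge \eps'$. Set $f_i\colon\binalph^{d\base}\to\binalph$ by $f_i(s) = P_i(G\base(s))$, so that $f = \prod_i f_i$ is an $(r,d\base)$-combinatorial rectangle. I need $\Pr[\forall i\colon f_i(U_{d\base})=1] \ge \eps\CR$ so that the HSG property of $G\CR$ fires; this follows from the first step because $\prod_i\Pr[f_i(U_{d\base})=1]$ is $(\eps\base r)$-close to $\prod_i\Pr[P_i(U_t)=1]\ge\eps'$, and choosing $\eps' = C\eps\base r$ for a suitable constant $C$ makes the difference $\ge\eps\CR = \eps\base r$. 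Then $G\CR$ guarantees some seed tuple hitting $f$, i.e. some $(x_1,\dots,x_r)\in\Supp(G')$ with $P_i(x_i)=1$ for all $i$, which is exactly the $m$-simultaneous $\eps'$-hitting condition of \cref{def_simul_hit}.

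Next I would observe that \cref{lem_interleave} already applies verbatim at the level of distributions — but I need the hitting analogue. The point is that in the proof of \cref{lem_interleave} the equivalence ``$S(x\il y)=1$ iff $S_x^i(y_i)=1$ for every $i$'' is a purely combinatorial (distribution-free) fact about the programs $S_x^i$ built from the SWBP $S$. So if the output distribution $X$ of $G'$ $m$-simultaneously $\eps'$-hits width-$w$, length-$t$ unanimity programs, then for any width-$w$, length-$2tm$ SWBP $S$ of window size $t$ with $\Pr[S(U_n)=1]\ge\eps_S$ (for an appropriate $\eps_S = O(\eps'm) = O(\eps\base\cdot n/t)$, tracking through the two applications of the simultaneous guarantee exactly as in the two-sided proof), there is a choice of seed for $X\il X$ (two independent copies) that makes $S$ accept. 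The only subtlety is that $S(U_n)=1$ with probability $\ge\eps_S$ must be pushed down to: for some fixed $x$, the collection $S_x^1,\dots,S_x^m$ is simultaneously accepted with probability $\ge\eps'$ under independent uniform inputs — a standard averaging step plus the product structure, just as the chain of inequalities at the end of \cref{lem_interleave}'s proof is run ``in reverse.''

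Finally I would assemble the theorem exactly as in the proof of \cref{thm_prg_gen_swbp}: assume (w.l.o.g.) $n$ is a multiple of $2t$, apply the HSG analogue of \cref{lem_prg_swbp_composed_cr} with $r = n/2t$ to get $G'$ with seed length $O(d\base + \log\log(n/t) + \log(1/\eps\base))$ — and here, crucially, since we no longer need the $\log(d\base+\log(1/\eps\base))$ multiplicative overhead that \cref{thm_gy_prg} incurred, just the additive LLSZ bound — and then interleave two independent copies block-by-block in blocks of $t$ bits to obtain $\Gfunc$ with seed length $2d' = O(d\base + \log(n/t) + \log(1/\eps\base))$ (folding $\log\log(n/t)$ into $\log(n/t)$). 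The error is $\eps_G = O(\eps\base\cdot(n/t))$, coming from the $O(\eps\base r) = O(\eps\base\cdot n/t)$ loss in the simultaneous-hitting step and the constant-factor loss in the interleaving step. Explicitness follows since $G\base$ is explicit by hypothesis and both $G\intm$ and the LLSZ HSG are explicit, and block-interleaving is trivially linear-space computable.
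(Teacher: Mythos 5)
You have proven the wrong statement. The theorem you were asked to justify is \cref{thm_llsz_hsg} itself: the existence of an explicit $\eps\CR$-HSG for $(n,m)$-combinatorial rectangles with seed length $O(m + \log\log n + \log(1/\eps\CR))$. This is a result of Linial, Luby, Saks, and Zuckerman that the paper imports from the literature as a black box and does not prove. Your proposal instead reconstructs the proof of \cref{thm_hsg_swbp} (the HSG for sliding-window branching programs), and it does so \emph{by invoking \cref{thm_llsz_hsg} as an ingredient} --- composing the base PRG $G\base$ with ``the LLSZ HSG $G\CR$ from \cref{thm_llsz_hsg}.'' Read as a proof of \cref{thm_llsz_hsg}, the argument is therefore circular: it assumes exactly the object whose existence is to be established.

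The content of your write-up --- the simultaneous-hitting analogue of \cref{lem_prg_swbp_composed_cr}, the one-directional care needed when the two-sided closeness is relaxed to a hitting guarantee, and the reverse-averaging step in the interleaving lemma --- does track the paper's \cref{sec_hsg} reasonably faithfully, but none of it bears on the combinatorial-rectangle HSG itself. An actual proof of \cref{thm_llsz_hsg} would have to construct the generator from scratch (in the original work this is done via a sequence of reductions using limited-independence hash families and expander walks to drive the seed length down to $O(m + \log\log n + \log(1/\eps\CR))$); nothing in your proposal addresses this.
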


Our equivalent of \cref{lem_prg_swbp_composed_cr} is now the following:

\begin{lemma}
  \label{lem_hsg_swbp_composed_cr}%
  Let $G\colon \binalph^d \to \binalph^t$ be a function that $\eps$-fools
  width-$w$, length-$t$ unanimity programs.
  For every $r \in \N_+$, there is a function $G'\colon \binalph^{d'} \to
  (\binalph^t)^r$ with
  \[
    d' = O(d + \log\log r + \log(1/\eps'))
  \]
  that $r$-simultaneously $\eps'$-hits width-$w$, length-$t$ unanimity programs,
  where $\eps' = O(\eps r)$.
  In addition, if $G$ is explicit, then so is $G'$.
\end{lemma}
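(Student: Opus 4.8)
The plan is to mimic the proof of \cref{lem_prg_swbp_composed_cr}, which is its PRG counterpart, replacing the PRG for combinatorial rectangles by the HSG of \citeauthor{linial97_efficient_comb} from \cref{thm_llsz_hsg}. As there, the construction has two stages: first form the intermediate generator $G\intm\colon (\binalph^d)^r \to (\binalph^t)^r$ with $G\intm(s_1,\dots,s_r) = (G(s_1),\dots,G(s_r))$, and then shrink its seed by composing it with a combinatorial-rectangle HSG. The only genuinely new ingredient is that the hitting property must be pushed through the composition, which calls for a short probability estimate in place of the triangle inequality used in the PRG case.

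In detail, I would set $\eps' = 2\eps r$ and take any width-$w$, length-$t$ unanimity programs $P_1,\dots,P_r$ with $\Pr[\forall i \in [r]\colon P_i(U_t) = 1] \ge \eps'$; write $p_i = \Pr[P_i(U_t) = 1]$, so that $p_i \ge \prod_{j=1}^r p_j \ge \eps'$ for every $i$. Since $G$ $\eps$-fools these programs, $\Pr[P_i(G(U_d)) = 1] \ge p_i - \eps \ge p_i(1 - \eps/\eps')$. Passing to the $(r,d)$-combinatorial rectangle $f = \prod_{i=1}^r f_i$ with $f_i(s) = P_i(G(s))$ and using that the coordinates of a uniform string are independent,
\[
  \Pr[f(U_{dr}) = 1] = \prod_{i=1}^r \Pr[P_i(G(U_d)) = 1]
  \ge \Bigl(\prod_{i=1}^r p_i\Bigr)\Bigl(1 - \frac{\eps}{\eps'}\Bigr)^{r}
  \ge \eps'\Bigl(1 - \frac{1}{2r}\Bigr)^{r} \ge \frac{\eps'}{2} = \eps r,
\]
the last step being Bernoulli's inequality $(1-x)^r \ge 1 - rx$.

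It then remains to instantiate $G\CR$ from \cref{thm_llsz_hsg} as an $\eps\CR$-HSG for $(r,d)$-combinatorial rectangles with $\eps\CR = \eps r$, and set $G' = G\intm \circ G\CR\colon \binalph^{d\CR} \to (\binalph^t)^r$: by the displayed bound there is a seed $x$ with $f(G\CR(x)) = 1$, i.e.\ $P_i(G(G\CR(x)_i)) = 1$ for all $i$, so $G'$ $r$-simultaneously $\eps'$-hits width-$w$, length-$t$ unanimity programs as required by \cref{def_simul_hit}. Its seed length is $d' = d\CR = O(d + \log\log r + \log(1/\eps\CR)) = O(d + \log\log r + \log(1/\eps'))$ because $\eps' = \Theta(\eps r)$, and $G'$ is explicit as a composition of explicit maps. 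The step to watch is the passage from $\prod_i p_i$ to $\prod_i \Pr[P_i(G(U_d)) = 1]$: bounding each factor by $p_i - \eps \ge p_i/2$ would cost a ruinous $2^r$ factor, and the remedy is precisely to use the hypothesis to ensure $p_i \ge \eps'$, so the total loss is only a constant.
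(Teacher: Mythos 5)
Your proof is correct and follows essentially the same route as the paper: the same composition $G' = G\intm \circ G\CR$ with the HSG of \cref{thm_llsz_hsg}, the same choice $\eps' = 2\eps r$, and the same reduction to showing $\Pr[f(U_{dr})=1] \ge \eps r$ before invoking the hitting property. The only (harmless) difference is that you derive that intermediate bound by a multiplicative per-coordinate estimate with Bernoulli's inequality, whereas the paper simply cites the additive $(\eps r)$-simultaneous-fooling property of $G\intm$ already established in the proof of \cref{lem_prg_swbp_composed_cr}.
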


\begin{proof}
  Let $G'$ and $G\intm$ be as in the proof of \cref{lem_prg_swbp_composed_cr}
  but where we use the HSG $G\CR$ from \cref{thm_llsz_hsg} instead of the PRG
  from \cref{thm_gy_prg}.
  To be more precise, we require $G\CR$ to be an $(\eps r)$-HSG for
  $(d,r)$-combinatorial rectangles.
  Since the respective part of the construction is identical, we still have that
  $G\intm$ must $r$-simultaneously $(\eps r)$-fool any collection of width-$w$,
  length-$t$ unanimity programs.
  Hence, we need only show the correctness of $G'$ when replacing the seeds of
  $G\intm$ with those generated by $G\CR$.

  Again, let $P_1,\dots,P_r$ be any width-$w$, length-$t$ unanimity programs and
  define $f_i$ as in the proof of \cref{lem_prg_swbp_composed_cr}, and suppose
  that $\Pr[\forall i \in [r]: P_i(U_t) = 1] \ge 2 \eps r$ (where the $U_t$ are
  independent copies of the same distribution).
  By the property of $G\intm$ stated above, this means that
  \[
    \Pr[\forall i \in [r]: f_i(U_d) = 1] \ge \eps r.
  \]
  The statement then follows directly from $G\CR$ being an $(\eps r)$-HSG.
\end{proof}

The analogue of \cref{lem_interleave} is the following:

\begin{lemma}%
  \label{lem_interleave}%
  Let $m,t \in \N_+$ and $\eps > 0$ be arbitrary.
  Then, for any (independent) random variables $X = (X_1,\dots,X_m)$ and $Y =
  (Y_1,\dots,Y_m)$ taking values in $(\binalph^t)^m$ and such that $X$ (resp.,
  $Y$) $m$-simultaneously $\eps$-hits width-$w$, length-$t$ unanimity programs,
  we have that $X \il Y$ (i.e., the random variable which assumes values $x \il
  y$ where $x$ and $y$ are drawn from $X$ and $Y$, respectively) $2\eps$-hits
  width-$w$, length-$2tm$ SWBPs of window size $t$.
\end{lemma}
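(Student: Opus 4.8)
The plan is to mirror the proof of the PRG interleaving lemma from \cref{sec_general_prg} as closely as possible, keeping its structural core intact. That core produces, for every fixed $x = (x_1,\dots,x_m) \in (\binalph^t)^m$, width-$w$, length-$t$ unanimity programs $S_x^1,\dots,S_x^m$ with $S(x \il y) = 1$ if and only if $S_x^i(y_i) = 1$ for all $i \in [m]$; the analogous construction that instead reads the $x$-blocks gives, for every fixed $y$, width-$w$, length-$t$ unanimity programs $T_y^1,\dots,T_y^m$ with $S(x \il y) = 1$ if and only if $T_y^i(x_i) = 1$ for all $i$. The one point to stress is that, since $S$ has window size $t$ and each block has length exactly $t$, fixing the blocks of one side freezes every boundary state of $S$ to a value determined by those fixed blocks; hence each $T_y^i$ is a genuine length-$t$ unanimity program in $x_i$ alone, and $\{x : S(x \il y) = 1\} = \{x : \forall i\; T_y^i(x_i) = 1\}$ is precisely the kind of set the $m$-simultaneous $\eps$-hitting property of $X$ was designed to handle (and symmetrically for $Y$ with $x$ fixed).

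Now suppose $\Pr[S(U_{2tm}) = 1] \ge 2\eps$, and note that $U_{2tm}$ is distributed as $X' \il Y'$ with $X',Y'$ independent and uniform over $(\binalph^t)^m$. Writing $Q(y) = \Pr[S(X' \il y) = 1] = \prod_i \Pr[T_y^i(U_t) = 1]$, we get $\Exp[Q(Y')] = \Pr[S(U_{2tm}) = 1] \ge 2\eps$, so (as $Q \le 1$) a reverse Markov bound gives $\Pr[Q(Y') \ge \eps] \ge \eps$; in particular there is a fixed $\hat y$ with $Q(\hat y) \ge \eps$, and applying the hitting property of $X$ to $T_{\hat y}^1,\dots,T_{\hat y}^m$ yields some $\hat x \in \Supp(X)$ with $S(\hat x \il \hat y) = 1$. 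It remains to upgrade the fixed half $\hat y$ to a half in $\Supp(Y)$: once we know $P(\hat x) := \Pr[S(\hat x \il Y') = 1] = \prod_i \Pr[S_{\hat x}^i(U_t) = 1] \ge \eps$, the hitting property of $Y$ applied to $S_{\hat x}^1,\dots,S_{\hat x}^m$ produces $y^* \in \Supp(Y)$ with $S(\hat x \il y^*) = 1$, and then $\hat x \il y^* \in \Supp(X \il Y)$ witnesses the claim.

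The main obstacle is exactly this last step. In the PRG setting one only chains two ``closeness of probabilities'' estimates, but here we must exhibit an honest witness in \emph{both} supports, and a maximizer (or even a $\ge\eps$-mass choice) of $Q$ is just some string in $(\binalph^t)^m$, not one reachable by $X$ --- so the naïve argument controls $P$ only on an arbitrary $\hat x$, whereas $S(\hat x \il \hat y) = 1$ by itself gives merely $P(\hat x) > 0$, not $P(\hat x) \ge \eps$. Making the two applications of the hitting property dovetail --- so that the $\hat x$ returned by the first one is automatically good enough for the second --- is where the factor-of-$2$ slack in the statement ($2\eps$ rather than $\eps$) is spent. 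A natural tool is that $Q$, and likewise $P$, factors along a path through the blocks, so separating the odd- and even-indexed factors expresses the relevant ``good'' set as a union of two products over pairwise-disjoint groups of blocks; bringing these back to the single-block programs that the hitting property actually consumes, while tracking how mass is transferred between the two rectangle structures, is the part that takes a bit of work and is the reason this argument is treated separately rather than folded into \cref{sec_general_prg}.
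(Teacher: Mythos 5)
Your setup is sound and matches the paper's up to the point where the two hitting properties must be combined, but that combination --- which you yourself flag as ``the main obstacle'' --- is exactly the step you never carry out, so the proposal is incomplete. Concretely: after extracting a uniform string $\hat y$ with $Q(\hat y) \ge \eps$ and then $\hat x \in \Supp(X)$ with $S(\hat x \il \hat y) = 1$, you have no lower bound on $P(\hat x) = \Pr[S(\hat x \il U) = 1]$, so the hitting property of $Y$ cannot be invoked at $\hat x$. Your closing paragraph concedes this and gestures at an odd/even block-splitting repair, but that idea is not developed far enough to check and is in any case not the paper's route.

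The paper orders the two swaps differently so that one half stays \emph{uniform} until the other half has been moved into the relevant support. It first applies Markov over the $x$-half: a uniform $x$ satisfies $\Pr[S(x \il U) = 1] \ge \eps$ (call such $x$ good) with probability greater than $\eps$, and for every good $x$ the hitting property of $Y$ applied to $S_x^1,\dots,S_x^m$ yields $y(x) \in \Supp(Y)$ with $S(x \il y(x)) = 1$. From this it concludes $\Pr[S(U \il Y) = 1] > \eps$, applies Markov a second time --- now over $y$ drawn from $Y$ --- to obtain $y^\ast \in \Supp(Y)$ with $\Pr[S(U \il y^\ast) = 1] \ge \eps$, and only then invokes the hitting property of $X$ (on the programs obtained by fixing the $y$-blocks to $y^\ast$) to produce $x^\ast \in \Supp(X)$ with $S(x^\ast \il y^\ast) = 1$. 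The factor-of-two slack is thus spent on the initial Markov step, not on any interaction between two rectangle structures. To repair your write-up, replace the order ``fix $\hat y$, then fix $\hat x$'' by ``keep $x$ uniform, move $y$ into $\Supp(Y)$, then move $x$ into $\Supp(X)$.'' (That said, your diagnosis of where the difficulty sits is apt: the paper's intermediate claim $\Pr[S(U \il Y) = 1] > \eps$ is itself justified only by the existence of a single witness $y(x)$ for each good $x$, which bounds $\Pr[S(x \il Y) = 1]$ away from zero but not from $\eps$; so the hand-off between the two hitting properties is the delicate point in either ordering and deserves a more careful justification than either your sketch or the quoted step provides.)
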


\begin{proof}
  The proof is in some sense similar to that of \cref{lem_interleave}.
  Let $S$ and $U$ be as in said proof and suppose that $\Pr[S(U \il U) = 1] \ge
  2\eps$.
  We use the same reduction (to the $S_x^i$), but the correctness requires a
  different argument since what we must now show is that $S(x \il y) = 1$ for
  some $x \in \Supp(X)$ and $y \in \Supp(Y)$ (instead of just arguing that the
  distribution $S(X \il Y)$ is close to $S(U \il U)$).

  The key fact to observe is that the event of all $S_x^i$ accepting (for any
  choice of $x$) is subsumed under $S$ itself accepting.
  Concretely, this means that, for any fixed $x$ such that $\Pr[S(x \il U) = 1]
  \ge \eps$, we have that, if we sample $x$ according to $U$, then
  \[
    \Pr_x[\forall i \in [m]: S_x^i(U_t) = 1]
    \ge \Pr[S(x \il U) = 1]
    \ge \eps,
  \]
  implying there is $y = y(x) = (y_1,\dots,y_m) \in \Supp(Y)$ so that
  $S_x^i(y_i) = 1$ for every $i$ and thus $S(x \il y) = 1$.

  Let us call an $x$ that satisfies this condition (i.e., $\Pr[S(x \il U) = 1]
  \ge \eps$) \emph{good}.
  By Markov's inequality, 
  \[
    \Pr_{x \gets U}[\Pr[S(x \il U) = 0] \ge 1 - \eps]
    \le \frac{1 - 2\eps}{1 - \eps} < 1 - \eps,
  \]
  which means that a uniformly chosen $x$ is good with probability strictly
  greater than $\eps$.
  In turn, as argued above, every such $x$ is such that there is $y \in
  \Supp(Y)$ with $S(x \il y) = 1$.
  It follows that $\Pr[S(U \il Y) = 1] > \eps$.

  This brings us halfway to our goal (i.e., of showing there are $x \in
  \Supp(X)$ and $y \in \Supp(Y)$ so that $S(x \il y) = 1$).
  For the next step, we again use Markov's inequality to obtain
  \[
    \Pr_{y \gets Y}[\Pr[S(U \il y) = 0] \ge 1 - \eps] < 1.
  \]
  Hence, the probability that $\Pr[S(U \il y) = 1] \ge \eps$ holds is non-zero
  if $y$ is chosen according to $Y$.
  Using an analogous argument to the above, it follows there is $x = x(y) \in
  \Supp(X)$ so that $S(x \il y) = 1$.
\end{proof}

The final details of the proof of \cref{thm_hsg_swbp} are as in that of
\cref{thm_prg_gen_paca}.
This concludes the proof of \cref{thm_hsg_swbp}.


\section{Application to Sublinear-Time Probabilistic Cellular Automata}
\label{sec_paca}

For the results in this section, we assume the reader is familiar with the
theory of cellular automata.
(See, e.g., \cite{delorme99_cellular_book} for a standard reference.)

\subsection{Probabilistic Cellular Automata}
\label{sec_def_paca}

As mentioned in the introduction, the PACA model was defined in
\cite{modanese23_sublinear-time_stacs}.
We repeat here the definitions verbatim for the reader's convenience.

We consider only bounded one-dimensional cellular automata.

\begin{definition}%
  \label{def_ca}%
  A \emph{cellular automaton} is a triple $C = (Q,\$,\delta)$ where
  $Q$ is the finite set of \emph{states},
  $\$ \notin Q$ is the \emph{boundary symbol},
  and $\delta\colon Q_\$ \times Q \times Q_\$ \to Q$ is the \emph{local
    transition function}, where $Q_\$ = Q \cup \{ \$ \}$.
  The elements in the domain of $\delta$ are the possible \emph{local
    configurations} of the cells of $C$.
  For a fixed width $n \in \N_+$, the \emph{global configurations} of $C$ are
  the elements of $Q^n$.
  The cells $1$ and $n$ are the \emph{border cells} of $C$.
  The \emph{global transition function} $\Delta\colon Q^n \to Q^n$ is obtained
  by simultaneous application of $\delta$ everywhere; that is, if $s \in Q^n$
  is the current global configuration of $C$, then
  \[
    \Delta(s) = \delta(\$,s_1,s_2)
      \, \delta(s_1,s_2,s_3)
      \, \cdots
      \, \delta(s_{n-1},s_n,\$).
  \]
  For $t \in \N_0$, $\Delta^t$ denotes the $t$-th iterate of $\Delta$.
  For an initial configuration $s \in Q^n$, the sequence $s = \Delta^0(s),
  \Delta(s), \Delta^2(s), \dots$ is the \emph{trace} of $C$ (for $s$).
  Writing the trace of $C$ line for line yields its \emph{space-time diagram}.
  Finally, for a cell $i \in [n]$ and $r \in \N_0$, the cells in $[i-r,i+r] \cap
  [n]$ form the \emph{$r$-neighborhood} of $i$.
\end{definition}

\begin{definition}%
  \label{def_aca}%
  A \emph{deterministic ACA} (DACA) is a cellular automaton $C$ with an
  \emph{input alphabet} $\Sigma \subseteq Q$ as well as a subset $A \subseteq Q$
  of \emph{accepting states}.
  We say $C$ \emph{accepts} an input $x \in \Sigma^+$ if there is $t \in \N_0$
  such that $\Delta^t(x) \in A^n$, and we denote the set of all such $x$ by
  $L(C)$.
  In addition, $C$ is said to have \emph{time complexity} (bounded by) $T\colon
  \N_+ \to \N_0$ if, for every $x \in L(C) \cap \Sigma^n$, there is $t <
  T(\abs{x})$ such that $\Delta^t(x) \in A^n$.
\end{definition}

\begin{definition}%
  \label{def_PACA}%
  Let $\Sigma$ be an alphabet and $Q$ a finite set of states with $\Sigma
  \subseteq Q$.
  A \emph{probabilistic ACA} (PACA) $C$ is a cellular automaton with two local
  transition functions $\delta_0, \delta_1\colon Q^3 \to Q$.
  At each step of $C$, each cell tosses a fair coin $c \in \binalph$ and updates
  its state according to $\delta_c$; that is, if the current configuration of
  $C$ is $s \in Q^n$ and the result of the cells' coin tosses is $r = r_1 \cdots
  r_n \in \binalph^n$ (where $r_i$ is the coin toss of the $i$-th cell),
  then the next configuration of $C$ is
  \[
    \Delta_r(s) = \delta_{r_1}(\$,s_1,s_2) \, \delta_{r_2}(s_1,s_2,s_3) \,
      \cdots \, \delta_{r_n}(s_{n-1},s_n,\$).
  \]
  Seeing this process as a Markov chain $M$ over $Q^n$, we recast the global
  transition function $\Delta = \Delta_{U_n}$ as a family of random variables
  $(\Delta(s))_{s \in Q^n}$ parameterized by the current configuration $s$ of
  $C$, where $\Delta(s)$ is sampled by starting in state $s$ and performing a
  single transition on $M$ (having drawn the cells' coin tosses according to
  $U_n$).
  Similarly, for $t \in \N_0$, $\Delta^t(s)$ is sampled by starting in $s$ and
  performing $t$ transitions on $M$.

  A \emph{computation} of $C$ for an input $x \in \Sigma^n$ is a path in $M$
  starting at $x$.
  The computation is \emph{accepting} if the path visits $A^n$ at least once.
  In addition, in order to be able to quantify the probability of a PACA
  accepting an input, we additionally require for every PACA $C$ that there is a
  function $T\colon \N_+ \to \N_0$ such that, for any input $x \in \Sigma^n$,
  every accepting computation for $x$ visits $A^n$ for the first time in
  strictly less than $T(n)$ steps; that is, if there is $t \in \N_0$ with
  $\Delta^t(x) \in A^n$, then $\Delta^{t_1}(x) \in A^n$ for some $t_1 < T(n)$.
  (Hence, every accepting computation for $x$ has an initial segment with
  endpoint in $A^n$ and whose length is strictly less than $T(n)$.)
  If this is the case for any such $T$, then we say $C$ has \emph{time
    complexity} (bounded by) $T$.

  With this restriction in place, we may now equivalently replace the coin
  tosses of $C$ with a matrix $R \in \binalph^{T(n) \times n}$ of bits with
  rows $R_0,\dots,R_{T(n)-1}$ and such that $R_j(i)$ corresponds to the coin
  toss of the $i$-th cell in step $j$.
  (If $C$ accepts in step $t$, then the coin tosses in rows $t,\dots,T(n)-1$
  are ignored.)
  We refer to $R$ as a \emph{random input} to $C$.
  Blurring the distinction between the two perspectives (i.e., online and
  offline randomness), we write $C(x,R) = 1$ if $C$ accepts $x$ when its coin
  tosses are set according to $R$, or $C(x,R) = 0$ otherwise.
\end{definition}

\begin{definition}%
  \label{def_pPACA}%
  Let $L \subseteq \Sigma^\ast$ and $p \in [0,1)$.
  A \emph{one-sided $p$-error PACA for $L$} is a PACA $C$ with time complexity
  $T$ such that, for every $x \in \Sigma^n$, the following holds:
  \begin{align*}
    x \in L &\iff \Pr[C(x, U_{T(n) \times n}) = 1] \ge 1-p
    \qquad \text{and}
    \ifconf \\ \else \qquad \fi
    x \notin L &\iff \Pr[C(x, U_{T(n) \times n}) = 1] = 0.
  \end{align*}
  If $p = 1/2$, then we simply say $C$ is a \emph{one-sided error PACA}.
  Similarly, for $p < 1/2$, a \emph{two-sided $p$-error PACA for $L$} is a PACA
  $C$ such that, for every $x \in \Sigma^\ast$, the following holds:
  \begin{align*}
    x \in L &\iff \Pr[C(x, U_{T(n) \times n}) = 1] \ge 1-p
    \qquad \text{and}
    \ifconf \\ \else \qquad \fi
    x \notin L &\iff \Pr[C(x, U_{T(n) \times n}) = 1] \le p.
  \end{align*}
  If $p = 1/3$, then we simply say $C$ is a \emph{two-sided error PACA}.
  In both cases, we write $L(C) = L$ and say $C$ \emph{accepts} $L$.
\end{definition}

\subsection{Simulating a PACA with a Low-Space Sliding-Window Algorithm}
\label{sec_simulation}

We now show how a PACA can be simulated by a randomized sliding-window algorithm
with low space.
This can be achieved with little difficulty simply by adapting the streaming
algorithm from \cite{modanese21_lower_csr}.
(The algorithm there is actually geared toward a different variant of cellular
automata, but the same strategy can be applied to our setting.)
For the sake of self-containedness, we provide the adaptation in full.

\begin{proposition}%
  \label{prop_rand_str_alg}%
  Let $C$ be a (one- or two-sided error) PACA with state set $Q$ and $T \in
  \N_0$.
  Then there is a $O(T \log\abs{Q})$-space randomized non-uniform sliding-window
  algorithm $S = S_T$ of window size $O(T^2)$ such that
  \[
    \Pr[S(x) = 1] = \Pr[\text{$C$ accepts $x$ in time step $T$}].
  \]
\end{proposition}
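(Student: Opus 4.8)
The plan is to have $S$ reconstruct the space--time diagram of $C$ up to time $T$ diagonally, in a \enquote{just-in-time} streaming fashion, in the spirit of the deterministic streaming algorithm for ACAs in \cite{modanese21_lower_csr}. Since the sliding-window property of $S$ need only cover its randomness source, I will take $S$ to be non-uniform in $n = \abs{x}$ with the cells of $x$ hardcoded into its transitions; its genuine input is then just a stream of $Tn$ random bits, which is to be read as the coin-toss matrix $R \in \binalph^{T \times n}$ of $C$ but laid out in a \emph{staircase} rather than row- or column-major order. Concretely, partition the stream into consecutive blocks $B_1,\dots,B_{n+T}$, where $B_p$ carries the (at most $T$) coins $R_0(p-1), R_1(p-2), \dots, R_{T-1}(p-T)$; equivalently, the coin $R_t(i)$ with $0 \le t < T$ and $1 \le i \le n$ is placed in block $B_{i+t+1}$, with indices outside $[1,n]$ contributing nothing, so every coin of $R$ appears exactly once and the total stream length is $Tn$. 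The point of this layout is that $R_{t-1}(i)$ is precisely the coin governing cell $i$'s transition from time $t-1$ to time $t$, so the state $s^{(t)}_i$ of cell $i$ at time $t$ under $R$ can be computed as soon as block $B_{i+t}$ has been read --- and $B_{i+t}$ is exactly the block in which $R_{t-1}(i)$ appears. (A column-major layout would instead force $S$ to buffer the raw coins of the last $T$ cells, blowing the space up to $\Theta(T^2)$.)

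For the simulation itself, $S$ maintains the invariant that, right after reading $B_p$, its memory holds the states $s^{(t)}_i$ at each time level $t \in \{1,\dots,T-1\}$ for the last couple of cells $i$ (those with $i$ near $p-t$) --- a rolling buffer of $O(T)$ states, i.e.\ $O(T\log\abs{Q})$ bits, plus an $O(\log T)$-bit counter for the level currently being processed (the time-$0$ level needs no storage, being a hardcoded slice of $x$). On reading $B_p$, $S$ recomputes the newly determined diagonal: it sets $s^{(0)}_p = x_p$ (or the boundary symbol $\$$ if $p \notin \{1,\dots,n\}$), and then for $t=1,\dots,T$ in increasing order it computes
\[
  s^{(t)}_{p-t} = \delta_{R_{t-1}(p-t)}\bigl(s^{(t-1)}_{p-t-1},\, s^{(t-1)}_{p-t},\, s^{(t-1)}_{p-t+1}\bigr),
\]
reading $R_{t-1}(p-t)$ off $B_p$ and the three state arguments off the buffer (the argument $s^{(t-1)}_{p-t+1}$ having just been produced one level lower within the very same block, the other two two and three blocks earlier). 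Acceptance is where the \emph{unanimity} nature of sliding-window machines is used: $S$ accepts a run iff it is in an accepting state at \emph{every} step, so I declare $S$ accepting at every step except those at which a time-$T$ cell state is being completed, and at the step where $s^{(T)}_{p-T}$ is completed (for $T < p \le n+T$) $S$ is accepting there iff $s^{(T)}_{p-T} \in A$. Hence a run of $S$ on coins $R$ accepts iff $s^{(T)}_i \in A$ for all $i \in \{1,\dots,n\}$, i.e.\ iff $\Delta_R^T(x) \in A^n$; averaging over $R \sim U_{T \times n}$ gives $\Pr[S(x)=1] = \Pr[\Delta^T(x) \in A^n] = \Pr[\text{$C$ accepts $x$ in time step } T]$. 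Using \enquote{accept at every step} rather than a final accumulator is essential: a bit recording \enquote{have all finalized cells been accepting so far} would depend on all coins read and would destroy the sliding-window property.

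It remains to verify the two quantitative claims. The $O(T\log\abs{Q})$ space bound is immediate from the invariant. For the window size, note that the content of $S$'s memory right after $B_p$ depends only on the coins lying in the dependency cones of the $O(T)$ states it stores (plus the coins just read in $B_p$); the deepest of these cones --- that of a stored state at level $T-1$ near cell $p-T$ --- reaches back at time $0$ to around cell $p-2T$, whose earliest coin $R_0(p-2T)$ sits in block $B_{p-2T+1}$. Hence the memory --- and therefore $S$'s state, and the operation it performs next --- depends, at every bit of the stream, only on the last $\le 2T$ blocks, i.e.\ on the last $O(T^2)$ random bits, which is exactly the sliding-window property with window size $O(T^2)$. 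The step I expect to require the most care is pinning down the staircase indexing: checking that each coin $R_t(i)$ is consumed exactly once and precisely in the block in which $C$ applies the matching transition, and correctly handling the border cells $1$ and $n$ together with the \enquote{ramp-up} blocks $B_1,\dots,B_T$ (before any cell reaches time $T$) and the \enquote{ramp-down} blocks near $B_{n+T}$. Once the indexing is fixed, correctness is a routine induction on $p$ showing that the value $S$ computes for $s^{(t)}_{p-t}$ equals the state of cell $p-t$ in the $t$-th configuration of $C$ under the random input $R$.
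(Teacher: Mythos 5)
Your proposal is correct and follows essentially the same approach as the paper: a diagonal (``staircase'') reconstruction of the space--time diagram with a rolling buffer of $O(T)$ states, coins consumed in the same diagonal order, acceptance expressed via the unanimity condition at the steps completing time-$T$ states, and a dependency-cone argument giving the $O(T^2)$ window. The only cosmetic difference is that the paper's algorithm samples exactly $T$ coins per block (ignoring the out-of-range ones) rather than using variable-length blocks, which does not affect the bounds.
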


We note the non-uniformity of $S$ is required only to set $T$.
Every other aspect of $S$ is realized in an uniform manner.

The basic idea involved is that, in order to emulate the behavior of $C$ on an
input $x$, it suffices to move a sliding window over its time-space diagram that
is $T$ cells long and $O(1)$ cells wide (see \cref{fig_sim_str_alg}), feeding
random bits to the cells as needed.
Every time a new symbol from $x$ is read, the window is moved one cell to the
right; if it positioned beyond the borders of $C$, then the respective part of
the window is filled with the border symbol $\$$.
If $S$ notes that any one cell in step $T$ is not accepting, then it immediately
halts and rejects; otherwise it reads the entire input and eventually accepts.

\begin{figure}
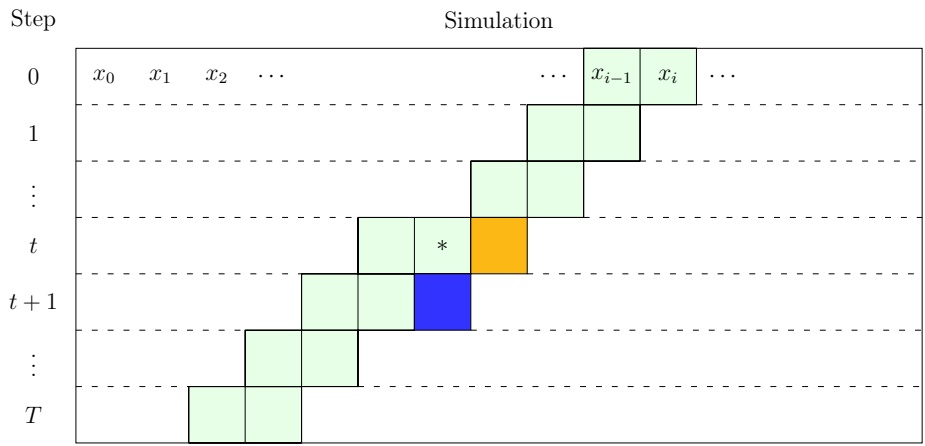

  \centering
  \includestandalone[scale=.75]{figures/sim_str_alg}
  \caption{Simulation of $C$ by the sliding-window algorithm $S$.
    In the picture, $S$ has last read the input symbol $x_i$ and is now
    determining the state of the cell marked with an asterisk in time step
    $t+1$ (depicted in blue).
    The light green cells are the ones maintained by $S$ in its
    \textsf{stateLeft}
    and
    \textsf{stateCenter}
    arrays, while the orange cell is the one corresponding to
    \textsf{stateRight}.
  }
  \label{fig_sim_str_alg}
\end{figure}
\begin{algorithm}[t]
  \For{$t \gets 0, \dots, T-1$}{
    $\StateLeft[t] \gets \$$\;
    $\StateCenter[t] \gets \$$\;
  }
  $i \gets 0$\;
  \lnl{line_outer_while}%
  \While{$i < n+T$}{
    \uIf{$i < n$}{
      $\StateRight \gets x_i$\;
    }
    \Else{
      $\StateRight \gets \$$\;
    }
    $t \gets 0$\;
    \lnl{line_inner_while}%
    \While{$t < T$}{
      \lnl{line_if_acc}%
      sample $b \in \binalph$ from randomness source\;
      $\NewState \gets
        \delta_b(\StateLeft[t], \StateCenter[t], \StateRight)$\;
      $\StateLeft[t] \gets \StateCenter[t]$\;
      $\StateCenter[t] \gets \StateRight$\;
      $\StateRight \gets \NewState$\;
      $t \gets t+1$\;
    }
    \If{$\StateRight \neq \$$ and $\StateRight$ is not accepting}{
      \Reject;
    }
    $i \gets i+1$\;
  }
  \Accept\;
  \caption{Randomized sliding-window algorithm $S$}
  \label{alg_rand_stream}
\end{algorithm}

\begin{proof}
  We prove that \cref{alg_rand_stream}, hereafter referred to as $S$, satisfies
  the properties in the claim.
  For simplicity of presentation, in this proof the input $x = x_0 \cdots
  x_{n-1}$ as well as the cells of the PACA $C$ are indexed starting with zero.

  We first address the correctness of $S$.
  Fix a random input $r \in \binalph^{(n+T)T}$ to $S$.
  It shall be convenient to recast $r$ as a matrix $R \in \binalph^{T\times
    (n+T)}$ where $R(i,j) = r(i + jT)$; that is, the $j$-th column of $R$
  corresponds to the randomness used by $S$ in the $j$-th run of the outer
  \KwSty{while} loop, and the $i$-th row of said column equals the value of $b$
  in the $i$-th run of the inner \KwSty{while} loop.
  In addition, let $R' \in \binalph^{T \times n}$ be the matrix with $R'(i,j) =
  R(i,i+j+1)$.
  As we shall see, $R'$ corresponds exactly to the random input to $C$ in its
  simulation by $S$ and the bits of $R$ that do not have a corresponding entry
  in $R'$ do not affect the outcome of $S$.

  By $D = D_{R'}\colon \{0,\dots,T\} \times \Z \to Q_\$$ we denote the
  time-space diagram of $C$ when using coin tosses from $R'$ where $\$$ is used
  to fill states \enquote{beyond the borders} of $C$; that is, for $i \in [n]$,
  $D(t,i)$ equals the state of the $i$-th cell in the $t$-th step of $C$ on
  input $x$ when using the coin tosses given by $R'$, or $D(t,i) = \$$
  otherwise.
  We shall show the following invariants are satisfied by the outer loop
  (\cref{line_outer_while}):
  \begin{enumerate}
    \item[$I_1$:] For every $t \in \{0,\dots,T-1\}$, $\StateCenter[t] =
    D(t,i-t-1)$ and $\StateLeft[t] = D(t,i-t-2)$.
    \item[$I_2$:] $S$ has not rejected if and only if, for every $i' < i$,
    $D(T,i'-T)$ is either accepting or equal to $\$$.
  \end{enumerate}
  In particular, $I_2$ directly implies the correctness of $S$ (if we also have
  that $S$ depends only on the random bits of $R$ that have a corresponding
  entry in $R'$, which is indeed the case).

  We prove $I_1$ and $I_2$ by showing the inner loop (\cref{line_inner_while})
  satisfies an invariant $I_3$ of its own, namely that $\StateRight = D(t,i-t)$.
  Concretely, if we have that $I_1$ and $I_2$ hold prior to the $i$-th execution
  of the outer loop and $I_3$ holds at the end of the inner loop, then $I_1$ and
  $I_2$ are conserved as follows:
  The invariant $I_1$ holds since the instructions executed ensure that
  $\StateLeft[t] = D(t,i-t-1)$ and $\StateCenter[t] = D(t,i-t)$ hold for every
  $t$ at the end of the inner loop and $i$ is incremented at the end.
  Similarly, since $t = T$ holds after the inner loop is done, we have then
  $\StateRight = D(t,i-T)$ after the loop, implying $I_2$.

  To show $I_3$ is an invariant, suppose $I_1$ holds for some $i$.
  Clearly, $\StateRight = D(0,i)$ holds prior to its first execution of the loop
  since then $\StateRight = x(i) = D(0,i)$ if $i < n$, or $\StateRight = \$ =
  D(0,i)$ otherwise.
  Subsequently, in the $t$-th execution of the loop, if $t+1 \le i \le n+t$,
  then $b$ is set to $R(t,i) = R'(t,i-t-1)$ and \StateRight to
  \begin{align*}
    \delta_b(\StateLeft[t], \StateCenter[t], \StateRight)
    &= \delta_b(D(t,i-t-2), D(t,i-t-1), D(t,i-t)) \\
    &= D(t+1,i-t-1).
  \end{align*}
  If $i < t+1$, then $i-t-1 < 0$, and so $\StateLeft[t] = \StateCenter[t] = \$$,
  which means $\StateRight$ is set to $\$$ regardless of the value of $R(t,i)$;
  the same is the case for $i > n+t$ since then $i-t-1 > n-1$.
  Hence, the operation of $S$ depends only on $R(t,i)$ if $t+1 \le i \le n+t$,
  which is precisely the case when $R'(t,i-t-1) = R(t,i)$.
  Since $t$ is incremented at the end of the loop, it follows that $I_3$ is an
  invariant, as desired.

  The space complexity of $S$ evident since it is dominated by the arrays
  \StateLeft and \StateRight, which both contain $T$ many elements of $Q$.
  Finally, regarding the sliding window size of $S$, we can argue based on the
  invariants above and the properties of the time-space diagram $D$.
  Clearly, an entry $R'(t,i)$ only affects the states $D(t+j,i+k)$ for $j \in
  \{1,\dots,T-t\}$ and $k \in \Z$ with $\abs{k} < j$ (since, for every $t$ and
  every $i$, $R'(t,i)$ only affects $D(t+1,i)$ and $D(t,i)$ only $D(t+1,i-1)$,
  $D(t+1,i)$, and $D(t+1,i+1)$).
  Hence, after having read $R'(t,i)$, if $S$ reads another $2T^2$ random bits,
  then its state will be independent of the entry $R'(t,i)$.
  This means that $S$ has a window size of $O(T^2)$, as desired.
\end{proof}

\subsection{Derandomizing Sublinear-Time PACAs with Small Space}

In this final section, we recall and prove our low-space derandomization results
for the languages accepted by sublinear-time PACA (i.e.,
\cref{thm_prg_gen_paca,thm_prg_gen_paca_twosided}).

\subsubsection{One-Sided Error PACAs}

\restatethmPRGGenPACA*

The proof is more or less straightforward:
Given a $T$-time one-sided $\eps$-error PACA $C$ as in the theorem's statement,
we use the construction from \cref{prop_rand_str_alg} together with the HSG from
\cref{cor_thm_hsg_swbp}.
Since the PACA may accept in any time step $t < T$, we need to simulate the
algorithm $S_t$ from \cref{prop_rand_str_alg} for every such step $t$; however,
by an averaging argument, there is at least one $t$ so that $S_t$ accepts with
probability at least $\eps / T$.
Hence, it suffices to use an $(\eps / T)$-HSG.

\begin{proof}
  We describe an algorithm $A$ for $L(C)$ with the desired space complexity.
  Let $G$ be the PRG from \cref{cor_thm_prg_gen_swbp} that $\eps_G$-fools SWBPs
  of width $2^{O(T)}$, length $m = (n+T)T$, and window size $O(T^2)$, where
  $\eps_G = \eps/T$.
  In addition, for $t < T$, let $S_t$ be the sliding-window algorithm of
  \cref{prop_rand_str_alg}.
  Our algorithm $A$ operates as follows:
  \begin{enumerate}
    \item For every $t < T$, enumerate over every possible seed to $G$ and
    simulate $S_t$ on its output.
    \item If $S_t$ accepts, accept immediately.
    \item If none of the $S_t$ have accepted, reject.
  \end{enumerate}
  Observe the space complexity of $A$ is dominated by the seed length of $G$.
  By our setting of parameters, this is
  \[
    d = O\left(
      \frac{(T + \log(1/\eps)) \log T}{\max\{1, \log T - \log\log(n/\eps)\}}
      + \log(1/\eps)
      + \log n
    \right)
  \]
  (and certainly greater than $O(T)$, which is the space required to simulate
  each run of $S_t$).

  For the correctness of $A$, fix some input $x$ to $C$.
  If $x \notin L(C)$, then the probability that $C$ accepts $x$ is zero, which
  means $\Pr[S_t(U_m) = 1] = 0$ for every $t$, thus implying $A$ never accepts.
  Otherwise, if $x \in L(C)$, by an averaging argument, there is $t$ such that
  $\Pr[S_t(U_m) = 1] \ge \eps/T = \eps_G$.
  Since $G$ is an $\eps_G$-HSG, there is a seed $s \in \binalph^d$ so that
  $S_t(G(s))$ accepts, and thus $A$ also accepts $x$.
\end{proof}

\subsubsection{Two-Sided Error PACAs}
\label{sec_paca_darand_twosided}

For two-sided error PACAs, ideally we would like to simply \enquote{replace} the
HSG with a PRG in the proof of \cref{thm_prg_gen_paca}.
After all, the PRG would essentially allow us to estimate the probability that
the PACA accepts in every single time step $t$.
Unfortunately, this fails even if we do have a means of \emph{exactly}
determining these probabilities.

Let us explain why by means of an example.
For concreteness, let the input alphabet be $\Sigma = \binalph$.
Consider the PACAs $C_1$ and $C_2$ that operate as follows:
Every cell except for the first one is unconditionally in an accepting state.
During the first 8 steps, the leftmost cell collects random bits to form a
random string $r \in \binalph^8$.
Following this, in the next steps, the cell behaves as follows:
\begin{itemize}
  \item In $C_1$, the cell turns accepting if and only if the first two bits of
  $r$ are equal to zero (i.e., $r_1 = r_2 = 0$).
  The cell remains accepting for exactly four steps, then changes into a
  non-accepting state and maintains it.
  \item In $C_2$, in each of the subsequent four steps $j \in [4]$, the cell
  turns accepting in step $j$ if and only if $r_{2j-1} = r_{2j} = 0$.
  After this, the cell assumes a non-accepting state and maintains it.
\end{itemize}
It is easy to see that both $C_1$ and $C_2$ are two-sided error PACAs that
accept completely different languages:
The probability that $C_1$ accepts any input $w \in \binalph^\ast$ is $1/4$,
which means $L(C_1) = \varnothing$; conversely, $C_2$ accepts any input with
probability $1 - (3/4)^4 > 2/3$, and so $L(C_2) = \binalph^\ast$.
Nevertheless, the probability that the PACA accepts at any \emph{fixed} time
step $t$ (considered on its own) is \emph{the same} in either $C_i$.

This means that a new strategy is called for.
One possible solution would be to adapt the algorithm $S$ from
\cref{prop_rand_str_alg} so that it checks multiple steps of the PACA for
acceptance (e.g., maintain a Boolean variable $\mathtt{accept}_i$ for every step
$i$ initially set to true and set it to false if any non-accepting state in step
$i$ is seen).
It is unclear how to do so without forgoing the sliding-window property,
however.
Nevertheless, this strategy \emph{can} be applied (without giving up on the
sliding-window property) if instead of a single time step $t$ we have $S$ verify
that the PACA is accepting in every step that is in a \emph{subset} $t \subseteq
[T]$ (that is provided to $S$ non-uniformly).
Indeed, this is a straightforward adaptation and it is not hard to see that the
resulting algorithm $S_t$ is such that
\[
  \Pr[S_t(x) = 1] = \Pr[\forall i \in t: \text{on input $x$, $C$ is accepting in
  time step $i$} ]
\]
where the second probability is conditioned on the coin tosses of $C$.
(Under this aspect, the original algorithm works in the special case where
$\abs{t} = 1$.)
Then we can resort to the inclusion-exclusion principle to rephrase the
probability of $C$ accepting as a sum over the probabilities of the $S_t$
accepting (for every possible non-empty choice of $t$).
Using a PRG with adequate parameters, we can then estimate every term in this
sum and decide whether $C$ accepts or not based on whether this estimate exceeds
or is below $1/2$.

\restatethmPRGGenPACATwosided*

\begin{proof}
  We construct an algorithm $A$ for $L(C)$ with the desired space complexity.
  Let $G$ be the PRG from item 2 of \cref{cor_thm_prg_gen_swbp} that
  $\eps_G$-fools SWBPs of width $2^{O(T)}$, length $m = (n+T)T$, and window size
  $O(T^2)$, where $\eps_G = \eps/2^T$.
  (These are the same parameters as in the proof of \cref{thm_prg_gen_paca}
  except for $\eps_G$, where the additional factor is now exponentially small in
  $T$.)
  For $\varnothing \neq t \subseteq [T]$, let $S_t$ be the generalization of the
  algorithm of \cref{prop_rand_str_alg} as previously described.
  The algorithm $A$ proceeds as follows:
  \begin{enumerate}
    \item For every $\varnothing \neq t \subseteq [T]$, enumerate over every
    possible seed to $G$ and simulate $S_t$ on its output to obtain an estimate
    $\eta_t = \Pr[S_t(G(U_d)) = 1]$ that is $\eps_G$-close to $\Pr[S_t(U_m) =
    1]$.
    \item Compute
    \[
      \eta = \sum_{\stackrel{t \subseteq [T]}{\abs{t} = 1}} \eta_t
      - \sum_{\stackrel{t \subseteq [T]}{\abs{t} = 2}} \eta_t
      + \cdots 
      + (-1)^{T+1} \eta_{[T]}
    \]
    and accept if $\eta > 1/2$; otherwise reject.
  \end{enumerate}
  By our setting of parameters, $G$ has seed length
  \[
    d = O((d' + \log n) \log(d' + \log n)),
  \]
  where
  \[
    d' = O((T + \log(1/\eps)) \log T).
  \]
  As in the proof of \cref{thm_prg_gen_paca}, the space complexity of $A$ is
  dominated by $d$.

  For the correctness, let $Z_t(R)$ denote the event that $C$ accepts $x$ in
  every one of the steps in $\varnothing \neq t \subseteq [T]$ when using coin
  tosses from $R \in \binalph^{T \times n}$.
  Note that, by the inclusion-exclusion principle, we have
  \begin{align*}
    \ifconf\MoveEqLeft\fi
    \Pr[C(x,U_{T \times n}) = 1]
    \ifconf\\\fi
    &= \Pr[\exists t \in [T]: Z_{\{t\}}(U_{T \times n})] \\
    &= \sum_{\stackrel{t \subseteq [T]}{\abs{t} = 1}} \Pr[Z_t(U_{T \times n})]
    - \sum_{\stackrel{t \subseteq [T]}{\abs{t} = 2}} \Pr[Z_t(U_{T \times n})]
    + \cdots 
    + (-1)^{T+1} \Pr[Z_{[T]}(U_{T \times n})].
  \end{align*}
  Since there are strictly less than $2^T$ terms on the right-hand side in total
  and replacing $U_{T \times n}$ with $G(U_d)$ causes an additive deviation of
  at most $\eps_G$ in each term, it follows that
  \[
    \abs*{\eta - \Pr[C(x,U_{T \times n} = 1)]}
    \le \eps_G 2^T
    = \eps. \qedhere
  \]
\end{proof}


\ifconf\else
\paragraph{Acknowledgments.}
\myack
\fi

\printbibliography

@book{delorme99_cellular_book,
	doi = {10.1007/978-94-015-9153-9},
	editor = {Delorme, Marianne and Mazoyer, Jacques},
	isbn = {978-0-7923-5493-2},
	location = {Dordrecht},
	number = {460},
	publisher = {Springer Netherlands},
	series = {Mathematics and Its Applications},
	subtitle = {A Parallel Model},
	title = {Cellular Automata},
	year = {1999}
}

@inproceedings{modanese23_sublinear-time_stacs,
	author = {Modanese, Augusto},
	booktitle = {40th International Symposium on Theoretical Aspects of Computer Science, STACS 2023, March 7–9, 2023, Hamburg, Germany},
	publisher = {Schloss Dagstuhl - Leibniz-Zentrum für Informatik},
	pubstate = {forthcoming},
	series = {LIPIcs},
	title = {Sublinear-Time Probabilistic Cellular Automata},
	year = {2023}
}

@article{sommerhalder83_parallel_ai,
	author = {Sommerhalder, Rudolph and van Westrhenen, S. Christian},
	bibsource = {dblp computer science bibliography, https://dblp.org},
	doi = {10.1007/BF00290736},
	file = {sommerhalder83_parallel_ai.pdf},
	journal = {Acta Inf.},
	pages = {397–407},
	timestamp = {Sun, 28 May 2017 13:22:48 +0200},
	title = {Parallel Language Recognition in Constant Time by Cellular Automata},
	url = {https://doi.org/10.1007/BF00290736; https://dblp.org/rec/bib/journals/acta/SommerhalderW83},
	volume = {19},
	year = {1983}
}

@article{ibarra85_fast_tcs,
	author = {Ibarra, Oscar H. and Palis, Michael A. and Kim, Sam M.},
	bibsource = {dblp computer science bibliography, https://dblp.org},
	doi = {10.1016/0304-3975(85)90073-8},
	file = {ibarra85_fast_tcs.pdf},
	journal = {Theor. Comput. Sci.},
	pages = {231–246},
	timestamp = {Sun, 28 May 2017 13:20:02 +0200},
	title = {Fast Parallel Language Recognition by Cellular Automata},
	url = {https://doi.org/10.1016/0304-3975(85)90073-8; https://dblp.org/rec/bib/journals/tcs/IbarraPK85},
	volume = {41},
	year = {1985}
}

@article{kim90_characterization_pd,
	acmid = {96264},
	address = {Amsterdam, The Netherlands, The Netherlands},
	author = {Kim, Sam and McCloskey, Robert},
	doi = {10.1016/0167-2789(90)90198-X},
	file = {kim90_characterization_pd.pdf},
	issn = {0167-2789},
	issue_date = {Sep. 1990},
	journal = {Phys. D},
	number = {1-3},
	numpages = {16},
	pages = {404–419},
	publisher = {Elsevier Science Publishers B. V.},
	title = {A Characterization of Constant-Time Cellular Automata Computation},
	url = {http://dx.doi.org/10.1016/0167-2789(90)90198-X},
	volume = {45},
	year = {1990}
}

@book{arora09_computational_book,
	author = {Arora, Sanjeev and Barak, Boaz},
	bibsource = {dblp computer science bibliography, https://dblp.org},
	isbn = {978-0-521-42426-4},
	location = {Cambridge},
	publisher = {Cambridge University Press},
	timestamp = {Sun, 07 Dec 2014 18:49:34 +0100},
	title = {Computational Complexity: {A} Modern Approach},
	url = {http://www.cambridge.org/catalogue/catalogue.asp?isbn=9780521424264; https://dblp.org/rec/books/daglib/0023084.bib},
	year = {2009}
}

@book{goldreich08_computational_book,
	author = {Goldreich, Oded},
	isbn = {978-0-521-88473-0},
	location = {Cambridge},
	publisher = {Cambridge University Press},
	title = {Computational Complexity: {A} Conceptional Perspective},
	year = {2008}
}

@inproceedings{forbes18_pseudorandom_focs,
	author = {Forbes, Michael A. and Kelley, Zander},
	bibsource = {dblp computer science bibliography, https://dblp.org},
	biburl = {https://dblp.org/rec/conf/focs/ForbesK18.bib},
	booktitle = {59th {IEEE} Annual Symposium on Foundations of Computer Science, {FOCS} 2018, Paris, France, October 7-9, 2018},
	doi = {10.1109/FOCS.2018.00093},
	editor = {Thorup, Mikkel},
	pages = {946–955},
	publisher = {{IEEE} Computer Society},
	timestamp = {Wed, 16 Oct 2019 14:14:54 +0200},
	title = {Pseudorandom Generators for Read-Once Branching Programs, in Any Order},
	url = {https://doi.org/10.1109/FOCS.2018.00093},
	year = {2018}
}

@article{rabin63_probabilistic_ic,
	author = {Rabin, Michael O.},
	bibsource = {dblp computer science bibliography, https://dblp.org},
	doi = {10.1016/S0019-9958(63)90290-0},
	journal = {Inf. Control.},
	number = {3},
	pages = {230–245},
	timestamp = {Fri, 12 Feb 2021 22:16:14 +0100},
	title = {Probabilistic Automata},
	url = {https://doi.org/10.1016/S0019-9958(63)90290-0; https://dblp.org/rec/journals/iandc/Rabin63.bib},
	volume = {6},
	year = {1963}
}

@article{modanese21_sublinear-time_ijfcs,
	author = {Modanese, Augusto},
	doi = {10.1142/S0129054121420053},
	journal = {Int. J. Found. Comput. Sci.},
	number = {6},
	pages = {713–731},
	title = {Sublinear-Time Language Recognition and Decision by One-Dimensional Cellular Automata},
	volume = {32},
	year = {2021}
}

@article{vadhan12_pseudorandomness_book,
	author = {Vadhan, Salil P.},
	bibsource = {dblp computer science bibliography, https://dblp.org},
	doi = {10.1561/0400000010},
	journal = {Found. Trends Theor. Comput. Sci.},
	number = {1-3},
	pages = {1–336},
	timestamp = {Thu, 20 Aug 2020 22:50:53 +0200},
	title = {Pseudorandomness},
	url = {https://doi.org/10.1561/0400000010; https://dblp.org/rec/journals/fttcs/Vadhan12.bib},
	volume = {7},
	year = {2012}
}

@article{hoza20_simple_siamjc,
	author = {Hoza, William M. and Zuckerman, David},
	bibsource = {dblp computer science bibliography, https://dblp.org},
	doi = {10.1137/19M1268707},
	journal = {{SIAM} J. Comput.},
	number = {4},
	pages = {811–820},
	timestamp = {Thu, 16 Sep 2021 18:04:22 +0200},
	title = {Simple Optimal Hitting Sets for Small-Success {RL}},
	url = {https://doi.org/10.1137/19M1268707; https://dblp.org/rec/journals/siamcomp/HozaZ20.bib},
	volume = {49},
	year = {2020}
}

@article{nisan92_pseudorandom_comb,
	author = {Nisan, Noam},
	bibsource = {dblp computer science bibliography, https://dblp.org},
	biburl = {https://dblp.org/rec/journals/combinatorica/Nisan92.bib},
	doi = {10.1007/BF01305237},
	journal = {Comb.},
	number = {4},
	pages = {449–461},
	timestamp = {Wed, 22 Jul 2020 22:02:14 +0200},
	title = {Pseudorandom generators for space-bounded computation},
	url = {https://doi.org/10.1007/BF01305237},
	volume = {12},
	year = {1992}
}

@inproceedings{modanese21_lower_csr,
	author = {Modanese, Augusto},
	bibsource = {dblp computer science bibliography, https://dblp.org},
	booktitle = {Computer Science - Theory and Applications - 16th International Computer Science Symposium in Russia, {CSR} 2021, Sochi, Russia, June 28 - July 2, 2021, Proceedings},
	doi = {10.1007/978-3-030-79416-3\_18},
	editor = {Santhanam, Rahul and Musatov, Daniil},
	pages = {296–320},
	publisher = {Springer},
	series = {Lecture Notes in Computer Science},
	timestamp = {Thu, 17 Jun 2021 14:45:09 +0200},
	title = {Lower Bounds and Hardness Magnification for Sublinear-Time Shrinking Cellular Automata},
	url = {https://doi.org/10.1007/978-3-030-79416-3\_18; https://dblp.org/rec/conf/csr/Modanese21.bib},
	volume = {12730},
	year = {2021}
}

@article{arrighi13_stochastic_fi,
	author = {Arrighi, Pablo and Schabanel, Nicolas and Theyssier, Guillaume},
	bibsource = {dblp computer science bibliography, https://dblp.org},
	doi = {10.3233/FI-2013-875},
	journal = {Fundam. Informaticae},
	number = {2-3},
	pages = {121–156},
	timestamp = {Fri, 18 Sep 2020 11:17:00 +0200},
	title = {Stochastic Cellular Automata: Correlations, Decidability and Simulations},
	url = {https://doi.org/10.3233/FI-2013-875; https://dblp.org/rec/journals/fuin/ArrighiST13.bib},
	volume = {126},
	year = {2013}
}

@article{mairesse14_around_tcs,
	author = {Mairesse, Jean and Marcovici, Irène},
	bibsource = {dblp computer science bibliography, https://dblp.org},
	doi = {10.1016/j.tcs.2014.09.009},
	journal = {Theor. Comput. Sci.},
	pages = {42–72},
	timestamp = {Wed, 17 Feb 2021 21:58:32 +0100},
	title = {Around probabilistic cellular automata},
	url = {https://doi.org/10.1016/j.tcs.2014.09.009; https://dblp.org/rec/journals/tcs/MairesseM14.bib},
	volume = {559},
	year = {2014}
}

@article{kane08_revisiting_arxiv,
	author = {Kane, Daniel M. and Nelson, Jelani and Woodruff, David P.},
	bibsource = {dblp computer science bibliography, https://dblp.org},
	eprint = {0811.3648},
	eprinttype = {arXiv},
	journal = {CoRR},
	timestamp = {Mon, 13 Aug 2018 16:48:57 +0200},
	title = {Revisiting Norm Estimation in Data Streams},
	url = {http://arxiv.org/abs/0811.3648; https://dblp.org/rec/journals/corr/abs-0811-3648.bib},
	volume = {abs/0811.3648},
	year = {2008}
}

@inproceedings{armoni98_derandomization_random,
	author = {Armoni, Roy},
	bibsource = {dblp computer science bibliography, https://dblp.org},
	biburl = {https://dblp.org/rec/conf/random/Armoni98.bib},
	booktitle = {Randomization and Approximation Techniques in Computer Science, Second International Workshop, RANDOM'98, Barcelona, Spain, October 8-10, 1998, Proceedings},
	doi = {10.1007/3-540-49543-6\_5},
	editor = {Luby, Michael and Rolim, José D. P. and Serna, Maria J.},
	pages = {47–59},
	publisher = {Springer},
	series = {Lecture Notes in Computer Science},
	timestamp = {Tue, 14 May 2019 10:00:48 +0200},
	title = {On the Derandomization of Space-Bounded Computations},
	url = {https://doi.org/10.1007/3-540-49543-6\_5},
	volume = {1518},
	year = {1998}
}

@inproceedings{bogdanov22_hitting_ccc,
	author = {Bogdanov, Andrej and Hoza, William M. and Prakriya, Gautam and Pyne, Edward},
	bibsource = {dblp computer science bibliography, https://dblp.org},
	booktitle = {37th Computational Complexity Conference, {CCC} 2022, July 20-23, 2022, Philadelphia, PA, {USA}},
	doi = {10.4230/LIPIcs.CCC.2022.3},
	editor = {Lovett, Shachar},
	pages = {3:1–3:22},
	publisher = {Schloss Dagstuhl - Leibniz-Zentrum für Informatik},
	series = {LIPIcs},
	timestamp = {Tue, 12 Jul 2022 17:36:52 +0200},
	title = {Hitting Sets for Regular Branching Programs},
	url = {https://doi.org/10.4230/LIPIcs.CCC.2022.3; https://dblp.org/rec/conf/coco/BogdanovHPP22.bib},
	volume = {234},
	year = {2022}
}

@article{goldreich97_tiny_rsa,
	author = {Goldreich, Oded and Wigderson, Avi},
	bibsource = {dblp computer science bibliography, https://dblp.org},
	biburl = {https://dblp.org/rec/journals/rsa/GoldreichW97.bib},
	doi = {10.1002/(SICI)1098-2418(199712)11:4\<315::AID-RSA3\>3.0.CO;2-1},
	journal = {Random Struct. Algorithms},
	number = {4},
	pages = {315–343},
	timestamp = {Fri, 30 Nov 2018 13:20:11 +0100},
	title = {Tiny families of functions with random properties: {A} quality-size trade-off for hashing},
	url = {https://doi.org/10.1002/(SICI)1098-2418(199712)11:4\<315::AID-RSA3\>3.0.CO 2-1},
	volume = {11},
	year = {1997}
}

@article{pacut21_locality_arxiv,
	author = {Pacut, Maciej and Parham, Mahmoud and Rybicki, Joel and Schmid, Stefan and Suomela, Jukka and Tereshchenko, Aleksandr},
	bibsource = {dblp computer science bibliography, https://dblp.org},
	eprint = {2102.09413},
	eprinttype = {arXiv},
	journal = {CoRR},
	timestamp = {Wed, 24 Feb 2021 15:42:45 +0100},
	title = {Locality in Online Algorithms},
	url = {https://arxiv.org/abs/2102.09413; https://dblp.org/rec/journals/corr/abs-2102-09413.bib},
	volume = {abs/2102.09413},
	year = {2021}
}

@inproceedings{hoza21_pseudorandom_itcs,
	author = {Hoza, William M. and Pyne, Edward and Vadhan, Salil P.},
	bibsource = {dblp computer science bibliography, https://dblp.org},
	booktitle = {12th Innovations in Theoretical Computer Science Conference, {ITCS} 2021, January 6-8, 2021, Virtual Conference},
	doi = {10.4230/LIPIcs.ITCS.2021.7},
	editor = {Lee, James R.},
	pages = {7:1–7:20},
	publisher = {Schloss Dagstuhl - Leibniz-Zentrum für Informatik},
	series = {LIPIcs},
	timestamp = {Thu, 11 Feb 2021 11:55:05 +0100},
	title = {Pseudorandom Generators for Unbounded-Width Permutation Branching Programs},
	url = {https://doi.org/10.4230/LIPIcs.ITCS.2021.7; https://dblp.org/rec/conf/innovations/HozaPV21.bib},
	volume = {185},
	year = {2021}
}

@inproceedings{garivier11_upper-confidence_alt,
	author = {Garivier, Aurélien and Moulines, Eric},
	bibsource = {dblp computer science bibliography, https://dblp.org},
	biburl = {https://dblp.org/rec/conf/alt/GarivierM11.bib},
	booktitle = {Algorithmic Learning Theory - 22nd International Conference, {ALT} 2011, Espoo, Finland, October 5-7, 2011. Proceedings},
	doi = {10.1007/978-3-642-24412-4\_16},
	editor = {Kivinen, Jyrki and Szepesvári, Csaba and Ukkonen, Esko and Zeugmann, Thomas},
	pages = {174–188},
	publisher = {Springer},
	series = {Lecture Notes in Computer Science},
	timestamp = {Tue, 14 May 2019 10:00:51 +0200},
	title = {On Upper-Confidence Bound Policies for Switching Bandit Problems},
	url = {https://doi.org/10.1007/978-3-642-24412-4\_16},
	volume = {6925},
	year = {2011}
}

@article{datar02_maintaining_siamjc,
	author = {Datar, Mayur and Gionis, Aristides and Indyk, Piotr and Motwani, Rajeev},
	bibsource = {dblp computer science bibliography, https://dblp.org},
	doi = {10.1137/S0097539701398363},
	journal = {{SIAM} J. Comput.},
	number = {6},
	pages = {1794–1813},
	timestamp = {Thu, 02 Jan 2020 12:13:44 +0100},
	title = {Maintaining Stream Statistics over Sliding Windows},
	url = {https://doi.org/10.1137/S0097539701398363; https://dblp.org/rec/journals/siamcomp/DatarGIM02.bib},
	volume = {31},
	year = {2002}
}

@article{cormode13_continuous_sigmodr,
	author = {Cormode, Graham},
	bibsource = {dblp computer science bibliography, https://dblp.org},
	doi = {10.1145/2481528.2481530},
	journal = {{SIGMOD} Rec.},
	number = {1},
	pages = {5–14},
	timestamp = {Fri, 06 Mar 2020 21:56:26 +0100},
	title = {The continuous distributed monitoring model},
	url = {https://doi.org/10.1145/2481528.2481530; https://dblp.org/rec/journals/sigmod/Cormode13.bib},
	volume = {42},
	year = {2013}
}

@incollection{braverman16_sliding_book,
	author = {Braverman, Vladimir},
	bibsource = {dblp computer science bibliography, https://dblp.org},
	booktitle = {Encyclopedia of Algorithms},
	doi = {10.1007/978-1-4939-2864-4\_797},
	pages = {2006–2011},
	timestamp = {Wed, 12 Jul 2017 09:11:56 +0200},
	title = {Sliding Window Algorithms},
	url = {https://doi.org/10.1007/978-1-4939-2864-4\_797; https://dblp.org/rec/reference/algo/Braverman16.bib},
	year = {2016}
}

@inproceedings{babcock02_models_pods,
	author = {Babcock, Brian and Babu, Shivnath and Datar, Mayur and Motwani, Rajeev and Widom, Jennifer},
	bibsource = {dblp computer science bibliography, https://dblp.org},
	booktitle = {Proceedings of the Twenty-first {ACM} {SIGACT-SIGMOD-SIGART} Symposium on Principles of Database Systems, June 3-5, Madison, Wisconsin, {USA}},
	doi = {10.1145/543613.543615},
	editor = {Popa, Lucian and Abiteboul, Serge and Kolaitis, Phokion G.},
	pages = {1–16},
	publisher = {{ACM}},
	timestamp = {Thu, 02 Jan 2020 12:13:44 +0100},
	title = {Models and Issues in Data Stream Systems},
	url = {https://doi.org/10.1145/543613.543615; https://dblp.org/rec/conf/pods/BabcockBDMW02.bib},
	year = {2002}
}

@inproceedings{ganardi18_sliding_lata,
	author = {Ganardi, Moses and Hucke, Danny and Lohrey, Markus},
	bibsource = {dblp computer science bibliography, https://dblp.org},
	booktitle = {Language and Automata Theory and Applications - 12th International Conference, {LATA} 2018, Ramat Gan, Israel, April 9-11, 2018, Proceedings},
	doi = {10.1007/978-3-319-77313-1\_2},
	editor = {Klein, Shmuel Tomi and Martín{-}Vide, Carlos and Shapira, Dana},
	pages = {26–35},
	publisher = {Springer},
	series = {Lecture Notes in Computer Science},
	timestamp = {Tue, 14 May 2019 10:00:41 +0200},
	title = {Sliding Window Algorithms for Regular Languages},
	url = {https://doi.org/10.1007/978-3-319-77313-1\_2; https://dblp.org/rec/conf/lata/GanardiHL18.bib},
	volume = {10792},
	year = {2018}
}

@inproceedings{ganardi18_randomized_icalp,
	author = {Ganardi, Moses and Hucke, Danny and Lohrey, Markus},
	bibsource = {dblp computer science bibliography, https://dblp.org},
	booktitle = {45th International Colloquium on Automata, Languages, and Programming, {ICALP} 2018, July 9-13, 2018, Prague, Czech Republic},
	doi = {10.4230/LIPIcs.ICALP.2018.127},
	editor = {Chatzigiannakis, Ioannis and Kaklamanis, Christos and Marx, Dániel and Sannella, Donald},
	pages = {127:1–127:13},
	publisher = {Schloss Dagstuhl - Leibniz-Zentrum für Informatik},
	series = {LIPIcs},
	timestamp = {Tue, 11 Feb 2020 15:52:14 +0100},
	title = {Randomized Sliding Window Algorithms for Regular Languages},
	url = {https://doi.org/10.4230/LIPIcs.ICALP.2018.127; https://dblp.org/rec/conf/icalp/GanardiHL18.bib},
	volume = {107},
	year = {2018}
}

@article{ganardi21_derandomization_tcs,
	author = {Ganardi, Moses and Hucke, Danny and Lohrey, Markus},
	bibsource = {dblp computer science bibliography, https://dblp.org},
	doi = {10.1007/s00224-020-10000-1},
	journal = {Theory Comput. Syst.},
	number = {3},
	pages = {1–18},
	timestamp = {Tue, 01 Jun 2021 08:35:03 +0200},
	title = {Derandomization for Sliding Window Algorithms with Strict Correctness$^\ast$},
	url = {https://doi.org/10.1007/s00224-020-10000-1; https://dblp.org/rec/journals/mst/GanardiHL21.bib},
	volume = {65},
	year = {2021}
}

@inproceedings{ganardi18_sliding_mfcs,
	author = {Ganardi, Moses and Jez, Artur and Lohrey, Markus},
	bibsource = {dblp computer science bibliography, https://dblp.org},
	booktitle = {43rd International Symposium on Mathematical Foundations of Computer Science, {MFCS} 2018, August 27-31, 2018, Liverpool, {UK}},
	doi = {10.4230/LIPIcs.MFCS.2018.15},
	editor = {Potapov, Igor and Spirakis, Paul G. and Worrell, James},
	pages = {15:1–15:15},
	publisher = {Schloss Dagstuhl - Leibniz-Zentrum für Informatik},
	series = {LIPIcs},
	timestamp = {Tue, 11 Feb 2020 15:52:14 +0100},
	title = {Sliding Windows over Context-Free Languages},
	url = {https://doi.org/10.4230/LIPIcs.MFCS.2018.15; https://dblp.org/rec/conf/mfcs/GanardiJL18.bib},
	volume = {117},
	year = {2018}
}

@article{gopalan20_concentration_toc,
	author = {Gopalan, Parikshit and Yehudayoff, Amir},
	bibsource = {dblp computer science bibliography, https://dblp.org},
	biburl = {https://dblp.org/rec/journals/toc/GopalanY20.bib},
	doi = {10.4086/toc.2020.v016a017},
	journal = {Theory Comput.},
	pages = {1–29},
	timestamp = {Wed, 10 Feb 2021 15:44:27 +0100},
	title = {Concentration for Limited Independence via Inequalities for the Elementary Symmetric Polynomials},
	url = {https://doi.org/10.4086/toc.2020.v016a017},
	volume = {16},
	year = {2020}
}

@inproceedings{impagliazzo94_pseudorandomness_stoc,
	author = {Impagliazzo, Russell and Nisan, Noam and Wigderson, Avi},
	bibsource = {dblp computer science bibliography, https://dblp.org},
	booktitle = {Proceedings of the Twenty-Sixth Annual {ACM} Symposium on Theory of Computing, 23-25 May 1994, Montréal, Québec, Canada},
	doi = {10.1145/195058.195190},
	editor = {Leighton, Frank Thomson and Goodrich, Michael T.},
	pages = {356–364},
	publisher = {{ACM}},
	timestamp = {Sun, 25 Oct 2020 22:32:30 +0100},
	title = {Pseudorandomness for network algorithms},
	url = {https://doi.org/10.1145/195058.195190; https://dblp.org/rec/conf/stoc/ImpagliazzoNW94.bib},
	year = {1994}
}

@article{saks99_BPHSpace_jcss,
	author = {Saks, Michael E. and Zhou, Shiyu},
	bibsource = {dblp computer science bibliography, https://dblp.org},
	doi = {10.1006/jcss.1998.1616},
	journal = {J. Comput. Syst. Sci.},
	number = {2},
	pages = {376–403},
	timestamp = {Tue, 16 Feb 2021 14:03:56 +0100},
	title = {$\BP_H\SPACE(S) \subseteq \DSPACE(S^{3/2})$},
	url = {https://doi.org/10.1006/jcss.1998.1616; https://dblp.org/rec/journals/jcss/SaksZ99.bib},
	volume = {58},
	year = {1999}
}

@article{linial97_efficient_comb,
	author = {Linial, Nathan and Luby, Michael and Saks, Michael E. and Zuckerman, David},
	bibsource = {dblp computer science bibliography, https://dblp.org},
	doi = {10.1007/BF01200907},
	journal = {Comb.},
	number = {2},
	pages = {215–234},
	timestamp = {Wed, 22 Jul 2020 22:02:32 +0200},
	title = {Efficient Construction of a Small Hitting Set for Combinatorial Rectangles in High Dimension},
	url = {https://doi.org/10.1007/BF01200907; https://dblp.org/rec/journals/combinatorica/LinialLSZ97.bib},
	volume = {17},
	year = {1997}
}

@article{lu02_improved_comb,
	author = {Lu, Chi{-}Jen},
	bibsource = {dblp computer science bibliography, https://dblp.org},
	doi = {10.1007/s004930200021},
	journal = {Comb.},
	number = {3},
	pages = {417–434},
	timestamp = {Wed, 22 Jul 2020 22:02:38 +0200},
	title = {Improved Pseudorandom Generators for Combinatorial Rectangles},
	url = {https://doi.org/10.1007/s004930200021; https://dblp.org/rec/journals/combinatorica/Lu02.bib},
	volume = {22},
	year = {2002}
}


\end{document}